\theoremstyle{thmstyleone}%
\newtheorem{theorem}{Theorem}
\newtheorem{proposition}[theorem]{Proposition}%
\newtheorem{lemma}{Lemma}
\theoremstyle{thmstyletwo}%
\theoremstyle{thmstylethree}%
\newcommand{\dd}{\,\mathrm{d}}
\newcommand{\tr}{\,\mathrm{tr}}
\newcommand{\mathleft}{\@fleqntrue\@mathmargin0pt}
\newcommand{\mathcenter}{\@fleqnfalse}
\begin{document}
\jyear{2022}
\title[Entropy fluctuation formulas of fermionic Gaussian states]{Entropy fluctuation formulas of fermionic Gaussian states}


\author{\fnm{Youyi} \sur{Huang}} 
\author{\fnm{Lu} \sur{Wei}} 

\affil{\center Department of Computer Science\\ Texas Tech University \\ \orgaddress{\city{Lubbock}, \postcode{79409}, \state{Texas}, \country{USA}}}

\abstract{We study the statistical behaviour of quantum entanglement in bipartite systems over fermionic Gaussian states as measured by von Neumann entropy. The formulas of average von Neumann entropy with and without particle number constrains have been recently obtained, whereas the main results of this work are the exact yet explicit formulas of variances for both cases. For the latter case of no particle number constrain, the results resolve a recent conjecture on the corresponding variance. Different than existing methods in computing variances over other generic state models, the key ingredient in proving the results of this work relies on a new simplification framework. The framework consists of a set of new tools in simplifying finite summations of what we refer to as dummy summation and re-summation techniques. As a byproduct, the proposed framework leads to various new transformation formulas of hypergeometric functions.}

\keywords{von Neumann entropy, fermionic Gaussian states, quantum entanglement, random matrix theory, orthogonal polynomials, special functions}


\maketitle

\section{Introduction and main results}\label{sec1}
Quantum entanglement is the most important feature in quantum mechanics. The understanding of the phenomenon of entanglement is crucial in realizing the revolutionary advances of quantum science. In the emerging field of quantum information processing, quantum entanglement is also the resource and medium that enable the underlying quantum technologies.

In this work, we study the statistical behaviour of entanglement over the fermionic Gaussian states. In the past decades, considerable effort has been devoted to investigating the degree of entanglement as measured by different entanglement entropies over the well-known Hilbert-Schmidt ensemble~\cite{HLW06, Page93,Foong94,Ruiz95,VPO16,Wei17,Wei20,HWC21,Lubkin78,Sommers04,Giraud07,MML02,Wei19T}. In particular, these studies focus on the statistical behaviour of entanglement entropies such as von Neumann entropy~\cite{HLW06, Page93,Foong94,Ruiz95,VPO16,Wei17,Wei20,HWC21}, quantum purity~\cite{Lubkin78,Sommers04,Giraud07}, and Tsallis entropy~\cite{MML02,Wei19T}. Driven by the recent breakthrough in probability theory on the Bures-Hall ensemble~\cite{Bortola09,Bortola10,Bortola14,FK16}, considerable progress has been made in understanding the von Neumann entropy~\cite{SK2019,Wei20BHA,Wei20BH,LW21} and quantum purity~\cite{Sommers04,Osipov10,Borot12,LW21} over the Bures-Hall ensemble. Similar investigations are now being carried out over the fermionic Gaussian ensemble, which is a generic state model relevant for different quantum information processing tasks~\cite{BHK21,BHKRV22,LRV20,OGMM19,ST21}. Very recently, the mean values of von Neumann entropy with and without particle number constrains over the fermionic Gaussian ensemble are obtained respectively in~\cite{BHK21} and~\cite{BHKRV22}. As an important step towards characterizing the statistical distribution of von Neumann entropy, we aim to derive the corresponding variances, which describe the fluctuation of the entropy around their mean values. The exact variance of von Neumann entropy over fermionic Gaussian states without particle number constrain has been conjectured in a previous work of the authors~\cite{HW22}. In the current work, we prove the conjecture as well as derive a variance formula for the case of a fixed particle number.


\subsection{Problem formulation}\label{sec2.1}
We introduce the formulation that leads to the fermionic Gaussian states with and without particle number constrains. A system of $N$ fermionic degrees of freedom can be decomposed into two subsystems $A$ and $B$ of the dimensions $m$ and $n$, respectively, with $m+n=N$. Without loss of generality, we assume $m\leq n$ . In the present work, we consider two scenarios of fermionic Gaussian states -- the fermionic Gaussian states with arbitrary number of particles and the fermionic Gaussian states with a fixed number of particles.
\subsubsection*{Case A: Arbitrary number of particles}
 A system of $N$ fermionic modes can be formulated in terms of a set of fermionic creation and annihilation operators $\hat{a}_{i}$ and $\hat{a}_{i}^{\dag}$, $i=1,\dots, N$. Since the modes are fermionic, these operators obey the canonical anti-commutation relation~\cite{OGMM19, BHKRV22},
\begin{equation}\label{facr}
\{\hat{a}_{i},\hat{a}_{j}^{\dag}\}=\delta_{ij}\mathbb{I}, \qquad \{\hat{a}_{i},\hat{a}_{j}\}=0=\{\hat{a}_{i}^{\dag},\hat{a}_{j}^{\dag}\},
\end{equation}
where $\{\hat{A},\hat{B}\}=\hat{A}\hat{B}+\hat{B}\hat{A}$ denotes the anti-commutation relation and $\mathbb{I}$ is an identity operator. Equivalently, one can also describe these fermionic modes in terms of the Majorana operators $\gamma_l$, $l=1,\dots, 2N$, and
\begin{equation}
\hat{\gamma}_{2i-1}=\frac{\hat{a}_{i}^\dag+\hat{a}_{i}}{\sqrt{2}},\qquad \hat{\gamma}_{2i}=\imath\frac{\hat{a}_{i}^\dag+\hat{a}_{i}}{\sqrt{2}}
\end{equation}
with $\imath=\sqrt{-1}$ being the imaginary unit. Note that the Majorana operators are Hermitian satisfying the anti-commutation relation
\begin{equation}\label{macr}
\{\hat{\gamma}_{l},\hat{\gamma}_{k}\}=\delta_{lk}\mathbb{I}.
\end{equation}
By collecting the Majorana operators into a $2N$ dimensional operator-valued column vector ${\gamma}=(\hat{\gamma}_1,\dots,\hat{\gamma}_{2N})^{\dag}$,
a fermionic Gaussian state is then written as the density operator of the form~\cite{ST21,BHKRV22}
\begin{equation}\label{Fgdo}
\rho(\gamma)=\frac{\mathrm{e}^{-\gamma^\dag Q\gamma}}{\tr(\mathrm{e}^{-\gamma^\dag Q\gamma})},
\end{equation}
where the coefficient matrix $Q$ is a $2N\times 2N$ imaginary anti-symmetric matrix as the consequence of the anti-communication relation~(\ref{macr}). There always exsits an orthogonal matrix $M$ that diagnoses the coefficient matrix $Q$ by transforming ${\gamma}$ into another Majorana basis $\mu=(\hat{\mu}_1,\dots,\hat{\mu}_{2N})^\dag=M\gamma$~\cite{BHKRV22}.
A fermionic Gaussian state is labelled by its anti-symmetric covariance matrix~\cite{BHKRV22}
\begin{equation}
 J=-\imath\tanh (Q)=M^T J_0M,
\end{equation}
where $\tanh(x)$ denotes the hyperbolic tangent function~\cite{AS72}, the matrix $J_0$ takes the block diagonal form
\begin{equation}\label{eq}
J_{0}=\left(\begin{array}{ccc}
\tanh(\lambda_1)\mathbb{A} & \dots & 0 \\
\vdots & \ddots & \vdots \\
0 & \dots & \tanh(\lambda_N)\mathbb{A} \\
\end{array}\right),
\end{equation}
and
\begin{equation}\label{eq}
\mathbb{A}=\left(\begin{array}{cc}
0 & 1 \\
-1 & 0 \\
\end{array}\right).
\end{equation}

We consider the von Neumann entropy as the measure of entanglement between the two subsystems. By restricting the matrix $J$ to the entries from subsystems $A$, the restricted matrix $J_A$ becomes the $2m\times 2m$ left-upper block of $J$. The von Neumann entropy of a fermionic Gaussian state of case A can be represented in terms of the real positive eigenvalues $x_i, i=1,\dots,m$ of $\imath J_{A}$ as \cite{BHK21, BHKRV22, HW22}
\begin{equation}\label{eq:von}
S=-\sum_{i=1}^{m}v(x_{i}),
\end{equation}
where
\begin{equation}\label{eq:vx}
v(x)=\frac{1-x}{2}\ln\frac{1-x}{2}+\frac{1+x}{2}\ln\frac{1+x}{2}.
\end{equation}
The resulting joint probability density of the eigenvalues $x_i, i=1,\dots,m$ is proportional to \cite{BHK21}
\begin{equation}\label{eq:ensemble_ap}
\prod_{1\leq i<j\leq m}\left(x_{i}^2-x_{j}^2\right)^{2}\prod_{i=1}^{m}\left(1-x_{i}^2\right)^{n-m}, \qquad x_{i}\in[0,1],
\end{equation}
which is obtained by recursively applying the result~\cite[Proposition A.2]{KFI19}.

\subsubsection*{Case B: Fixed number of particles}
For a fermionic Gaussian state $\ket{F}$ with a fixed particle number $p$, $m \leq p \leq n$, the corresponding covariance matrix $H$ can be expressed via the commutator of fermionic creation and annihilation operators as~\cite{BHKRV22, LRV20}
\begin{equation}
H_{ij}=-\imath\bra{F}\hat{a}_i^{\dag}\hat{a}_j-\hat{a}_j\hat{a}_i^{\dag}\ket{F}.
\end{equation}
Recall the canonical anti-commutation relation~(\ref{facr}), the entries of the matrix $H$ are then of the form
\begin{equation}\label{Fcov}
H_{ij}=-2\imath G_{ij}+\imath\delta_{ij}\mathbb{I},
\end{equation}
where $G_{ij}=\bra{F}\hat{a}_i^{\dag}\hat{a}_j\ket{F}$ denotes the entries of an $N\times N$ matrix $G$ of a fermionic system of $N$ modes. There exists a unitary transformation $U$ that diagonalizes $G$ into the form $U^{\dag}G U$, where the first $p$ diagonal elements are equal to $1$ and the rest are $0$. Therefore, one can write
\begin{equation}
G=U_{N\times p}U_{N\times p}^\dag.
\end{equation}

A fermionic Gaussian state of dimension $N=m+n$ with $p$ particles can be fully characterized by the matrices $H$ and $G$. The von Neumann entropy of the fermionic system in the case B can be represented as \cite{BHKRV22, LRV20}
\begin{equation}~\label{eq:vonf}
S=-\sum_{i=1}^m v(2y_i-1), \qquad y_{i}\in[0,1],
\end{equation}
where $y_i$, $i=1,\dots,m$ are the eigenvalues of the restricted $m\times m$ matrix $G_A=U_{m\times p}U_{m\times p}^\dag$. The eigenvalue distribution of the random matrix $U_{m\times p}U_{m\times p}^\dag$ is the well-known Jacobi unitary ensemble~\cite{Mehta, Forrester}. We denote $x_i$, $i=1,\dots,m$ the eigenvalues of the $m\times m$ left-upper block of matrix $\imath H$. Changing the variables $x_i=2y_i-1$ in~(\ref{eq:vonf}) leads to the von Neumann entropy~(\ref{eq:von}) of case B. The resulting joint probability density of the eigenvalues $x_i,i=1,\dots, m$ is proportional to~\cite{BP21}
\begin{equation}\label{eq:ensemble_fp}
\prod_{1\leq i<j\leq m}\left(x_{i}-x_{j}\right)^{2}\prod_{i=1}^{m}\left({1+x_i}\right)^{p-m}\left(1-x_{i}\right)^{n-p},\qquad x_{i}\in[-1,1].
\end{equation}

It is important to point out that the two joint probability densities~(\ref{eq:ensemble_ap}) and~(\ref{eq:ensemble_fp}) can be compactly represented by a single joint density as
\begin{equation}\label{eq:fg-ensemble}
f_{\mathrm{FG}}(x) \propto \prod_{1\leq i<j\leq m}\left(x_{i}^\gamma-x_{j}^\gamma\right)^{2}\prod_{i=1}^{m}\left(1-x_{i}\right)^{a}\left(1+x_{i}\right)^{b},
\end{equation}
where for the case A we have
\begin{equation}\label{eq:fgap}
\gamma=2,~~~ a=b=n-m\geq 0,~~~x\in[0,1],
\end{equation}
 and for the case B we have
\begin{equation}\label{eq:fgfp}
\gamma=1,~~~ a=n-p\geq 0,~~~ b=p-m\geq 0,~~~x\in[-1,1].
\end{equation}
We omit the normalizations in the density~(\ref{eq:fg-ensemble}) as they will not be made use of  in the subsequent calculations. Note that the variance computation is difficult for an arbitrary $\gamma$ in~(\ref{eq:fg-ensemble}), and one has to consider the cases $\gamma=2$ and $\gamma=1$ separately.

\subsection{Main results}\label{sec2.2}
We now introduce the exact mean and variance formulas of von Neumann entropy for both case A and case B. The mean values have been recently computed~\cite{BHK21,BHKRV22} as summarized in Proposition~\ref{propaa} and Proposition~\ref{propfa} for case A and case B, respectively. The corresponding variance formulas are presented in Proposition~\ref{propav} and Proposition~\ref{propfv} below, which are the main results of the work.

\begin{proposition}[\cite{BHK21}]\label{propaa}
For subsystem dimensions $m\leq n$, the mean value of the von Neumann entropy~(\ref{eq:von}) of fermionic Gaussian states with arbitrary number of particles~(\ref{eq:fgap}) is given by
\begin{align}
\mathbb{E}\!\left[S\right]=&\left(m+n-\frac{1}{2}\right)\psi_{0}(2m+2n)+\left(\frac{1}{4}-m\right)\psi_{0}(m+n)+\left(\frac{1}{2}-n\right)\psi_{0}(2n)\nonumber\\
&-\frac{1}{4}\psi_{0}(n)-m\label{f:apa},
\end{align}
where
\begin{equation}\label{eq:digamma}
\psi_{0}(x)=\frac{\dd\ln\Gamma(x)}{\dd x}
\end{equation}
 is the digamma function.
\end{proposition}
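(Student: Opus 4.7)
The plan is to reduce the problem to a one-dimensional integral against the marginal eigenvalue density, and then to evaluate that integral by combining orthogonal-polynomial methods with parameter differentiation of Selberg-type normalization integrals. By linearity of expectation and the permutation symmetry of the density~(\ref{eq:ensemble_ap}),
\begin{equation*}
\mathbb{E}[S] \;=\; -\int_{0}^{1} v(x)\,\rho(x)\,\mathrm{d}x,
\end{equation*}
where $\rho(x) = \mathbb{E}\!\left[\sum_{i=1}^{m} \delta(x-x_i)\right]$ is the one-point correlation function of the ensemble.

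The change of variable $y = x^{2}$ converts~(\ref{eq:ensemble_ap}) into the Jacobi unitary ensemble on $[0,1]$ with weight $w(y) = y^{-1/2}(1-y)^{n-m}$, whose one-point density admits the Christoffel--Darboux representation
\begin{equation*}
\tilde{\rho}(y) \;=\; w(y)\sum_{k=0}^{m-1}\frac{P_{k}(y)^{2}}{h_{k}},
\end{equation*}
where $\{P_k\}$ are the monic orthogonal polynomials for $w$ and $h_k = \int_{0}^{1} P_{k}(y)^{2} w(y)\,\mathrm{d}y$; both are available in closed form in Gamma functions as specializations of the classical Jacobi family. Substituting this expansion reduces $\mathbb{E}[S]$ to a finite sum, indexed by $k = 0, \dots, m-1$, of logarithmic integrals against the squared orthogonal polynomials.

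To handle the logarithmic structure of $v$, I would split
\begin{equation*}
v(x) \;=\; \tfrac{1}{2}(1-x)\ln(1-x) + \tfrac{1}{2}(1+x)\ln(1+x) - \ln 2,
\end{equation*}
and realize each $\ln(1\pm x)$ as the $\varepsilon$-derivative at $\varepsilon = 0$ of an auxiliary factor $(1\pm x)^{\varepsilon}$. Inserting this perturbed weight into the orthogonality integrals and differentiating reduces every resulting integral to a derivative of a Beta-type normalization constant. Via the identity $\frac{\mathrm{d}}{\mathrm{d}\varepsilon}\ln\Gamma(z+\varepsilon)\big|_{\varepsilon=0} = \psi_{0}(z)$, each summand becomes a linear combination of digamma values evaluated at integer shifts of $k$, $m$, and $n$.

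The main obstacle is then to collapse the finite sum over $k = 0, \dots, m-1$ of digamma-weighted rational terms into the compact closed form~(\ref{f:apa}). Finite digamma sums do not telescope by elementary means; one should expect to invoke repeatedly the recurrence $\psi_{0}(x+1) = \psi_{0}(x) + 1/x$, the duplication identity $\psi_{0}(2x) = \tfrac{1}{2}\psi_{0}(x) + \tfrac{1}{2}\psi_{0}(x+\tfrac{1}{2}) + \ln 2$ (which is what ultimately produces the arguments $2n$ and $2m+2n$ on the right-hand side of~(\ref{f:apa})), and contiguous-relation identities for the ${}_{3}F_{2}$ series generated by the explicit Jacobi coefficients. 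The summation-simplification framework advertised in the abstract is tailored precisely for such manipulations, so I anticipate the mean-value calculation to emerge as a technically lighter special case of the variance proofs carried out in the sequel.
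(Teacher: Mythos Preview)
The paper does not actually prove Proposition~\ref{propaa}: it is quoted verbatim from~\cite{BHK21} as a known input, and the body of the paper is devoted exclusively to the variance formulas (Propositions~\ref{propav} and~\ref{propfv}). So there is no ``paper's own proof'' of this statement to compare against; your proposal is an attempt to supply an argument the authors deliberately omit.

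That said, your outline is sound and in fact mirrors the machinery the paper deploys for the variance. The one methodological divergence worth noting is your change of variable $y=x^{2}$ to land in a Jacobi ensemble with weight $y^{-1/2}(1-y)^{n-m}$. The paper instead keeps the variable $x$ and, via the parity relation $J_{k}^{(a,a)}(-x)=(-1)^{k}J_{k}^{(a,a)}(x)$, writes the kernel on $[0,1]$ directly as a sum over the even-index polynomials $J_{2k}^{(a,a)}$ (see Table~\ref{tab_kernel} and the surrounding discussion). Both viewpoints are equivalent, but the even-index formulation is what makes the later variance integrals~(\ref{eq:apA1})--(\ref{eq:apB2}) line up with the integral identities~(\ref{eq:SIac2}) and~(\ref{eq:SIdc2}); if you were to continue to higher moments, staying in the paper's convention would spare you some bookkeeping. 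For the mean alone, your $y=x^{2}$ route is arguably cleaner. The parameter-differentiation step and the subsequent reduction of the $k$-sum via $\psi_{0}$ recurrences and the duplication formula are exactly the kind of manipulations collected in Appendix~\ref{secB1}, so the closing remark of your proposal is accurate.
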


\begin{proposition}[\cite{BHKRV22}]\label{propfa}
For subsystem dimensions $m \leq n$, the mean value of the  von Neumann entropy~(\ref{eq:von}) of fermionic Gaussian states with a fixed particle number~(\ref{eq:fgfp}) is given by
\begin{align}\label{f:fpa}
\mathbb{E}\!\left[S\right]=&-\frac{m (m+n-p)}{m+n}\psi _0(m+n-p)+(m+n) \psi _0(m+n+1)\nonumber\\
&-\frac{m p }{m+n}\psi _0(p+1)-n \psi _0(n+1)-m.
\end{align}
\end{proposition}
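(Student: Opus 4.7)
The plan is to exploit the fact that (\ref{eq:ensemble_fp}) is precisely the Jacobi unitary ensemble with weight $w(x)=(1-x)^{n-p}(1+x)^{p-m}$ on $[-1,1]$. Because the joint density has $\beta=2$ Vandermonde repulsion, $\mathbb{E}[S]$ is a linear statistic that reduces to a single integral against the reproducing kernel. Writing $\hat{P}_{k}^{(\alpha,\beta)}$ for the orthonormal Jacobi polynomials with $\alpha=n-p$ and $\beta=p-m$, we have
\begin{equation*}
\mathbb{E}[S]=-\sum_{k=0}^{m-1}\int_{-1}^{1}v(x)\,\bigl[\hat{P}_{k}^{(\alpha,\beta)}(x)\bigr]^{2}\,w(x)\,\dd x,
\end{equation*}
so the goal is to evaluate, and then sum in $k$, this family of integrals.

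To handle the logarithmic factors inside $v(x)$, I would use the parametric identity
\begin{equation*}
\frac{1\pm x}{2}\ln\frac{1\pm x}{2}=\left.\frac{\partial}{\partial s}\left(\frac{1\pm x}{2}\right)^{1+s}\right|_{s=0},
\end{equation*}
pull the derivative outside the integral, and evaluate the resulting $s$-dependent Jacobi moment of $\bigl[\hat{P}_{k}^{(\alpha,\beta)}\bigr]^{2}$. This can be done by expanding the squared polynomial into powers of $(1\pm x)$ via the Rodrigues formula, or by using a linearisation of $\bigl[P_{k}^{(\alpha,\beta)}\bigr]^{2}$ into single Jacobi polynomials $P_{j}^{(\alpha,\beta)}$; each resulting piece is then a classical Jacobi integral expressible through Gamma functions (possibly packaged as a ${}_3F_2$ at unit argument). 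Differentiating in $s$ at $s=0$ turns the $\Gamma$-ratios into combinations of $\psi_{0}$, and the spurious $\ln 2$ contributions from the two signs cancel once both halves of $v(x)$ are combined.

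At this stage $\mathbb{E}[S]$ is represented as a finite double sum indexed by $k$ and an internal index $j$, whose summand is a rational function of $(m,n,p,k,j)$ times digamma values at linear combinations of the parameters. The main obstacle, and the analytic heart of the proof, is the collapse of this double sum into the four clean digamma terms of (\ref{f:fpa}). My approach would be to interchange the order of summation and then apply contiguous relations together with Chu--Vandermonde / Pfaff--Saalsch\"utz type identities to perform the inner hypergeometric sum in closed form, isolating the coefficients of $\psi_{0}(m+n-p)$, $\psi_{0}(p+1)$, $\psi_{0}(m+n+1)$ and $\psi_{0}(n+1)$ one at a time. A strong consistency check that guides the regrouping is the symmetry $p\leftrightarrow m+n-p$ induced by $x\mapsto -x$: under this map $v(x)$ is invariant while $(\alpha,\beta)$ swap, so (\ref{f:fpa}) must be preserved; combined with $\psi_{0}(k+1)=\psi_{0}(k)+1/k$, this symmetry fixes a mirror pairing of the coefficients of $\psi_{0}(m+n-p)$ and $\psi_{0}(p+1)$ that (\ref{f:fpa}) indeed satisfies, providing both a sanity check on the algebra and a pointer to the correct final grouping.
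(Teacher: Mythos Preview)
The paper does not actually contain a proof of this proposition: it is quoted as a known result from~\cite{BHKRV22}, so there is no ``paper's own proof'' to compare against. That said, your sketch is sound and in fact aligns closely with the machinery the paper develops for the variance computation. In particular, the parametric-differentiation idea you describe is exactly what the paper uses: the integral
\[
\int_{-1}^{1}\frac{1+x}{2}\ln\frac{1+x}{2}\left(\frac{1-x}{2}\right)^{a}\left(\frac{1+x}{2}\right)^{b}\bigl[J_{k}^{(a,b)}(x)\bigr]^{2}\dd x
\]
(see~(\ref{eq:fpB1ab})) is computed by specialising the two-polynomial Jacobi integral~(\ref{eq:SIac2}), differentiating in the exponent parameter~$c$, and then setting $c=b+1$. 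The symmetric half with the $(1-x)$ logarithm is handled by the parity~(\ref{eq:parity}), which is precisely your $p\leftrightarrow m+n-p$ swap. So the route you outline is not just plausible but is the paper's own toolkit applied one order lower.

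The one place where your proposal is genuinely only a sketch is the collapse of the double sum into the four digamma terms. You correctly flag this as the analytic heart, but ``contiguous relations together with Chu--Vandermonde / Pfaff--Saalsch\"utz type identities'' is more a wish list than a plan. In practice, after differentiating~(\ref{eq:SIac2}) and resolving the resulting indeterminacies via~(\ref{eq:pgna1})--(\ref{eq:pgna2}), the inner sum over the expansion index is a single finite sum of the type~(\ref{eq:stp}), and the outer sum over $k$ is of the type~(\ref{eq:ftp}); the identities in Appendix~\ref{secB1} (notably~(\ref{eq:B1}), (\ref{eq:B20}), (\ref{eq:B21})) then close everything. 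If you want to turn your proposal into an actual proof, that is where the concrete work lies.
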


\begin{proposition}\label{propav}
For subsystem dimensions $m\leq n$, the variance of the von Neumann entropy~(\ref{eq:von}) of fermionic Gaussian states with arbitrary number of particles~(\ref{eq:fgap}) is given by
\begin{align}\label{f:apv}
\mathbb{V}\!\left[S\right]=~\!\!&\!\left(\frac{1}{2}-m-n\right) \psi _1(2 m+2 n)+\left(n-\frac{1}{2}\right) \psi _1(2 n)+\left(\frac{m (2 m+n-1)}{2 m+2 n-1}-\frac{1}{8}\right)\nonumber\\
 & \times\psi _1(m+n)+\frac{\psi_1(n)}{8}-\frac{1}{2} (\psi_0(2 m+2 n)-\psi_0(2 n)),
\end{align}
where
\begin{equation}\label{eq:trigamma}
\psi_{1}(x)=\frac{\dd^{2}\ln\Gamma(x)}{\dd x^{2}}
\end{equation}
is the trigamma function.
\end{proposition}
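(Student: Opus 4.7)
The plan is to reduce the variance to polynomial-kernel integrals against $v$ and then collapse the resulting finite double sums using hypergeometric manipulations. Writing $S=-\sum_i v(x_i)$, I would expand
\begin{equation*}
\mathbb{V}[S]=\mathbb{E}\Bigl[\sum_i v(x_i)^2\Bigr]+\mathbb{E}\Bigl[\sum_{i\ne j}v(x_i)v(x_j)\Bigr]-(\mathbb{E}[S])^2.
\end{equation*}
The ensemble (\ref{eq:ensemble_ap}) is determinantal — equivalently, the substitution $u_i=x_i^2$ turns it into a Jacobi unitary ensemble on $[0,1]$ with weight $u^{-1/2}(1-u)^{n-m}$ — so the factorized part of the two-point function exactly cancels $(\mathbb{E}[S])^2$, leaving
\begin{equation*}
\mathbb{V}[S]=\int_0^1 v(x)^2 K_m(x,x)\dd x-\iint_{[0,1]^2} v(x)v(y)K_m(x,y)^2\dd x\dd y,
\end{equation*}
where $K_m(x,y)=\sum_{k=0}^{m-1}p_k(x)p_k(y)$ is the reproducing kernel built from polynomials $p_k$ orthonormal with respect to the weight $(1-x^2)^{n-m}$ paired with the Vandermonde factor $\prod(x_i^2-x_j^2)^2$ on $[0,1]$. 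These $p_k$ are essentially $P_k^{(n-m,-1/2)}(2x^2-1)$.

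Next I would evaluate the one- and two-body integrals separately. Splitting $v(x)^2$ via (\ref{eq:vx}) into its quadratic-log pieces and using integration by parts together with the Rodrigues formula for Jacobi polynomials expresses the one-body term as a single sum over $k$ of digamma and trigamma values at linear combinations of $m$, $n$, $k$. For the two-body term, the expansion $K_m(x,y)^2=\sum_{k,l}p_k(x)p_k(y)p_l(x)p_l(y)$ turns the double integral into
\begin{equation*}
\sum_{k,l=0}^{m-1}\left(\int_0^1 v(x)p_k(x)p_l(x)(1-x^2)^{n-m}\dd x\right)^2,
\end{equation*}
where each inner integral is closed-form in $\psi_0$, obtained by inserting (\ref{eq:vx}) and differentiating the weight with respect to its parameter so that the $\ln(1\pm x)$ factors are generated from $(1\pm x)^a$ at $a=n-m$.

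The main obstacle, and the technical heart of the argument, is converting the resulting explicit but unwieldy sums over $0\le k,l\le m-1$ of products of Pochhammer ratios and polygamma values into the compact closed form (\ref{f:apv}). This is where the announced dummy-summation and re-summation framework becomes essential: one introduces auxiliary summation indices that decouple Pochhammer ratios, interchanges the order of summation so that the inner sums become terminating ${}_4F_3$ or ${}_5F_4$ series, and then re-sums each inner series into a low-order polygamma combination by means of new hypergeometric transformation identities. Iterating this procedure collapses all but finitely many surviving terms, after which matching coefficients of $\psi_0(\cdot)$ and $\psi_1(\cdot)$ against (\ref{f:apv}) is routine. I expect the dominant difficulty to lie precisely in the bookkeeping of this final stage — tracking the many cancellations between the one- and two-body contributions while keeping hypergeometric index shifts consistent and ensuring that the auxiliary indices introduced for re-summation do not reintroduce spurious divergences.
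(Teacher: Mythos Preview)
Your outline is correct and follows essentially the same route as the paper: reduce the variance to the difference of one- and two-body kernel integrals, express these via parameter-differentiated Jacobi integrals as finite sums of polygamma values, and then collapse the resulting double sums using precisely the dummy-summation and re-summation machinery you invoke. The only cosmetic difference is that you work with $P_k^{(n-m,-1/2)}(2x^2-1)$ after substituting $u=x^2$, whereas the paper keeps the even-index polynomials $J_{2k}^{(a,a)}(x)$ on $[0,1]$; these are equivalent by the standard quadratic transformation for Jacobi polynomials.
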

\begin{proposition}\label{propfv}
For subsystem dimensions $m \leq n$, the variance of the von Neumann entropy~(\ref{eq:von}) of fermionic Gaussian states with a fixed particle number~(\ref{eq:fgfp}) is given by
\begin{align}\label{f:fpv}
\mathbb{V}\!\left[S\right]=~\!\!&c_0 \psi _1(m+n-p)-(m+n) \psi _1(m+n)+n \psi _1(n)+c_1 \psi _1(p)+c_2 \nonumber\\
&\times (\psi _0(m+n-p)-\psi _0(p))^2+c_3 \left(\psi _0(m+n-p)-\psi _0(p)\right)-\psi _0(m+n)\nonumber\\
&+\psi _0(n)+c_4,
\end{align}
where the coefficients $c_i$ are summarized in Table~\ref{tab_prop4} below with $(a)_{n}=\Gamma(a+n)/\Gamma(a)$
denoting the Pochhammer symbol.
\end{proposition}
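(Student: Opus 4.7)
The plan is to combine the determinantal structure of the Jacobi unitary ensemble~(\ref{eq:ensemble_fp}) with a parametric differentiation trick for the logarithms in $v(x)$. Since~(\ref{eq:ensemble_fp}) is the eigenvalue density of the JUE with weight $(1-x)^a(1+x)^b$ on $[-1,1]$, where $a=n-p$ and $b=p-m$, the variance admits the kernel representation
\begin{equation*}
\mathbb{V}[S]=\int_{-1}^{1}v(x)^2\,K_m(x,x)\,\mathrm{d}x \;-\; \iint_{[-1,1]^2}\!v(x)v(y)\,K_m(x,y)^2\,\mathrm{d}x\,\mathrm{d}y,
\end{equation*}
with $K_m$ the Christoffel--Darboux kernel built from the Jacobi polynomials $P_k^{(a,b)}$. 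Expanding $K_m$ in the orthonormal basis turns the double integral into a finite double sum over $k,l\in\{0,\ldots,m-1\}$ of squared ``log-overlaps'' $\int v(x) P_k(x) P_l(x)(1-x)^a(1+x)^b\,\mathrm{d}x$, and the diagonal piece becomes a single sum of $\int v(x)^2 [P_k(x)]^2 (1-x)^a(1+x)^b\,\mathrm{d}x$.

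To handle the logarithms cleanly I would use the identity
\begin{equation*}
\frac{1\pm x}{2}\ln\frac{1\pm x}{2}=\frac{\partial}{\partial s}\bigg|_{s=0}\!\left(\frac{1\pm x}{2}\right)^{s+1},
\end{equation*}
so that each $v(x)$ becomes a first derivative of a shifted Jacobi weight at $s=0$, and each $v(x)^2$ or $v(x)v(y)$ becomes a mixed second derivative in two auxiliary parameters $s,t$. Every integral then reduces to a moment of the type $\int_{-1}^{1}(1-x)^{a+\alpha}(1+x)^{b+\beta} P_k^{(a,b)}(x) P_l^{(a,b)}(x)\,\mathrm{d}x$, which is evaluable in closed form by expanding the two Jacobi polynomials as terminating ${}_2F_1$'s, interchanging the hypergeometric sums with the Beta integral in $x$, and collapsing one of the inner sums by Chu--Vandermonde. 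The outcome is a terminating hypergeometric series in $\alpha,\beta$; differentiating at $\alpha=\beta=0$ then brings down exactly the digamma and trigamma factors that appear in~(\ref{f:fpv}).

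The main obstacle is the final step: collapsing the remaining finite double sums over $k,l$ into the compact closed form~(\ref{f:fpv}). This is precisely where the simplification framework announced in the abstract becomes essential. I would encode each stray digamma factor as a dummy one-dimensional sum (for instance via $\psi_0(n+k+1)-\psi_0(n+1)=\sum_{j=1}^{k}(n+j)^{-1}$, or analogous Pochhammer-derivative identities), then swap summation orders so that the innermost finite sum becomes a classical evaluable hypergeometric series of Chu--Vandermonde, Pfaff--Saalsch\"utz, or contiguous-relation type. Iterating this \emph{re-summation}, each structural piece of~(\ref{f:fpv}) -- the pure $\psi_1$ contributions, the squared digamma difference $(\psi_0(m+n-p)-\psi_0(p))^2$, the linear digamma difference, and the rational constant $c_4$ -- should emerge from a distinct family of reductions, with the Pochhammer-laden coefficients of Table~\ref{tab_prop4} surfacing by careful bookkeeping. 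A built-in sanity check is that the same pipeline applied with only a single derivative in one auxiliary parameter must reproduce the mean formula of Proposition~\ref{propfa}, and the overall result must reduce appropriately to the conjecture of~\cite{HW22} in the relevant specialization.
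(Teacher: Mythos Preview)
Your setup mirrors the paper's exactly: the kernel decomposition $\mathbb{V}[S]=\mathrm{I_A}-\mathrm{I_B}$, parametric differentiation of the Jacobi weight to generate the logarithms, and evaluation of the resulting integrals by expanding the Jacobi polynomials and integrating term by term. The paper implements this via the identities~(\ref{eq:SIac2}) and~(\ref{eq:SIdc2}), and in addition applies the confluent Christoffel--Darboux formula~(\ref{eq:cd}) to collapse the $k$-sum in $\mathcal{A}_1$ and $\mathcal{A}_2$ before differentiating, which keeps the summation representations in Appendix~\ref{subsecA2} manageable.

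The genuine gap is in your last paragraph. You describe the simplification as swapping orders until ``the innermost finite sum becomes a classical evaluable hypergeometric series of Chu--Vandermonde, Pfaff--Saalsch\"utz, or contiguous-relation type,'' but the paper's message is precisely that this is \emph{not} enough. In case~B the off-diagonal piece $\mathcal{B}_2$ produces, after partial fractions, the alternating inner sums~(\ref{eq:fb2c1}),
\begin{equation*}
\sum_{j=1}^{m-k-1}\frac{(-1)^j\,\Gamma(a+b-j-k+m)}{\Gamma(m-k-j)}\left(\frac{1}{(a+b-j-2k+2m-1+c)^\lambda}-\frac{1}{(j+c)^\lambda}\right),
\end{equation*}
and no standard summation theorem closes these. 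The paper's Lemma~\ref{lemma5} handles them by a tailored re-summation: one designs a recurrence $G_m=c_{m-1}G_{m-1}+r_{m-1}$ for each half separately, iterates it $m$ times to unwind both into new double sums, and only \emph{then} do the intractable pieces cancel against each other, leaving single sums of polygamma type. The $\mathcal{A}_2$ piece likewise requires Lemmas~\ref{lemma1}--\ref{lemma4}, which transform four-Gamma sums into second-type sums~(\ref{eq:stp}) and in fact yield new ${}_3F_2/{}_4F_3$ transformations~(\ref{eq:tf1})--(\ref{eq:tf4}). Your sketch names the right \emph{flavour} of technique but does not supply the specific recurrences or the cancellation mechanism, and that is the actual content of the proof.

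A minor point: your proposed sanity check against~\cite{HW22} is misplaced. That conjecture concerns case~A (arbitrary particle number, $\gamma=2$ in~(\ref{eq:fg-ensemble})), whereas Proposition~\ref{propfv} is case~B ($\gamma=1$); there is no specialization of parameters that reduces one variance formula to the other.
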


\begin{table}[h]
\begin{center}
\begin{minipage}{174pt}
\caption{Coefficients of von Neumann entropy variance in Proposition~\ref{propfv}}\label{tab_prop4}
\begin{tabular}{@{}l l@{}}
\toprule
$c_0~=$& $\displaystyle \frac{m (m+n-p) \left(m^2+2 m n+n^2-n p-1\right)}{(m+n-1)_3}$\\
$c_1~=$& $\displaystyle \frac{m p \left(m^2+m n+n p-1\right)}{(m+n-1)_3}$  \\
$c_2~=$& $\displaystyle \frac{m n p (m+n-p)}{ (m+n) (m+n-1)_3}$  \\
$c_3~=$& $\displaystyle -\frac{m (m+1) (m+n-2 p)}{(m+n)(m+n)_2}$  \\
$c_4~=$& $\displaystyle -\frac{m (2 m+n+2)}{(m+n)(m+n)_2}$ \\
\midrule
\end{tabular}
\end{minipage}
\end{center}
\end{table}
\noindent

The proof to Proposition~\ref{propav} and Proposition~\ref{propfv} will be presented in Section~\ref{sec3}. Note that a special case of equal subsystem dimensions ($m=n$) of the result in Proposition~\ref{propav} has been established very recently~\cite{HW22} by utilizing an existing simplification framework developed in~\cite{Wei17,Wei20,HWC21,Wei22,Wei20BH,LW21,HW22}. However, for the case of arbitrary subsystem dimensions, the existing framework is not sufficient to simplify some of the summations in the variance calculation, where a key technical contribution of the work is to develop a new simplification framework.

\begin{figure}[h]%
\centering
\includegraphics[width=0.9\textwidth]{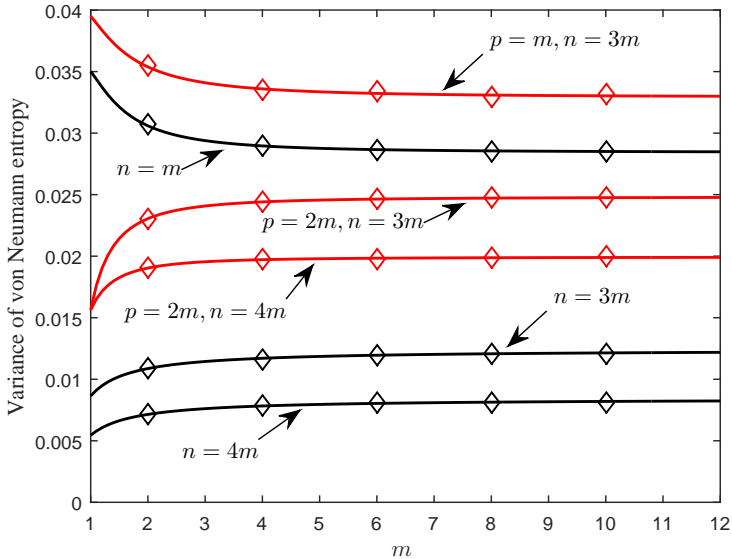}
\caption{Variance of von Neumann entropy: analytical results versus simulations. The black lines represent the obtained analytical result~(\ref{f:apv}) for the cases $n=m$, $n=3m$, and $n=4m$. The red lines are drawn by the result~(\ref{f:fpv}) for the cases $p=m, n=3m$; $p=2m, n=3m$; and $p=2m, n=4m$. The diamond scatters represent numerical simulations.}\label{fig1}
\end{figure}
To illustrate the derived results~(\ref{f:apv}) and~(\ref{f:fpv}), we plot in Figure~\ref{fig1} the exact variance of von Neumann entropy as compared with the simulations\footnote{The simulations performed in figures \ref{fig1}--\ref{fig3} utilize the Mathematica codes provided by Santosh Kumar based on the log-gas approach as discussed in~\citep[Appendix B]{SK2019}.}. In Figure~\ref{fig1}, we observe that the variance in case A approaches to a constant when system dimensions increase with a fixed ratio
\begin{equation}
f_1=\frac{m}{n+m},
\end{equation}
where the variance in case B follows the same behavior with fixed $f_1$ and
\begin{equation}
f_2=\frac{p}{n+m}.
\end{equation}
This phenomenon can be analytically established by the asymptotic results of variances in the literature. For case A, in the asymptotic regime~\cite{BHK21}
\begin{equation}\label{regime:valim}
m\to \infty,\qquad n\to \infty,\qquad0<f_1\leq\frac{1}{2},
\end{equation}
one has~\cite{BHK21}
\begin{equation}\label{eq:valim}
\mathbb{V}\!\left[S\right]=\frac{1}{2}\left(f_1+f_1^2+\ln (1-f_1)\right)+o\left(\frac{1}{m+n}\right),
\end{equation}
whereas for case B, in the asymptotic regime~\cite{BHKRV22}
\begin{equation}\label{regime:valimf}
m\to \infty,\qquad p\to \infty,\qquad n\to \infty,\qquad 0<f_1\leq f_2\leq \frac{1}{2},
\end{equation}
one has~\cite{BHKRV22}
\begin{align}\label{eq:valimf}
\mathbb{V}\!\left[S\right]=\!~&f_1+f_1^2+\ln \left(1-f_1\right)+ f_1 f_2\left(1-f_1\right) \left(1-f_2\right) \ln ^2\frac{1-f_2}{f_2}\nonumber\\
\!~&+f_1^2 \left(2 f_2-1\right) \ln \frac{1-f_2}{f_2}+o\left(\frac{1}{(m+n)^2}\right).
\end{align}
The above asymptotic variances~(\ref{eq:valim}) and~(\ref{eq:valimf}) can be directly recovered by the results in Proposition~\ref{propav} and Proposition~\ref{propfv}, respectively. Moreover, the correction terms of any order can be simply obtained from our exact variance formulas upon using the asymptotic behaviour of polygamma functions
\begin{align}
\psi_0(x)=&\ln (x)-\frac{1}{2 x}-\sum_{l=1}^{\infty}\frac{B_{2l}}{2lx^{2l}},\qquad x\to\infty \label{eq:limpl0},\\
\psi_{1}(x)=&\frac{1+2x}{2x^{2}}+\sum_{l=1}^{\infty}\frac{B_{2l}}{x^{2l+1}},\qquad x\to\infty \label{eq:limpl1},
\end{align}
where $B_k$ is the $k$-th Bernoulli number~\cite{AS72}. For example, utilizing the next order of correction, the asymptotic result~(\ref{eq:valimf}) is refined to
\begin{align}
\mathbb{V}\!\left[S\right]=&f_1^2+f_1+\ln \left(1-f_1\right)+f_1  f_2\left(1-f_1\right) \left(1-f_2\right) \ln ^2\frac{1-f_2}{f_2}+f_1^2\nonumber\\
&\times \left(2 f_2-1\right) \ln \frac{1-f_2}{f_2}+\frac{1}{12 (m+n)^2}\Bigg(\frac{f_1^2}{\left(f_2-1\right){}^2}+\frac{f_1^2}{f_2^2}+12 f_1^2\nonumber\\
&-12 f_1+\frac{1}{\left(f_1-1\right){}^2}+\frac{f_1-3 f_1^2}{f_2-1}+\frac{3 f_1^2-f_1}{f_2}-1 \nonumber\\
&+\frac{2 \left(f_1-1\right) f_1 \left(12 f_2^3-18 f_2^2+4 f_2+1\right)}{\left(f_2-1\right) f_2}\ln \frac{1-f_2}{f_2}\nonumber\\
&+12 \left(f_1-1\right) f_1 \left(f_2-1\right) f_2 \ln ^2\frac{1-f_2}{f_2}\!\Bigg)+o\left(\frac{1}{(m+n)^4}\right).
\end{align}

\begin{figure}[h]%
\centering
\includegraphics[width=0.9\textwidth]{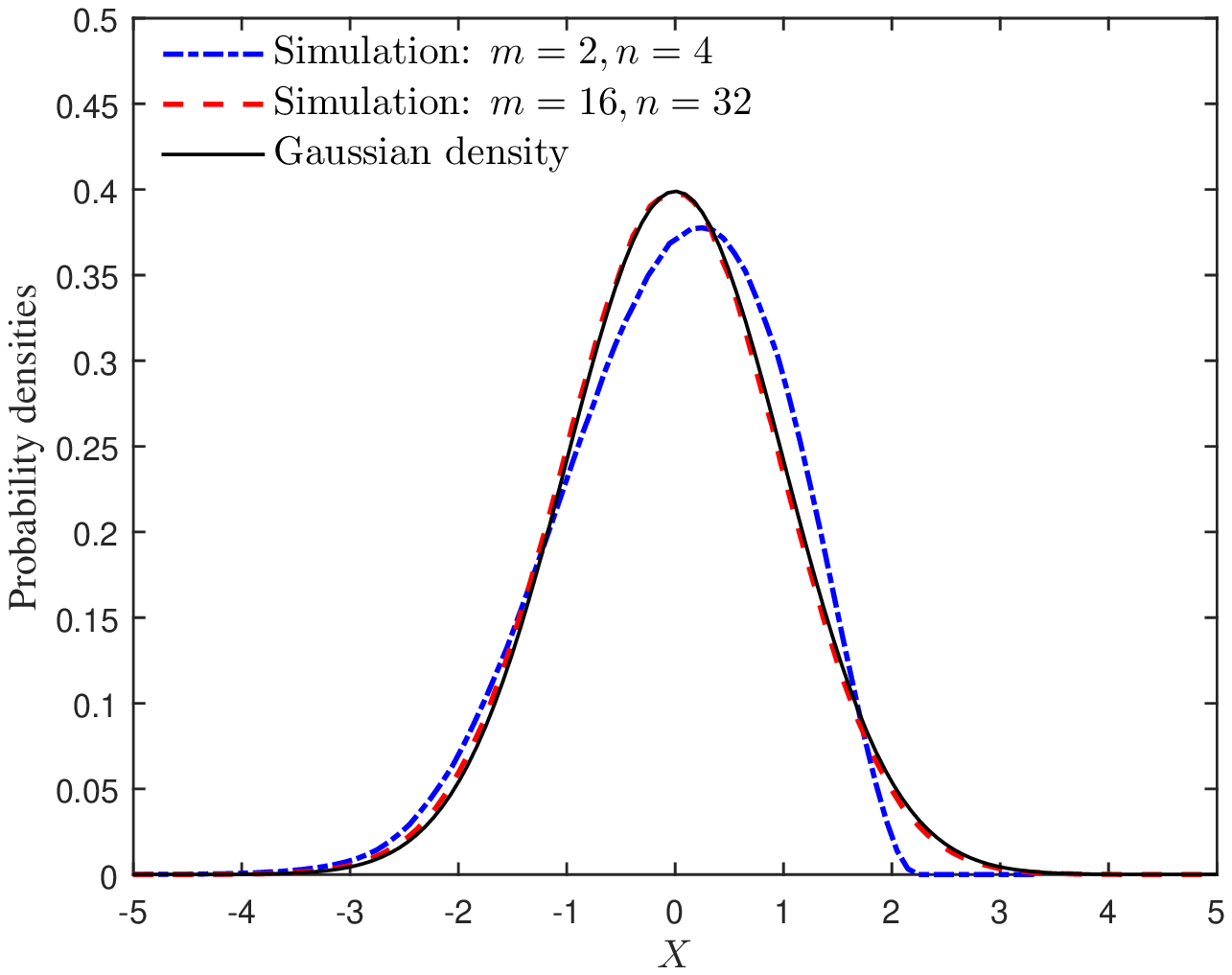}
\caption{Probability densities of standardized von Neumann entropy for case A in (\ref{eq:fgap}): a comparison of Gaussian density~(\ref{eq:iappr}) to the simulation results.  The dash-dot line in blue and the dashed line in red refer to the standardized von Neumann entropy~(\ref{eq:X}) of subsystem dimensions $m=2$, $n=4$, and $m=16$, $n=32$, respectively. The solid black line represents the Gaussian density~(\ref{eq:iappr}).}\label{fig2}
\end{figure}

\begin{figure}[h]%
\centering
\includegraphics[width=0.9\textwidth]{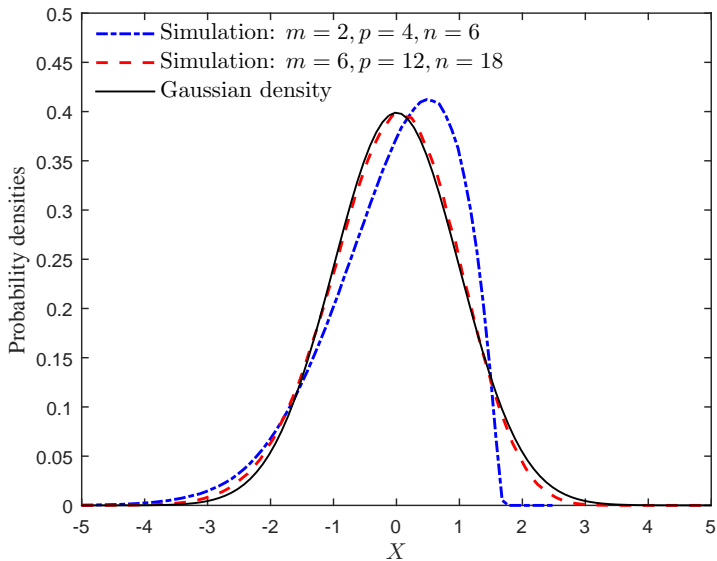}
\caption{Probability densities of standardized von Neumann entropy for case B in (\ref{eq:fgfp}): a comparison of Gaussian density~(\ref{eq:iappr}) to the simulation results.  The dash-dot line in blue and the dashed line in red refer to the standardized von Neumann entropy~(\ref{eq:X}) of dimensions $m=2$, $p=4$, $n=6$, and $m=6$, $p=12$, $n=18$, respectively. The solid black line represents the Gaussian density~(\ref{eq:iappr}).}\label{fig3}
\end{figure}

To understand the distribution of the von Neumann entropy, simple approximations can now be constructed by using the obtained mean and variance formulas. We first standardize the von Neumann entropy as
\begin{equation}\label{eq:X}
X=\frac{S-\mathbb{E}\!\left[S\right]}{\sqrt{\mathbb{V}\!\left[S\right]}},
\end{equation}
where the random variable $X$ is of zero mean and unit variance. We now compare the distribution of $X$ with a standard Gaussian distribution
\begin{equation}\label{eq:iappr}
\phi(x)=\frac{1}{\sqrt{2\pi}}\mathrm{e}^{-\frac{1}{2}x^2}, \qquad x\in(-\infty,\infty).
\end{equation}
In figures~\ref{fig2}--\ref{fig3}, we plot the simulation results of the standardized von Neumann entropy $X$ as compared with a standard Gaussian. Specifically, the ratios are fixed to $f_1=1/3$ for case A in Figure \ref{fig2} and $f_1=1/4$, $f_2=1/2$ for case B in Figure \ref{fig3}. It is observed from the figures that the Gaussian density captures accurately the distribution of the standardized von Neumann entropy $X$ for moderately large dimensions. We also observe that the true distribution of $X$ is non-symmetric and appears to be left-skewed when the subsystem dimensions are small as seen from the dash-dot blue curves. In comparison, when the subsystem dimensions become larger, the distributions of $X$ appear to be closer to the Gaussian distribution. In fact, the Gaussian density as a limiting behavior of von Neumann entropy has been conjectured over different random matrix models of Hilbert-Schmidt ensemble~\cite{Wei20}, Bures-Hall ensemble~\cite{Wei20BH}, and fermionic Gaussian ensemble of an arbitrary number of particles~\cite{HW22}. Here, one is also attempting to conjecture that under the asymptotic regime~(\ref{regime:valimf}), the standardized von Neumann entropy~(\ref{eq:X}) of fermionic Gaussian states with a fixed particle number~(\ref{eq:fgfp}) converges in distribution to a standard Gaussian.

The rest of the paper is organized as follows. The detailed calculations of the main results in Proposition~\ref{propav} and Proposition~\ref{propfv} are provided in Section~\ref{sec3}. In Appendix~\ref{secA}, we list the summation representations of the integrals involved in the variance computations. Some additional finite sum identities utilized in the simplification are listed in Appendix \ref{secB}.

\section{Variance calculation}\label{sec3}
In this section, we prove the results in Proposition~\ref{propav} and Proposition~\ref{propfv}. In Section~\ref{subsec3.1}, we obtain the summation representations of the variances. Tools in simplifying these summations are introduced in Section~\ref{subsec3.2}, where the new simplification framework consisting of six lemmas is presented first. The detailed simplification procedures that lead to the claimed results~(\ref{f:apv}) and~(\ref{f:fpv}) are discussed in Section~\ref{subsec3.3}.
\subsection{Correlation functions and integral calculations}\label{subsec3.1}
Recall the definition~(\ref{eq:von}) of von Neumann entropy
\begin{equation}
S= -\sum_{i=1}^{m}v(x_{i}),
\end{equation}
with
\begin{equation}
v(x)=\frac{1-x}{2}\ln\frac{1-x}{2}+\frac{1+x}{2}\ln\frac{1+x}{2},
\end{equation}
computing its variance requires one and two arbitrary eigenvalue densities of the fermionic Gaussian ensemble~(\ref{eq:fg-ensemble}). Denoting $g_{l}(x_{1},\dots,x_{l})$ as the joint density of $l$ arbitrary eigenvalues, the variance of von Neumann entropy is written as
\begin{equation}\label{eq:vc1}
\mathbb{V}\!\left[S\right]=\mathbb{E}\!\left[S^{2}\right]-\mathbb{E}^{2}\!\left[S\right],
\end{equation}
where
\begin{align}
\mathbb{E}\!\left[S^{2}\right]&=m\int_x v^{2}(x)g_1(x)\dd x+m(m-1)\iint_{x,y} v(x)v(y)g_2\left(x,y\right)\dd x\dd y\label{eq:ES2}\\
\mathbb{E}\!\left[S\right]&=m\int_x v(x)g_1(x)\dd x\label{eq:ES1}.
\end{align}
In (\ref{eq:ES2}) and (\ref{eq:ES1}), the support is $x,y\in[0,1]$ for case A, and the support is $x,y \in [-1,1]$ for case B.

For fermionic Gaussian ensemble~(\ref{eq:fg-ensemble}), it is a well-known result in random matrix theory that  the joint density $g_{l}(x_{1},\dots,x_{l})$ can be written in terms of an $l\times l$ determinant as~\cite{Mehta, Forrester}
\begin{equation}\label{eq:gl}
g_{l}(x_{1},\dots,x_{l})=\frac{(m-l)!}{m!}\det\left(K\left(x_{i},x_{j}\right)\right)_{i,j=1}^{l}.
\end{equation}
The determinant in~(\ref{eq:gl}) is known as the $l$-point correlation function~\cite{Forrester}, where
\begin{equation}\label{eq:corrk}
K\left(x,y\right)=\sqrt{w(x)w(y)}\sum_{k=0}^{m-1}\frac{J_k^{(a,b)}(x) J_k^{(a,b)}(y)}{h_k}
\end{equation}
is the correlation kernel with the weight function
\begin{equation}
w(x)=\left(\frac{1-x}{2}\right)^{a}  \left(\frac{1+x}{2}\right)^{b}.
\end{equation}
For convenience, we summarize in Table~\ref{tab_kernel} the parameters $a$, $b$, and the order $k$ of the Jacobi polynomial $J_k^{(a,b)}(x)$ along with the normalization constants $h_k$ of case A and case B.

\begin{table}[htbp]
\begin{center}
\begin{minipage}{340pt}
\caption{Parameters of correlation kernel~(\ref{eq:corrk}) in case A and case B.}\label{tab_kernel}
\begin{tabular}{@{}lcc@{}}
\toprule
   &Case A: Arbitrary number of particle~(\ref{eq:fgap}) &Case B: Fixed number of particle~(\ref{eq:fgfp}) \\
\midrule
$a$   & $n-m$ &$n-p$  \\
\\
 $b$    &$n-m$ &$p-m$\\
\\
$k$   &$2k$   &$k$  \\
\\
$h_k$ &$\displaystyle \frac{(4k+2a+1)^{-1}\Gamma^{2}(2k+a+1)}{\Gamma(2k+2a+1)\Gamma(2k+1)}$& $\displaystyle \frac{2 \Gamma (k+a+1) \Gamma (k+b+1)}{ (2k+a+b+1) \Gamma (k+1)\Gamma (k+a+b+1)}$  \\
\botrule
\end{tabular}
\end{minipage}
\end{center}
\end{table}
\noindent
The constants $h_k$  in Table~\ref{tab_kernel} are obtained as follows.
For case B, the orthogonality relation~\cite{Forrester}
\begin{align}
&\int_{-1}^{1}\left(\frac{1-x}{2}\right)^{a}\left(\frac{1+x}{2}\right)^{b}J^{(a,b)}_{k}(x)J^{(a,b)}_{l}(x)\dd x\nonumber\\
&=\frac{2\Gamma(k+a+1)\Gamma(k+b+1)}{(2k+a+b+1)\Gamma(k+1)\Gamma(k+a+b+1)}\delta_{kl}, \quad \Re(a,b)>-1,\label{eq:orthogonality}
\end{align}
directly leads to the normalization constant $h_{k}$ of Jacobi polynomials $J^{(a,b)}_{k}$. For case A, by using the parity property of Jacobi polynomials~\cite{Szego}
\begin{equation}\label{eq:parity}
J^{(a,b)}_{k}(-x)=(-1)^{k}J^{(b,a)}_{k}(x),
\end{equation}
the orthogonality relation~(\ref{eq:orthogonality}) can also be written as
\begin{align}
&\int_{0}^{1}\left(\frac{1-x}{2}\right)^{a}\left(\frac{1+x}{2}\right)^{a}J^{(a,a)}_{2k}(x)J^{(a,a)}_{2l}(x)\dd x\nonumber\\
&=\frac{\Gamma(2k+a+1)\Gamma(2k+a+1)}{(4k+2a+1)\Gamma(2k+1)\Gamma(2k+2a+1)}\delta_{kl}, \quad \Re(a)>-1,
\end{align}
which gives the normalization constant $h_{k}$ of the polynomials $J^{(a,a)}_{2k}(x)$ as shown in Table~\ref{tab_kernel}.

By using the joint density~(\ref{eq:gl}) and the results (\ref{eq:ES2})--(\ref{eq:ES1}), the variance~(\ref{eq:vc1}) now boils down to computing two integrals involving the $1$-point and $2$-point correlation functions as, cf.~\cite{Wei17,Wei20BH, BHK21,HW22},
\begin{equation}\label{def:vs}
\mathbb{V}\!\left[S\right]=\mathrm{I_{A}}-\mathrm{I_{B}},
\end{equation}
where
\begin{align}
\mathrm{I_{A}}&=\int_x v^{2}(x)K(x,x)\dd x \label{eq:IA}\\
\mathrm{I_{B}}&=\iint_{x,y} v(x)v(y)K^{2}\left(x,y\right)\dd x\dd y\label{eq:IB}
\end{align}
with $ x,y \in [0,1]$ for case A and  $ x,y \in [-1,1]$ for case B.

We now compute the above two integrals $\mathrm{I_{A}}$ and $\mathrm{I_{B}}$ into the corresponding summation representations. Note that the subsequent calculations of the case A in~(\ref{eq:fgap}) and case B  in~(\ref{eq:fgfp}) are different, which will be performed separately in the following.
\subsubsection*{Case A: Arbitrary number of particles}\label{subsubsec2}
By the definition of the correlation kernel~(\ref{eq:corrk}) and keeping in mind the parity property~(\ref{eq:parity}) of Jacobi polynomials,
 the integral
\begin{equation}
\mathrm{I_{A}}=\int_0^1 v^{2}(x)K(x,x)\dd x
\end{equation} of fermionic Gaussian states with arbitrary number of particles boils down to computing the two parts
\begin{equation}\label{eq:IAex}
\mathrm{I_{A}}=\mathrm{A_{1}}+\mathrm{A_{2}},
\end{equation}
where
\begin{align}
\mathrm{A_{1}}&=\sum_{k=0}^{m-1}\frac{1}{h_{k}}\int_{-1}^{1}\left(\frac{1-x}{2}\right)^{a}\left(\frac{1+x}{2}\right)^{a+2}\ln^{2}\frac{1+x}{2}J_{2k}^{(a,a)}(x)^2\dd x\label{eq:apA1}\\
\mathrm{A_{2}}&=\sum_{k=0}^{m-1}\frac{1}{h_{k}}\int_{-1}^{1}\left(\frac{1-x}{2}\right)^{a+1}\left(\frac{1+x}{2}\right)^{a+1}\ln\frac{1-x}{2}\ln\frac{1+x}{2}J_{2k}^{(a,a)}(x)^2\dd x.\label{eq:apA2}\nonumber\\
\end{align}
Here, we recall that $a=n-m\geq0$ denotes the subsystem difference of fermionic Gaussian states with arbitrary number of particles.

Similarly, the integral
\begin{equation}
\mathrm{I_{B}}=\int_0^1 \int_0^1v(x)v(y)K^{2}\left(x,y\right)\dd x\dd y
\end{equation}
 can be written in terms of the following two integrals
\begin{equation}\label{eq:IBex}
\mathrm{I_{B}}=\mathrm{B_{1}}+\mathrm{B_{2}},
\end{equation}
where
\begin{align}
\mathrm{B_{1}}=&\sum_{k=0}^{m-1}\frac{1}{h_{k}^{2}}\left(\int_{-1}^{1}\left(\frac{1-x}{2}\right)^{a}\left(\frac{1+x}{2}\right)^{a+1}\ln\frac{1+x}{2}J_{2k}^{(a,a)}(x)^2\dd x\right)^{2}\label{eq:apB1}\\
\mathrm{B_{2}}=&\sum_{j=1}^{m-1}\sum_{k=0}^{m-j-1}\frac{2}{h_{k+j}h_{k}}\Bigg(\int_{-1}^{1}\left(\frac{1-x}{2}\right)^{a}\left(\frac{1+x}{2}\right)^{a+1}\ln\frac{1+x}{2}\nonumber\\
&\times J_{2k+2j}^{(a,a)}(x)J_{2k}^{(a,a)}(x)\dd x\Bigg)^{2}\label{eq:apB2}.
\end{align}

Computing the above integrals $\mathrm{A_1}$, $\mathrm{A_2}$, $\mathrm{B_1}$, and $\mathrm{B_2}$ requires the following two integral identities. The first one is
\begin{align}
&\int_{-1}^{1}\left(\frac{1-x}{2}\right)^{a_1}\left(\frac{1+x}{2}\right)^{c}J^{(a_1,b_1)}_{k_1}(x)J^{(a_2,b_2)}_{k_2}(x)\dd x \nonumber \\
&=\frac{2 \left(k_1+1\right)_{a_1}}{\left(b_2+k_2+1\right)_{a_2}}\sum _{i=0}^{k_2} \frac{(-1)^{i+k_2} (i+1)_c \left(i+b_2+1\right)_{a_2+k_2} }{\Gamma \left(k_2-i+1\right) \Gamma \left(a_1+c+i+k_1+2\right)}\nonumber\\
&~~~\!~\times\left(c+i-b_1-k_1+1\right)_{k_1}, \quad \Re(a_1,a_2,b_1,b_2,c)>-1.\label{eq:SIac2}
\end{align}
To show this identity, we first note that the Jacobi polynomial $J_k^{(a,b)}(x)$ supported in $x\in[-1,1]$ admits different representations~\cite{Szego, Forrester}
\begin{align}\label{eq:J1}
J^{(a,b)}_{k}(x)&=\frac{(-1)^{k}(b+1)_{k}}{k!}\sum_{i=0}^{k}\frac{(-k)_{i}(k+a+b+1)_{i}}{(b+1)_{i}\Gamma(i+1)}\left(\frac{1+x}{2}\right)^{i}\\
&=\sum _{i=0}^k \frac{(-1)^i \Gamma (a+k+1) (k+b-i+1)_i}{\Gamma (i+1) \Gamma (a+i+1) \Gamma (k-i+1)}\left(\frac{1-x}{2}\right)^i \left(\frac{1+x}{2}\right)^{k-i}.\nonumber\\ \label{eq:J2}
\end{align}
The identity~(\ref{eq:SIac2}) is then obtained by using the definition~(\ref{eq:J1}) for the polynomial $J^{(a_2,b_2)}_{k_2}$ before applying the well-known integral identity~\cite{Szego, Forrester}
\begin{align}
&\int_{-1}^{1}\left(\frac{1-x}{2}\right)^{a}\left(\frac{1+x}{2}\right)^{c}J^{(a,b)}_{k}(x)\dd x \nonumber \\
&=\frac{2 \Gamma (c+1) (k+1)_a (c-b-k+1)_k}{\Gamma (a+c+k+2)}, \quad \Re(a,b,c)>-1.\label{eq:SIac}
\end{align}
The second identity is
\begin{align}
&\int_{-1}^{1}\left(\frac{1-x}{2}\right)^{d}\left(\frac{1+x}{2}\right)^{c}J^{(a_1,b_1)}_{k_1}J^{(a_2,b_2)}_{k_2}(x)\dd x \nonumber\\
&=\frac{2 \Gamma \left(a_2+k_2+1\right) \Gamma \left(b_2+k_2+1\right)}{\Gamma \left(c+d+k_1+k_2+2\right)}\sum _{i=0}^{k_2} \frac{(-1)^i \Gamma \left(d-a_1+i+1\right)}{\Gamma (i+1) \Gamma \left(a_2+i+1\right)}\nonumber\\
&~~~\!~\times\frac{\Gamma \left(c-b_1-i+k_2+1\right)}{\Gamma \left(k_2-i+1\right) \Gamma \left(b_2-i+k_2+1\right)}\sum _{j=0}^{k_1} \frac{(-1)^j \left(k_1-j+1\right)_{d+i}}{\Gamma (j+1)}\nonumber\\
&~~~\!~\times\frac{\left(c-i+j-b_1-k_1+k_2+1\right)_{b_1+k_1}}{\Gamma \left(d-a_1+i-j+1\right)}, \quad \Re(a_1,a_2,b_1,b_2,c,d)>-1,\label{eq:SIdc2}
\end{align}
which is obtained by using the definition~(\ref{eq:J2}) for the polynomial $J^{(a_2,b_2)}_{k_2}$  before applying the identity~\cite[Equation (62)]{HW22}
\begin{align}
&\int_{-1}^{1}\left(\frac{1-x}{2}\right)^{d}\left(\frac{1+x}{2}\right)^{c}J^{(a,b)}_{k}(x)\dd x  \nonumber \\
&=\frac{2 \Gamma (c-b+1) \Gamma (d-a+1)}{\Gamma (c+d+k+2)}\sum _{i=0}^{k} \frac{(-1)^i \Gamma (c+i+1) \Gamma (d-i+k+1)}{\Gamma (i+1) \Gamma (k-i+1)}\nonumber\\
&~~~\!~\times\frac{1}{ \Gamma (d-a-i+1) \Gamma (c-b+i-k+1)}, \quad \Re(a,b,c,d)>-1\label{eq:SIcd}.
\end{align}

The integral $\mathrm{A_1}$ is now calculated by applying the identity~(\ref{eq:SIac2}), where we need to assign
\begin{equation}\label{eq:sp1}
a_1=b_1=a_2=b_2=a,~~~ k_1=k_2=2k,
\end{equation}
and take twice derivatives of $c$ before setting $c=a+2$. Under the same specialization~(\ref{eq:sp1}), the integral $\mathrm{A_2}$ is calculated by taking  derivatives of both $c$ and $d$ of the identity~(\ref{eq:SIdc2}) before setting $c=d=a+1$, while the integral $\mathrm{B_1}$ is calculated by taking derivative of $c$ of the identity~(\ref{eq:SIac2}) before setting $c=a+1$. According to the result~(\ref{eq:apB2}), the integral $\mathrm{B_2}$ is calculated by specializing
\begin{equation}\label{eq:sp2}
a_1=b_1=a_2=b_2=a,~~~ k_1=2k+2j,~~~ k_2=2k,
\end{equation}
in the identity~(\ref{eq:SIac2}), and taking derivative of $c$ before setting $c=a+1$.

In writing down the summation forms of $\mathrm{A_1}$, $\mathrm{A_2}$, $\mathrm{B_1}$, and $\mathrm{B_2}$, one will have to resolve the indeterminacy by using the following asymptotic expansions of gamma and polygamma functions of negative arguments~\cite{AS72} when $\epsilon \rightarrow 0$,
\label{eq:pgna}\begin{align}
\Gamma(-l+\epsilon)&=\frac{(-1)^{l}}{l!\epsilon}\left(1+\psi_{0}(l+1)\epsilon+o\left(\epsilon^2\right)\right)\label{eq:pgna1}\\
\psi_{0}(-l+\epsilon)&=-\frac{1}{\epsilon}+\psi_{0}(l+1)+\left(2\psi_{1}(1)-\psi_{1}(l+1)\right)\epsilon+o\left(\epsilon^2\right)\label{eq:pgna2}\\
\psi_{1}(-l+\epsilon)&=\frac{1}{\epsilon^{2}}-\psi_{1}(l+1)+\psi_{1}(1)+\zeta(2)+o\left(\epsilon\right).\label{eq:pgna3}
\end{align}

The resulting summation forms of $\mathrm{A_1}$, $\mathrm{A_2}$, $\mathrm{B_1}$, and $\mathrm{B_2}$ are summarized in~(\ref{eq:A1S})--(\ref{eq:B2S}) in Appendix A.

\subsubsection*{Case B: Fixed number of particles}\label{subsubsec2}
By the definition of correlation kernel~(\ref{eq:corrk}), the $\mathrm{I_{A}}$ integral~(\ref{eq:IA}) of case B boils down to computing the two parts
\begin{equation}\label{eq:IAexf}
\mathrm{I_{A}}=\mathcal{A}_{1}+\mathcal{A}_{2},
\end{equation}
where
\begin{align}
\mathcal{A}_1 =&\sum_{k=0}^{m-1}\frac{1}{h_{k}}\int_{-1}^{1}\left(\left(\frac{1+x}{2}\right)^{2}\ln^{2}\frac{1+x}{2}+\left(\frac{1-x}{2}\right)^{2}\ln^{2}\frac{1-x}{2}\right)\nonumber\\
 &\times\left(\frac{1-x}{2}\right)^{a}\left(\frac{1+x}{2}\right)^{b}J_{k}^{(a,b)}(x)^2\dd x\label{eq:fpA1ab}\\
\mathcal{A}_2=&\sum_{k=0}^{m-1}\frac{2}{h_{k}}\int_{-1}^{1}\left(\frac{1-x}{2}\right)^{a+1}\left(\frac{1+x}{2}\right)^{b+1}\ln\frac{1-x}{2}\ln\frac{1+x}{2}J_{k}^{(a,b)}(x)^2\dd x.\nonumber\\
 \label{eq:fpA2}
\end{align}
Due to the parity property~(\ref{eq:parity}), $\mathcal{A}_1$ admits a symmetric structure as
\begin{equation}\label{eq:fpA1}
\mathcal{A}_1=\mathcal{A}_1^{(a,b)}+\mathcal{A}_1^{(b,a)},
\end{equation}
where
\begin{equation}\label{eq:fpA1ab}
\mathcal{A}_1^{(a,b)}=\sum_{k=0}^{m-1}\frac{1}{h_{k}}\int_{-1}^{1}\left(\frac{1-x}{2}\right)^{a}\left(\frac{1+x}{2}\right)^{b+2}\ln^{2}\frac{1+x}{2}J_{k}^{(a,b)}(x)^2\dd x.
\end{equation}
The summations in~(\ref{eq:fpA2}) and (\ref{eq:fpA1ab}) can be evaluated by using the following confluent form of Christoffel-Darboux formula~\cite{Forrester}
\begin{equation}\label{eq:cd}
\sum_{k=0}^{m-1}\frac{J_k^{(a,b)}(x)^2}{h_k}=\alpha_1 J_{m-1}^{(a+1,b+1)}(x)
J_{m-1}^{(a,b)}(x)- \alpha_2 J_{m-2}^{(a+1,b+1)}(x) J_m^{(a,b)}(x),
\end{equation}
where
\begin{align}
\alpha_1&=\frac{m (a+b+m)(a+b+m+1)}{h_{m-1} (a+b+2 m-1)_2}\\
\alpha_2&=\frac{m (a+b+m)^2}{h_{m-1} (a+b+2 m-1)_2}.
\end{align}
Consequently, we have
\begin{align}
\mathcal{A}_1^{(a,b)}=~&\alpha_1\int_{-1}^1 \left(\frac{1-x}{2}\right)^a \left(\frac{1+x}{2}\right)^{b+2} \ln ^2\frac{1+x}{2}J_{m-1}^{(a+1,b+1)}(x) J_{m-1}^{(a,b)}(x) \dd x\nonumber\\
&\!-\alpha_2\int_{-1}^1 \left(\frac{1-x}{2}\right)^a \left(\frac{1+x}{2}\right)^{b+2} \ln ^2\frac{1+x}{2}J_{m-2}^{(a+1,b+1)}(x) J_{m}^{(a,b)}(x) \dd x\nonumber\\
\label{eq:A1cd}\\
\mathcal{A}_2 =~&2\alpha_1\int_{-1}^1 \left(\frac{1-x}{2}\right)^{a+1} \left(\frac{1+x}{2}\right)^{b+1} \ln \frac{1-x}{2}\ln \frac{1+x}{2}\nonumber\\
&\!\times J_{m-1}^{(a+1,b+1)}(x) J_{m-1}^{(a,b)}(x) \dd x\nonumber\\
&\!-2\alpha_2\int_{-1}^1 \left(\frac{1-x}{2}\right)^{a+1} \left(\frac{1+x}{2}\right)^{b+1} \ln \frac{1-x}{2}\ln \frac{1+x}{2}\nonumber\\
&\!\times  J_{m-2}^{(a+1,b+1)}(x) J_{m}^{(a,b)}(x) \dd x.\label{eq:A2cd}
\end{align}

Similarly, the $\mathrm{I_{B}}$ integral~(\ref{eq:IB}) can be written in terms of the following two integrals
\begin{equation}\label{eq:IBexf}
\mathrm{I_{B}}=\mathcal{B}_{1}+\mathcal{B}_{2},
\end{equation}
where
\begin{align}
\mathcal{B}_{1}=&\sum_{k=0}^{m-1}\frac{1}{h_{k}^{2}}\Bigg(\int_{-1}^{1}\left(\frac{1-x}{2}\ln\frac{1-x}{2}+\frac{1+x}{2}\ln\frac{1+x}{2}\right)\left(\frac{1-x}{2}\right)^{a}\left(\frac{1+x}{2}\right)^{b}\nonumber\\
&\times J_{k}^{(a,b)}(x)^{2}\dd x\Bigg)^{2}\\
\mathcal{B}_{2}=&\sum_{j=1}^{m-1}\sum_{k=0}^{m-j-1}\frac{2}{h_{k+j}h_{k}}\Bigg(\int_{-1}^{1}\left(\frac{1-x}{2}\ln\frac{1-x}{2}+\frac{1+x}{2}\ln\frac{1+x}{2}\right)~~~~~~~~~~~~\nonumber\\
&\times\left(\frac{1-x}{2}\right)^{a}\left(\frac{1+x}{2}\right)^{b}J_{k+j}^{(a,b)}(x)J_{k}^{(a,b)}(x)\dd x\Bigg)^{2}.
\end{align}
Note that the integrals in~$\mathcal{B}_{1}$ or~$\mathcal{B}_{2}$ also admit the same symmetric structure as~(\ref{eq:fpA1}) due to the parity property of Jacobi polynomials.
Because of the symmetry, the remaining computation process is similar to that of case A. Specifically, the integrals in $\mathcal{A}_1^{(a,b)}$ are calculated by taking twice derivative of $c$ of the identity~(\ref{eq:SIac2}), where one needs to assign
\begin{equation}\label{eq:sp3}
a_1=a,~~~b_1=b,~~~a_2=a+1,~~~b_2=b+1,~~~k_1=k_2=m-1,
\end{equation}
and
\begin{equation}\label{eq:sp4}
a_1=a,~~~b_1=b,~~~a_2=a+1,~~~b_2=b+1,~~~k_1=m,~~~k_2=m-2,
\end{equation}
respectively corresponding to the first and second integrals in~(\ref{eq:A1cd}) before setting $c=b+2$. By the result~(\ref{eq:A2cd}), the two integrals in $\mathcal{A}_2 $ are calculated by taking derivatives of $c$ and $d$ of identity~(\ref{eq:SIdc2}) respectively with the specializations~(\ref{eq:sp3}) and~(\ref{eq:sp4}), before setting $c=b+1$, $d=a+1$. The integral
\begin{equation}\label{eq:fpB1ab}
\int_{-1}^{1}\frac{1+x}{2}\ln\frac{1+x}{2}\left(\frac{1-x}{2}\right)^{a}\left(\frac{1+x}{2}\right)^{b}J_{k}^{(a,b)}(x)^{2}\dd x
\end{equation}
in $\mathcal{B}_{1}$ is calculated by specializing
\begin{equation}\label{eq:sp5}
a_1=a_2=a,~~~b_1=b_2=b,~~~ k_1=k_2=k
\end{equation}
in the identity~(\ref{eq:SIac2}), and taking derivative of $c$ before setting $c=b+1$.
The integral
\begin{equation}
\int_{-1}^{1}\frac{1+x}{2}\ln\frac{1+x}{2}\left(\frac{1-x}{2}\right)^{a}\left(\frac{1+x}{2}\right)^{b}J_{k+j}^{(a,b)}(x)J_{k}^{(a,b)}(x)\dd x
\end{equation}
in $\mathcal{B}_{2}$ is calculated by specializing
\begin{equation}\label{eq:sp6}
a_1=a_2=a,~~~b_1=b_2=b~~~ k_1=k+j,~~~k_2=k
\end{equation}
in the identity~(\ref{eq:SIac2}), and taking derivative of $c$ before setting $c=b+1$.

After resolving the indeterminacy of gamma and polygamma functions by using~(\ref{eq:pgna1})--(\ref{eq:pgna3}), one arrives at the summation representations~(\ref{eq:fA1S})--(\ref{eq:fB2S}) as listed in Appendix A.

\subsection{Tools for simplification of summations}\label{subsec3.2}
The major task in obtaining the variance formulas in propositions \ref{propav} and \ref{propfv} is to simplify the summation representations~(\ref{eq:A1S})--(\ref{eq:fB2S}) into the closed-form results~(\ref{f:apv}) and~(\ref{f:fpv}). We summarize the necessary simplification tools in this section, where the existing simplification framework is briefly reviewed in Section~\ref{subsubsec 3.2.1} and the new simplification framework is introduced in Section~\ref{subsubsec 3.2.2}.
\subsubsection{Existing simplification framework}\label{subsubsec 3.2.1}
The existing simplification tools have been utilized in various moments calculations over different ensembles, including the Hilbert-Schmidt ensemble~\cite{Wei17,Wei20,HWC21,Wei22}, the Bures-Hall ensemble~\cite{Wei20BH,LW21}, and the fermionic Gaussian ensemble~\cite{HW22}.

In the existing framework, we mainly have two types of finite sum identities. The first type is of the form
\begin{equation}\label{eq:ftp}
\sum_{i=1}^{m}i^c \psi_{j_1}^{b_1}(i+a_1)\psi_{j_2}^{b_2}(i+a_2)\cdots \psi_{j_m}^{b_m}(i+a_m),
\end{equation}
where $a$, $b$, $c$, $j$ are non-negative integers. The main idea in deriving the identities of this type of sums is to change the summation orders and make use of the obtained lower order summation formulas in a recursive manner.
For example, the summation
\begin{equation}\label{eq:type120}
\sum_{i=1}^{m} \psi_{0}^2(i+a)
\end{equation}
will be derived by using the identity~(\ref{eq:B1}) of the lower order summation $\sum_{i=1}^{m} \psi_{0}(i+a)$. Specifically, by using the finite sum form of the digamma function
\begin{equation}\label{eq:pl0}
\psi_{0}(l)=-\gamma+\sum_{k=1}^{l-1}\frac{1}{k},
\end{equation}
the summation~(\ref{eq:type120}) can be rewritten as
\begin{equation}\label{eq:type123}
\psi_{0}(a)\sum_{i=1}^{m}\psi_{0}(i+a)+\sum_{j=1}^{m}\frac{1}{a+j-1}\sum_{i=j}^{m}\psi_{0}(i+a),
\end{equation}
where we have changed the summation order of the double sum. The remaining sums can be simplified by using the lower order identity~(\ref{eq:B1}), leading to the result in~(\ref{eq:B30}).

The second type of summations is of the form
\begin{equation}\label{eq:stp}
S_f(m,n)=\sum_{i=1}^{m} \frac{(n-i)!}{(m-i)!}f(i),\qquad m\leq n,
\end{equation}
where $f(i)$ can be the product of polygamma and rational functions in $i$. A well-known result of the second type summation is the identity
\begin{equation}\label{eq:type20}
\sum _{i=1}^{m} \frac{(n-i)!}{(m-i)!}=\frac{n!}{(m-1)! (n-m+1)},
\end{equation}
which is a special case of the Chu-Vandermonde identity~\cite{Luke}. In the existing simplification framework, the identity~(\ref{eq:type20}) is a fundamental result in obtaining several other second type summations. For example, when
\begin{equation}
f(i)=\frac{1}{i},
\end{equation}
 the summation
\begin{equation}\label{eq:resumex1}
S_{1/i}(m,n)=\sum _{i=1}^{m} \frac{(n-i)!}{ (m-i)!}\frac{1}{i}
\end{equation}
is computed by first obtaining the recurrence relation
\begin{equation}\label{eq:type21}
S_{1/i}(m,n)=\frac{n}{m}S_{1/i}(m-1,n-1)+\frac{n-m}{m}\sum _{i=1}^{m}\frac{(n-1-i)!}{ (m-i)!}.
\end{equation}
After recurring $m$ times, and using the existing result~(\ref{eq:type20}), one obtains the closed-form result of the summation (\ref{eq:resumex1}) as listed in~(\ref{eq:B21}).

\subsubsection{New simplification framework}\label{subsubsec 3.2.2}
The existing simplification framework is useful in simplifying the summations in~(\ref{eq:A1S}),~(\ref{eq:B1S}),~(\ref{eq:fA1S}), and (\ref{eq:fB1S}). However, the framework is insufficient to simplify some of the summations in~(\ref{eq:A2S}),~(\ref{eq:B2S}),~(\ref{eq:fA2S}), and (\ref{eq:fB2S}). The main technique difficulty is that these summations, after exhausting all the possibilities of changing the order of summations, are not related to the first or second type summations of the existing framework. To convert these sums into the ones of the existing framework, the following new simplification framework is needed.

The first technique in the new framework is what we refer to as ``dummy summation". 
 The idea of the dummy summation is as follows. For a summation $F$,
\begin{equation}\label{eq:series1}
F=\sum_{i}f_i,
\end{equation}
where
\begin{equation}
\sum_{i}=\sum_{i_1}\sum_{i_2}\dots\sum_{i_a}
\end{equation}
denotes a finite nested sum of $a$ indexes $i=\{i_1,i_2,\dots,i_a\}$. Each index $i_k$, $k=1,2,\dots, a$ may depend on its previous ones  $i_1, i_2,\dots, i_{k-1}$.
For the summation $F$ that is not simplifiable under the existing framework by any changes of the order of summations, one may interpret the summand $f_i$ into an additional nested finite sum over a set of new indexes $j=\{j_1,j_2,\dots, j_b\}$ as
\begin{equation}\label{eq:dummya}
f_i=\sum_j g_{i, j},
\end{equation}
such that the resulting representation
\begin{equation}\label{eq:series2}
F=\sum_{i,j} g_{i, j}
\end{equation}
admits further simplifications by using the existing tools of first and second type sums~(\ref{eq:ftp}) and~(\ref{eq:stp}) after appropriately changing some of the summation orders. Note that the indexes in the set $j$ in~(\ref{eq:dummya}) may depend on the ones in the set $i$.

To illustrate this new technique, we consider the following two examples that will be utilized in the simplifications in Section \ref{subsec3.3}.
The first example is the summation
\begin{equation}\label{eq:dummye1}
F=\sum_{i_1=1}^m f_{i_1},
\end{equation}
where
\begin{equation}\label{eq:dummye1sd}
f_{i_1}= i_1 \psi_0(i_1+a).
\end{equation}
In this case, we  have $i=\{i_1\}$ in the definition~(\ref{eq:series1}).
By interpreting $i_1$ in the summand~(\ref{eq:dummye1sd}) as a dummy sum
\begin{equation}\label{eq:dummye1i}
i_1= \underbrace{1+1+\dots+1}_{i_1},
\end{equation}
the original single sum in~(\ref{eq:dummye1}) now becomes a double sum
\begin{equation}\label{eq:dummye1d}
F=\sum_{i_1=1}^m \sum_{j_1=1}^{i_1} g_{i_1,j_1},
\end{equation}
where
\begin{equation}
g_{i_1,j_1}=\psi_0(i_1+a)
\end{equation}
with $j=\{j_1\}$ in the definition~(\ref{eq:series2}). After changing the summation order in~(\ref{eq:dummye1d}) as
\begin{equation}\label{eq:dummy_ex1}
F=\sum_{j_1=1}^m \sum_{i_1=j_1}^m \psi_0(i_1+a),
\end{equation}
the sum over $i_1$ can be simplified by using an existing identity~(\ref{eq:B1}), leading to
\begin{equation}\label{eq:dummy_ex11}
F=\frac{1-a}{2} \sum _{j_1=1}^m \psi _0\left(j_1+a\right)+\frac{m}{4}\left(2 (a+m) \psi _0(a+m+1)-m-1\right).
\end{equation}
Simplifying the single summation in~(\ref{eq:dummy_ex11}) by the identity~(\ref{eq:B1}) directly gives the result listed in~(\ref{eq:B3}). In the above example, the dummy summation~(\ref{eq:dummye1i}) converts the original sum~(\ref{eq:dummye1}) into a double sum~(\ref{eq:dummye1d}), which appears more complicated but is simplifiable using an available identity~(\ref{eq:B1}) of the existing framework. Note also that the dummy summation technique may not be critical in this example, which may be derived in other ways.

In the second example, we introduce a dummy summation~(\ref{eq:dummya})  that is crucial as a subroutine in simplifying the summations in~(\ref{eq:A2S}) and~(\ref{eq:fA2S}). The essential idea in creating this dummy summation is to interpret a Gamma ratio
\begin{equation}
\frac{\Gamma (i)}{\Gamma (c+i)}=\frac{1}{i(i+1)\cdots(i+c-1)},\qquad c,i\in \mathbb{Z}^{+}
\end{equation}
as its partial fraction decomposition
\begin{equation}
a_1\frac{1}{i} + a_2 \frac{1}{i+1} +\dots + a_c \frac{1}{i+c-1}
\end{equation}
with  $a_j,j=1,2,\dots, c$ denoting the coefficients of the decomposition. The dummy summation that corresponds to the above interpretation is
\begin{equation}\label{eq:dumys}
\frac{\Gamma (i)}{\Gamma (c+i)}=\sum_{j=1}^c a_j\frac{1}{i+j-1},
\end{equation}
where
\begin{equation}
a_j= \frac{(-1)^{j+1}}{\Gamma (j) \Gamma (c-j+1)}.
\end{equation}
The coefficients $a_j$ are computed by evaluating a unit-argument hypergeometric function~\cite{Brychkov08}
\begin{equation}
\sum _{j=1}^c a_j\frac{1}{i+j-1}=\frac{\!_2F_1(-c,i-1;i;1)}{\Gamma (c+1)},
\end{equation}
where one utilizes the well-known identity~\cite{Brychkov08}
\begin{equation}\label{eq:2f1}
_2F_1(a,b;c;1)=\frac{\Gamma(c)\Gamma(c-a-b)}{\Gamma(c-a)\Gamma(c-b)}, \quad \Re(c)>\Re(a+b).
\end{equation}
The dummy summation in~(\ref{eq:dumys}) is critical in proving the lemmas \ref{lemma2}--\ref{lemma4} as discussed later in this section. By utilizing these lemmas, the summations in~(\ref{eq:A2S}) and~(\ref{eq:fA2S}) involving ratios of Gamma functions then become simplifiable under the existing simplification framework as will be shown in Section~\ref{subsec3.3}.

The second technique in the new framework is what we refer to as the ``re-summation" technique. The idea of this new technique is as follows. For a finite summation $G_m$ that appears not summable under the existing framework with $m$ denoting one of its parameters or the finite upper limit of the summation. The re-summation technique aims to find alternative forms of the summation $G_m$ to reveal the potential cancellations with other sums, such that the remaining terms can be simplified by using the existing tools. Specifically, the re-summation of $G_m$ can be generated by iterating a suitably chosen recurrence relation
\begin{equation}\label{eq:re-sumr}
G_m=c_{m-1}G_{m-1}+r_{m-1},
\end{equation}
where $c_{i}$ denotes the coefficient and $r_{i}$ is the remainder of the recurrence relation. Here, the remainder can contain summations, see the example ~(\ref{eq:re-sumer}) below. Each iteration is to replace the term $G_{m-i}$ with its previous one $G_{m-i-1}$. Keep iterating until $G_{m-i}$ vanishes, we then obtain an alternative form of $G_m$, which is considered a useful one if it facilitates the cancellation with other sums.

To illustrate the re-summation technique, we consider the difference of two summations
\begin{equation}\label{ex:resumlc}
\mathcal{G}=G_m-G^\prime_m,
\end{equation}
where
\begin{align}
G_m&=\sum _{i=1}^m \frac{(-1)^i \Gamma (a-i+m+1)}{\Gamma (m-i+1)(a-i+2 m+1)}\label{ex:resums1}\\
G^\prime_m&=\sum _{i=1}^m \frac{(-1)^i \Gamma (a-i+m+1)}{\Gamma (m-i+1)i}\label{ex:resums2}.
\end{align}
We first notice that the two summations in~(\ref{ex:resums1}) and (\ref{ex:resums2}) above can not be simplified individually by using the identities of the existing framework. In addition, these two summations do not appear to cancel directly in the combination~(\ref{ex:resumlc}). By iterating $m$ times the following tailor made recurrence relations (in terms of the various choices of coefficients and reminders), cf.~(\ref{eq:re-sumr}),
\begin{align}
G_m&=c_{m-1}G_{m-1}+r_{m-1}\\
G^\prime_m&=c_{m-1}G^\prime_{m-1}+r^\prime_{m-1},
\end{align}
where
\begin{align}
c_{m-1}&=\frac{a+m}{m}\\
r_{m-1}&=-\frac{a}{m}\sum _{i=1}^{m+1} \frac{(-1)^i \Gamma (a-i+m+1)}{\Gamma (m-i+2)}+\frac{a+m}{m}\left(\frac{\Gamma (a+m)}{\Gamma (m) (a+2 m-1)}\right.\nonumber\\
&~~~\!~\!\left.-\frac{\Gamma (a+m+1)}{\Gamma (m+1) (a+2 m)}\right)\label{eq:re-sumer}\\
r^\prime_{m-1}&=\frac{a}{m}\sum _{i=1}^m \frac{(-1)^i \Gamma (a-i+m)}{\Gamma (m-i+1)},
\end{align}
we obtain the re-summation results
\begin{align}
G_m=~\!&\frac{a \Gamma (a+m+1)}{\Gamma (m+1)}\sum _{i=1}^m \sum _{j=1}^{i+1} \frac{(-1)^{j+1}\Gamma (i) \Gamma (a+i-j+1)}{\Gamma (i-j+2) \Gamma (a+i+1)}\nonumber\\
&+\frac{\Gamma (a+m+1)}{\Gamma (m+1)} \sum _{i=1}^m \left(\frac{1}{a+2 i-1}-\frac{a+i}{i (a+2 i)}\right)\label{eq:resumr1}\\
G^\prime_m=~\!&\frac{a \Gamma (a+m+1)}{\Gamma (m+1)} \sum _{i=1}^m \sum _{j=2}^{i+1} \frac{(-1)^{j+1} \Gamma (i) \Gamma (a+i-j+1)}{\Gamma (i-j+2) \Gamma (a+i+1)}\label{eq:resumr2}.
\end{align}
The relationship between the two summations now become obvious. After inserting the results~(\ref{eq:resumr1}) and (\ref{eq:resumr2}) into~(\ref{ex:resumlc}), the cancellation occurs directly between the two double summations in~(\ref{eq:resumr1}) and (\ref{eq:resumr2}). The remaining terms give the closed-form result
\begin{align}\label{eq:lemma5c=0}
&\sum _{i=1}^m \frac{(-1)^i \Gamma (a-i+m+1)}{\Gamma (m-i+1)}\left(\frac{1}{ a-i+2 m+1}-\frac{1}{i}\right)\nonumber\\
& =\frac{\Gamma (a+m+1)}{\Gamma (m+1)} \left(\psi _0(a+2 m+1)-\psi _0(a+m+1)\right),
\end{align}
which is a special case of Lemma \ref{lemma5} when $c=0$. This result is also utilized in simplifying the double summations in~(\ref{eq:fB2S}). Note that the emphasis here is the inner cancellation among multiple sums when utilizing re-summation technique. On the other hand, this technique concerning only one summation has been studied in~\cite{Wei17,Wei20,HWC21}.


Using the two techniques in the new simplification framework, we obtain the following six lemmas. More precisely, proving the Lemma~\ref{lemma1} utilizes the re-summation technique. The lemmas \ref{lemma2}--\ref{lemma4} are obtained by using the dummy summation together with the result of Lemma~\ref{lemma1}. The Lemma~\ref{lemma5} and Lemma~\ref{lemma6} are established by using the re-summation technique. In the six lemmas obtained by the new framework, each single sum is  converted to another one, which will be applied in Section \ref{subsec3.3} to convert summations involving multiple ratios of Gamma functions into the ones that are simplifiable by the existing framework.

\begin{lemma}\label{lemma1}
For any complex numbers $a,b,c\notin \mathbb{Z}^{-}$,
we have
\begin{align}\label{eq:lemma1}
& \sum _{i=1}^m \frac{1}{\Gamma (i) \Gamma (a+i) \Gamma (m+1-i) \Gamma (m+b+1-i)(c+i) }\nonumber \\
 &=\frac{1}{\Gamma (b+m) \Gamma (c+m+1) \Gamma (a+b+m)} \sum _{i=1}^m \frac{\Gamma (c-i+m+1) \Gamma (a+b-i+2 m)}{\Gamma (m-i+1) \Gamma (a-i+m+1)}.
\end{align}
\end{lemma}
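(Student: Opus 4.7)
My plan is to prove Lemma~\ref{lemma1} by induction on $m$, using the re-summation technique of Section~\ref{subsubsec 3.2.2}. The base case $m=1$ is immediate by direct substitution: both sides collapse to $\frac{1}{\Gamma(a+1)\Gamma(b+1)(c+1)}$.

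For the inductive step, I denote the left-hand side by $L_m$ and aim to derive a recurrence relating $L_m$ to $L_{m-1}$. The idea is to separate the $i=m$ term of $L_m$ and, for the remaining $1\leq i\leq m-1$ indices, use the factorizations $\Gamma(m+1-i) = (m-i)\Gamma(m-i)$ and $\Gamma(m+b+1-i) = (m+b-i)\Gamma(m+b-i)$. The crucial tool is the partial fraction
\begin{equation*}
\frac{1}{(m-i)(m+b-i)} = \frac{1}{b}\left(\frac{1}{m-i} - \frac{1}{m+b-i}\right),
\end{equation*}
which converts the restricted sum into two sums of the $L_{m-1}$-type but with shifted parameters. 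Iterating the resulting recurrence $m$ times --- this is the re-summation step, exactly parallel to the derivation culminating in equation~(\ref{eq:lemma5c=0}) --- collects a telescoping sequence of boundary contributions whose accumulated form matches the right-hand side after the reindexing $j=m-i$.

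In parallel, I would verify that the right-hand side, rewritten as
\begin{equation*}
R_m = \frac{1}{\Gamma(b+m)\Gamma(c+m+1)\Gamma(a+b+m)}\sum_{j=0}^{m-1}\frac{\Gamma(c+j+1)\Gamma(a+b+m+j)}{\Gamma(j+1)\Gamma(a+j+1)},
\end{equation*}
satisfies the matching recurrence: this follows from $\Gamma(a+b+m+j) = (a+b+m+j-1)\Gamma(a+b+m+j-1)$ together with an elementary split of the index range. Combining the two recurrences with the base case closes the induction.

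The main obstacle is the bookkeeping in the re-summation. The partial-fraction split introduces spurious $1/b$ denominators in the intermediate sums, which must cancel pairwise between the two pieces produced at each iteration --- structurally analogous to the cancellation between the double sums in~(\ref{eq:resumr1}) and~(\ref{eq:resumr2}) in the example preceding the lemma. The tailoring of the coefficient $c_{m-1}$ and remainder $r_{m-1}$ in the chosen recurrence $L_m = c_{m-1}L_{m-1} + r_{m-1}$ must be done so that after $m$ iterations the surviving boundary terms align exactly with the $\Gamma$-ratio structure required by $R_m$; obtaining this alignment, rather than the iteration itself, is the key technical step.
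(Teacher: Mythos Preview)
Your partial-fraction choice does not do what you claim. Writing out the split
\[
\frac{1}{\Gamma(m+1-i)\,\Gamma(m+b+1-i)}
=\frac{1}{b}\!\left(\frac{1}{\Gamma(m+1-i)\,\Gamma(m+b-i)}-\frac{1}{\Gamma(m-i)\,\Gamma(m+b+1-i)}\right),
\]
the second piece is indeed $L_{m-1}$ with $b\mapsto b+1$, but the first piece still carries $\Gamma(m+1-i)$ and is therefore $L_m$ with $b\mapsto b-1$, not an $L_{m-1}$-type sum. What you actually obtain is the two-parameter relation
\[
L_m(b)=\frac{1}{b}\Bigl(L_m(b-1)-L_{m-1}(b+1)\Bigr)+(\text{boundary terms}),
\]
which cannot be closed by induction on $m$ alone, and iterating in the $b$-direction does not terminate since $b$ is a generic complex parameter. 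The $1/b$ factors you flag as ``spurious'' are a symptom of this structural problem: the identity is required to hold at $b=0$, so any scheme that introduces such factors without an explicit cancellation mechanism is already off track.

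The paper's proof uses a different decomposition. One factors only $\Gamma(m+1-i)=(m-i)\Gamma(m-i)$ and pairs the resulting $(m-i)$ with the factor $(c+i)$ that is already present:
\[
\frac{1}{(m-i)(c+i)}=\frac{1}{c+m}\!\left(\frac{1}{c+i}+\frac{1}{m-i}\right).
\]
The $\tfrac{1}{c+i}$ piece reproduces $G_{m-1}$ (up to a harmless shift $b\mapsto b+1$) with coefficient $\tfrac{1}{c+m}$, while the $\tfrac{1}{m-i}$ piece recombines with $\Gamma(m-i)$ and yields a sum \emph{without} the factor $1/(c+i)$. That remaining sum is precisely a terminating ${}_2F_1$ at unit argument and is evaluated in closed form by~(\ref{eq:2f1}), giving the remainder~(\ref{eq:lemma1r0}). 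Iterating this clean recurrence $m$ times produces the right-hand side directly. The key idea you are missing is not the re-summation mechanics but the choice of which pair of linear factors to split: pairing $(m-i)$ with $(c+i)$ rather than with $(m+b-i)$ is what makes the remainder Gauss-summable and the recursion one-dimensional.
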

\begin{proof}
Proving the Lemma~\ref{lemma1} uses the re-summation technique. The  left side of the identity~(\ref{eq:lemma1}) can be rewritten as
\begin{equation}\label{eq:lemma1n}
G_{m}=\sum _{i=1}^{m} \frac{1}{\Gamma (i) \Gamma (a+i) \Gamma (m-i) \Gamma (m+b+1-i)(m-i)(c+i) }.
\end{equation}
In (\ref{eq:lemma1n}), by performing a partial fraction decomposition
\begin{equation}
\frac{1}{(m-i)(c+i)}=\frac{1}{c+m}\left(\frac{1}{c+i}+\frac{1}{m-i}\right),
\end{equation}
we obtain the recurrence relation, cf.~(\ref{eq:re-sumr}),
\begin{equation}\label{eq:lemma1r}
G_m=c_{m-1}G_{m-1}+r_{m-1},
\end{equation}
where
\begin{align}
c_{m-1}=&\frac{1}{c+m}\\
r_{m-1}=&\frac{1}{c+m}\sum _{i=1}^m \frac{1}{\Gamma (i) \Gamma (a+i) \Gamma (m-i+1) \Gamma (m+b+1-i)}\label{eq:lemma1rm}.
\end{align}
The summation in~(\ref{eq:lemma1rm}) is related to a hypergeometric function $\, _2F_1(1-b-m,1-m;a+1;1)$ that admits a closed-form representation by using the identity~(\ref{eq:2f1}). Therefore, we have
\begin{align}
&\sum _{i=1}^m \frac{1}{\Gamma (i) \Gamma (a+i) \Gamma (m-i+1) \Gamma (m+b+1-i)}\nonumber\\
&=\frac{\Gamma (a+b+2m-1)}{\Gamma (m) \Gamma (a+m) \Gamma (b+m) \Gamma (a+b+m) }, \label{eq:ofgi}
\end{align}
and
\begin{equation}\label{eq:lemma1r0}
r_{m-1}=\frac{\Gamma (a+b+2m-1)}{(c+m)\Gamma (m) \Gamma (a+m) \Gamma (b+m) \Gamma (a+b+m) }.
\end{equation}
Iterate $m$ times the recurrence relation~(\ref{eq:lemma1r}), the desired identity~(\ref{eq:lemma1}) is established. This completes the proof of Lemma~\ref{lemma1}.
\end{proof}

\begin{lemma}\label{lemma2}
For any complex numbers $a,b\notin \mathbb{Z^{-}}$, and any $c\in \mathbb{Z^{+}}$, we have
\begin{align}
&\sum _{i=1}^m \frac{1}{\Gamma (c+i) \Gamma (a+i) \Gamma (m+1-i) \Gamma (m+b+1-i) }\nonumber\\
&=\frac{1}{\Gamma (m+b) \Gamma (m+a+b) \Gamma (c) \Gamma (m+c)}\nonumber\\
&~~~~\times\sum _{i=1}^m \frac{\Gamma (m+a+b+i-1) \Gamma (m+c-i)}{\Gamma (a+i) \Gamma (m-i+1)}\label{eq:lemma2}.
\end{align}
\end{lemma}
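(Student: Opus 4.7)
The plan is to prove Lemma~\ref{lemma2} by combining the dummy summation technique~(\ref{eq:dumys}) with Lemma~\ref{lemma1}. First, since $c\in\mathbb{Z}^{+}$, the factor $1/\Gamma(c+i)$ on the left-hand side of~(\ref{eq:lemma2}) admits the partial-fraction expansion
\[
\frac{1}{\Gamma(c+i)}=\frac{1}{\Gamma(i)}\sum_{j=1}^{c}\frac{(-1)^{j+1}}{\Gamma(j)\Gamma(c-j+1)(i+j-1)},
\]
which is precisely the dummy summation~(\ref{eq:dumys}). Substituting this into the left-hand side of~(\ref{eq:lemma2}) converts the original single $i$-sum into a nested double sum over $i$ and $j$.

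After interchanging the order of summation, the inner $i$-sum is exactly of the form appearing on the left-hand side of Lemma~\ref{lemma1}, but with the parameter $c$ of that lemma replaced by $j-1$. Applying Lemma~\ref{lemma1} for each $j$ and then interchanging the two summations once more brings the expression to
\[
\frac{1}{\Gamma(b+m)\Gamma(a+b+m)}\sum_{i=1}^{m}\frac{\Gamma(a+b-i+2m)}{\Gamma(m-i+1)\Gamma(a-i+m+1)}\,T_{i},
\]
where the residual $j$-sum is
\[
T_{i}=\sum_{j=1}^{c}\frac{(-1)^{j+1}\,\Gamma(j-i+m)}{\Gamma(j)\Gamma(c-j+1)\Gamma(j+m)}.
\]

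The next step is to evaluate $T_{i}$ in closed form. Shifting the index via $k=j-1$ and rewriting each Gamma ratio as a Pochhammer symbol identifies $T_{i}$ as a terminating Gauss hypergeometric series at unit argument,
\[
T_{i}=\frac{\Gamma(m-i+1)}{\Gamma(c)\Gamma(m+1)}\,{}_{2}F_{1}(1-c,m-i+1;m+1;1),
\]
which by~(\ref{eq:2f1}) equals $\Gamma(m-i+1)\Gamma(c+i-1)/[\Gamma(c)\Gamma(m+c)\Gamma(i)]$. Plugging this back and cancelling the common factor $\Gamma(m-i+1)$ leaves a single $i$-sum whose summand matches the right-hand side of~(\ref{eq:lemma2}) after the index reflection $i\mapsto m+1-i$. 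I do not expect a serious obstacle: every step is either an interchange of finite summations, a direct application of Lemma~\ref{lemma1}, or the Gauss summation~(\ref{eq:2f1}). The only bookkeeping care required is to rewrite the $\Gamma$-ratios as Pochhammer symbols consistently when recognising the $j$-sum as a summable $\,{}_{2}F_{1}$ in the terminating regime where $1-c$ is a non-positive integer, for which both sides of~(\ref{eq:2f1}) agree as polynomial identities in the remaining parameters.
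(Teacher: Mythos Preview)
Your proof is correct and follows essentially the same route as the paper: the partial-fraction (dummy) summation~(\ref{eq:dumys}) for $1/\Gamma(c+i)$, application of Lemma~\ref{lemma1} with $c\to j-1$ to the inner $i$-sum, evaluation of the residual $j$-sum as a terminating ${}_{2}F_{1}$ via~(\ref{eq:2f1}), and the final index reflection $i\mapsto m+1-i$. Every intermediate expression you write down matches the paper's, so there is nothing to add.
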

\begin{proof}
We prove Lemma~\ref{lemma2} by using the dummy summation technique.
Denote
\begin{equation}\label{eq:lemma2f}
F=\sum _{i=1}^m f_i
\end{equation}
with
\begin{equation}
f_i= \frac{1}{\Gamma (c+i) \Gamma (a+i) \Gamma (m-i+1) \Gamma (b+m-i+1)}.
\end{equation}
We first introduce an additional summation in~(\ref{eq:lemma2f}) by replacing $\Gamma (i)/\Gamma (c+i)$ with the summation form in~(\ref{eq:dumys}).
The summation $F$ now becomes
\begin{equation}\label{eq:lemma2d}
F=\sum _{i=1}^m\sum _{j=1}^c g_{i,j}
\end{equation}
with
\begin{equation}
g_{i,j} =\frac{(-1)^{j+1}}{\Gamma (j) \Gamma (c-j+1)(i+j-1)} \frac{1}{\Gamma (i) \Gamma (a+i) \Gamma (m-i+1)\Gamma (b+m-i+1) }.\label{eq:lemma2ds}
\end{equation}
In the double summation~(\ref{eq:lemma2d}), we apply the Lemma~\ref{lemma1} to first evaluate the sum over $i$, leading to
\begin{align}
F=&\frac{1}{\Gamma (b+m) \Gamma (a+b+m)}\sum _{i=1}^m \frac{\Gamma (a+b-i+2 m)}{\Gamma (m-i+1) \Gamma (a-i+m+1)}\nonumber\\
&\times \sum _{j=1}^c \frac{(-1)^{j+1} \Gamma (m-i+j)}{\Gamma (j) \Gamma (c-j+1)\Gamma (j+m)},\label{eq:lemma22}
\end{align}
where it is now possible to evaluate the sum over j as
\begin{align}
&\sum _{j=1}^c \frac{(-1)^{j+1} \Gamma (m-i+j)}{\Gamma (j) \Gamma (c-j+1) \Gamma (j+m)}\nonumber\\
&=\frac{\Gamma (m-i+1)}{\Gamma (c) \Gamma (m+1)} \, _2F_1(1-c,m-i+1;m+1;1).\label{eq:lemma2c}\\
&=\frac{\Gamma (c+i-1) \Gamma (m-i+1)}{\Gamma (c) \Gamma (i) \Gamma (c+m)}.\label{eq:lemma2c1}
\end{align}
The result (\ref{eq:lemma2c1}) is obtained by using the identity~(\ref{eq:2f1}) to evaluate the unit argument hypergeometric function in~(\ref{eq:lemma2c}). Inserting~(\ref{eq:lemma2c1}) into~(\ref{eq:lemma22}) and shifting the index $i\to m-i+1$, Lemma \ref{lemma2} is proved.
\end{proof}

\begin{lemma}\label{lemma3}
For any complex numbers $a,b\notin \mathbb{Z^{-}}$, and any $c\in \mathbb{Z^{+}}$, we have
\begin{align}
 &\sum _{i=1}^m \frac{1}{\Gamma (c+i) \Gamma (a+i) \Gamma (m-i+1) \Gamma (b-i+m+1) i }\nonumber \\
 &=\frac{1}{\Gamma (a) \Gamma (a+m) \Gamma (1+b+m) \Gamma (b+c+m)}\sum _{i=1}^m \frac{\Gamma (a-i+m) \Gamma (b+c+i+m)}{ \Gamma (c+i) \Gamma (m-i+1)i}\nonumber\\
 &~~~\!~+\frac{\psi_0 (a)-\psi_0 (a+m)}{\Gamma (a) \Gamma (c) \Gamma (m+1) \Gamma (b+m+1)}.\label{eq:lemma3}
\end{align}
\end{lemma}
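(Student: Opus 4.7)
My plan is to extend the dummy-summation strategy used in the proof of Lemma~\ref{lemma2} to accommodate the extra factor $1/i$ present in the summand on the left side of~(\ref{eq:lemma3}). First, I would apply the dummy-summation identity~(\ref{eq:dumys}) to rewrite $1/\Gamma(c+i)$ on the left side, and combine the factor $1/i$ with the resulting $1/\Gamma(i)$ into $1/\Gamma(i+1)$. This converts the left side of~(\ref{eq:lemma3}) into the double sum
\begin{equation}
\sum_{j=1}^{c}\frac{(-1)^{j+1}}{\Gamma(j)\Gamma(c-j+1)} \sum_{i=1}^{m} \frac{1}{\Gamma(i+1)\Gamma(a+i)\Gamma(m+1-i)\Gamma(m+b+1-i)(i+j-1)},
\end{equation}
so that the core task becomes closing the inner $i$-sum for every admissible $j$.

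The inner sum splits naturally into two regimes. For $j\geq 2$, after writing $\Gamma(i+1)=i\Gamma(i)$ and applying the partial-fraction identity $\frac{1}{i(i+j-1)} = \frac{1}{j-1}\bigl(\frac{1}{i}-\frac{1}{i+j-1}\bigr)$, the inner sum reduces to two instances of Lemma~\ref{lemma1}, one with its parameter specialised to $0$ and one with its parameter specialised to $j-1$. For the boundary index $j=1$, the same manipulation yields $1/i^2$ inside the summand, and Lemma~\ref{lemma1} is not directly applicable. I would handle this case by introducing an auxiliary parameter $\epsilon$, applying Lemma~\ref{lemma1} to the sum
\begin{equation*}
\sum_{i=1}^{m} \frac{1}{\Gamma(i)\Gamma(a+i)\Gamma(m+1-i)\Gamma(m+b+1-i)(\epsilon+i)},
\end{equation*}
and then taking $-\dd/\dd\epsilon$ at $\epsilon = 0$. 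This differentiation produces a digamma-weighted finite sum whose digamma values are $\psi_0(m+1-i)$ and $\psi_0(m+1)$.

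Having closed the inner sum for every $j$, I would reassemble the pieces, swap the order of the $i$ and $j$ summations, and evaluate the inner $j$-sum in closed form via the Gauss identity~(\ref{eq:2f1}), in the same manner as in the proof of Lemma~\ref{lemma2} at~(\ref{eq:lemma2c1}). After the reindexing $i \to m-i+1$, the principal double-sum contribution should collapse into the main sum on the right side of~(\ref{eq:lemma3}), while the contributions originating from the $c=0$ specialisation in the $j\geq 2$ terms and from the $j=1$ derivative term should combine into a single scalar multiple of $\psi_0(a)-\psi_0(a+m)$.

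The main obstacle is the final bookkeeping in the $j=1$ branch: the derivative trick naturally produces digamma evaluations at arguments of the form $m+1-i$, whereas the target correction term on the right side is $\psi_0(a)-\psi_0(a+m)$, expressed in the variable $a$. Bridging this gap will require an appropriate pairing with the $c=0$ output from the $j\geq 2$ terms, together with the telescoping relation $\psi_0(a+m)-\psi_0(a) = \sum_{k=0}^{m-1}\tfrac{1}{a+k}$ and the supporting finite-sum identities catalogued in Appendix~\ref{secB}. Once that cancellation is established, the remaining steps are routine algebraic simplifications involving ratios of Gamma functions.
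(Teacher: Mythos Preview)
Your proposal is correct and follows essentially the same route as the paper: apply the dummy summation~(\ref{eq:dumys}) to $\Gamma(i)/\Gamma(c+i)$, reduce the inner $i$-sum via Lemma~\ref{lemma1}, and then close the $j$-sum through the Gauss identity~(\ref{eq:2f1}). The paper's own proof is extremely terse---it simply asserts that the summand~(\ref{eq:lemma3ds}) has the same form as~(\ref{eq:lemma2ds}) and that the steps~(\ref{eq:lemma2d})--(\ref{eq:lemma2c}) carry over---so your explicit treatment of the extra factor $1/i$ (partial fractions into two instances of Lemma~\ref{lemma1} for $j\geq 2$, and the $c$-derivative of Lemma~\ref{lemma1} at $c=0$ for $j=1$) is a legitimate and careful way to supply what the paper leaves implicit.
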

\begin{proof}
Similar to the proof of Lemma~\ref{lemma2}, we introduce an additional summation in
\begin{equation}
F=\sum _{i=1}^m \frac{1}{\Gamma (c+i) \Gamma (a+i) \Gamma (m-i+1) \Gamma (b-i+m+1) i }\label{eq:lemma3f}
\end{equation}
by replacing the term $\Gamma (i)/\Gamma (c+i)$ with the dummy summation~(\ref{eq:dumys}). Consequently, one has
\begin{equation}
F=\sum _{i=1}^m\sum _{j=1}^c g_{i,j}\label{eq:lemma3d}
\end{equation}
with
\begin{equation}
g_{i,j}= \frac{(-1)^{j+1} }{\Gamma(j) \Gamma(c-j+1)(i+j-1)} \frac{1}{\Gamma (i+1) \Gamma (a+i) \Gamma (m-i+1) \Gamma (b-i+m+1)},\label{eq:lemma3ds}
\end{equation}
which is the same form as the summand in~(\ref{eq:lemma2ds}).
Therefore, one uses the results~(\ref{eq:lemma2d})--(\ref{eq:lemma2c})  to obtain the identity~(\ref{eq:lemma3}), and Lemma \ref{lemma3} is proved.
\end{proof}

\begin{lemma}\label{lemma4}
For any complex numbers $a,b\notin \mathbb{Z^{-}}$, and any $c, d\in \mathbb{Z^{+}}$, we have
\begin{align}
&\sum _{i=1}^m \frac{1}{\Gamma (c+i) \Gamma (a+i) \Gamma (d+m-i+1) \Gamma (b+m-i+1)}\nonumber\\
&=\frac{1}{\Gamma (d) \Gamma (a+m) \Gamma (a+b+m) \Gamma (c+d+m)}\sum _{i=1}^m \frac{\Gamma (c+d+i-1) \Gamma (a+b-i+2 m)}{\Gamma (c+i) \Gamma (b-i+m+1)}\nonumber\\
&~~~+\frac{1}{\Gamma (c) \Gamma (b+m) \Gamma (a+b+m) \Gamma (c+d+m)}\sum _{i=1}^m \frac{\Gamma (c+d+i-1) \Gamma (a+b-i+2 m)}{\Gamma (d+i) \Gamma (a-i+m+1)}.\nonumber\\
\label{eq:lemma4}
\end{align}
\end{lemma}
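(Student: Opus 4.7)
The plan is to extend the dummy-summation strategy of Lemma~\ref{lemma2}, this time applying it simultaneously to both $1/\Gamma(c+i)$ and $1/\Gamma(d+m-i+1)$, and then splitting the resulting triple sum via a partial-fraction identity so that Lemma~\ref{lemma1} and Gauss's $_2F_1$ summation formula~(\ref{eq:2f1}) can be brought to bear.

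First, I would use the dummy-summation identity~(\ref{eq:dumys}) to write
\begin{equation*}
\frac{1}{\Gamma(c+i)}=\frac{1}{\Gamma(i)}\sum_{j=1}^{c}\frac{(-1)^{j+1}}{\Gamma(j)\Gamma(c-j+1)(i+j-1)},\qquad \frac{1}{\Gamma(d+m-i+1)}=\frac{1}{\Gamma(m-i+1)}\sum_{k=1}^{d}\frac{(-1)^{k+1}}{\Gamma(k)\Gamma(d-k+1)(m-i+k)}.
\end{equation*}
Substituting both expansions into the LHS of~(\ref{eq:lemma4}) converts it into a triple sum whose ``base'' Gamma-function structure $\Gamma(i)\Gamma(a+i)\Gamma(m-i+1)\Gamma(b+m-i+1)$ is exactly that of Lemma~\ref{lemma1}.

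Next, I would apply the elementary partial-fraction decomposition
\begin{equation*}
\frac{1}{(i+j-1)(m-i+k)}=\frac{1}{m+j+k-1}\left(\frac{1}{i+j-1}+\frac{1}{m-i+k}\right),
\end{equation*}
which splits the triple sum into two pieces $T_1+T_2$. In $T_1$ the index $k$ appears only in the factor $1/(m+j+k-1)$, so I would sum over $k$ first via~(\ref{eq:dumys}) to collapse that inner sum to $\Gamma(m+j)/\Gamma(m+j+d)$; Lemma~\ref{lemma1} with parameter $c\to j-1$ then evaluates the remaining $i$-sum. In $T_2$ the roles of $j,k$ are reversed, so I would sum over $j$ first (again using~(\ref{eq:dumys})), reindex $i\to m-i+1$, and apply Lemma~\ref{lemma1} with $(a,b)\to(b,a)$ and $c\to k-1$, invoking the symmetry of the ``base'' Gamma structure under that reindexing.

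The final step is to dispose of the surviving outer sum over $j$ (in $T_1$) or over $k$ (in $T_2$). After pulling out the $i$-independent factors, each sum takes the form of a unit-argument hypergeometric $_2F_1(1-c,m-i+1;m+d+1;1)$, or its $c\leftrightarrow d$ analogue, which collapses via Gauss's identity~(\ref{eq:2f1}) to $\Gamma(m+d+1)\Gamma(c+d+i-1)/\bigl(\Gamma(m+c+d)\Gamma(d+i)\bigr)$. The two resulting single sums over $i$ match, respectively, the second and first terms of the RHS of~(\ref{eq:lemma4}). The main technical obstacle will be the careful bookkeeping of signs, Gamma-function argument shifts, and the reindexing in $T_2$; once these are in order the proof closely parallels that of Lemma~\ref{lemma2}.
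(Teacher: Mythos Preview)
Your proposal is correct and follows essentially the same approach as the paper's own proof: both apply the dummy-summation identity~(\ref{eq:dumys}) to the two Gamma ratios, use the same partial-fraction split $\frac{1}{(i+j-1)(m-i+k)}=\frac{1}{m+j+k-1}\bigl(\frac{1}{i+j-1}+\frac{1}{m-i+k}\bigr)$ to obtain two pieces, collapse the innermost sum via~(\ref{eq:dumys}), invoke Lemma~\ref{lemma1} for the $i$-sum, and finish with Gauss's $_2F_1$ summation~(\ref{eq:2f1}) for the remaining $j$- or $k$-sum. Your explicit reindexing $i\to m-i+1$ and swap $(a,b)\to(b,a)$ in $T_2$ is exactly what the paper compresses into ``In the same manner, we obtain''.
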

\begin{proof}
Proving the Lemma \ref{lemma4} also utilizes the dummy summation technique. Denote the left side of the identity~(\ref{eq:lemma4}) as
\begin{equation}\label{eq:lemma4f}
F=\sum _{i=1}^m f_i,
\end{equation}
with
\begin{equation}
f_i= \frac{1}{\Gamma (c+i) \Gamma (a+i) \Gamma (d+m-i+1) \Gamma (b+m-i+1)}.
\end{equation}
By interpreting respectively the gamma ratios $\Gamma (i)/\Gamma (c+i)$ and $\Gamma (m-i+1)/\Gamma (d+m-i+1)$  as the dummy summations~(\ref{eq:dumys}) and
\begin{equation}
\sum _{k=1}^d \frac{(-1)^{k+1}}{\Gamma (k)\Gamma (d-k+1)(m-i+k)},
\end{equation}
the summation $F$ can be written as
\begin{equation}\label{eq:lemma4d}
F=\sum _{i=1}^m\sum _{j=1}^c\sum _{k=1}^d g_{i,j,k},
\end{equation}
where
\begin{align}\label{eq:lemma4g}
g_{i,j,k}=~\!\!&\frac{1}{\Gamma (i)  \Gamma (a+i) \Gamma (m-i+1) \Gamma (b-i+m+1)}\nonumber\\
&\times \frac{(-1)^{j+1}}{\Gamma (j) \Gamma (c-j+1) (i+j-1)} \frac{(-1)^{k+1}}{\Gamma (k) \Gamma (d-k+1)(m-i+k)}.
\end{align}
After taking a partial fraction decomposition
\begin{equation}
\frac{1}{(m-i+k) (i+j-1)}=\frac{1}{(i+j-1) (j+k+m-1)}+\frac{1}{(k+m-i) (j+k+m-1)}
\end{equation}
in~(\ref{eq:lemma4g}), the summation~(\ref{eq:lemma4d}) is split into two sums
\begin{equation}
F=F_1+F_2,\label{eq:lemma4split}
\end{equation}
where
\begin{align}
F_1=~\!\!&\sum _{j=1}^c \frac{(-1)^{j+1}}{\Gamma (j) \Gamma (c-j+1)}\sum _{k=1}^d \frac{(-1)^{k+1}}{\Gamma (k) \Gamma (d-k+1) (j+k+m-1)}\nonumber\\
&\times\sum _{i=1}^m \frac{1}{ \Gamma (i) \Gamma (a+i) \Gamma (m-i+1) \Gamma (b-i+m+1)(i+j-1)}\label{eq:lemma4sc}\\
F_2=~\!\!&\sum _{k=1}^d \frac{(-1)^{k+1}}{\Gamma (k) \Gamma (d-k+1)}\sum _{j=1}^c \frac{(-1)^{j+1}}{\Gamma (j) \Gamma (c-j+1) (j+k+m-1)}\nonumber\\
&\times\sum _{i=1}^m \frac{1}{\Gamma (i) \Gamma (b+i) \Gamma (m-i+1) \Gamma (a-i+m+1) (i+k-1)}.
\end{align}
Specialize $c=d$ and $i=j+m$ in the identity~(\ref{eq:dumys}), one has
\begin{equation}
\sum _{k=1}^d \frac{(-1)^{k+1}}{\Gamma (k) \Gamma (d-k+1) (j+k+m-1)}=\frac{\Gamma (j+m)}{\Gamma (d+j+m)}.
\end{equation}
Insert the above result into~(\ref{eq:lemma4sc}), and evaluate the resulting summation over $i$ by the identity~(\ref{eq:lemma1}), one arrives at
\begin{align}
F_1=~\!\!&\frac{1}{\Gamma (b+m) \Gamma (a+b+m)}\sum _{j=1}^c \frac{(-1)^{j+1} }{\Gamma (j) \Gamma (c-j+1) \Gamma (d+j+m)}\nonumber\\
&\times\sum _{i=1}^m \frac{\Gamma (m+j-i)\Gamma (a+b-i+2 m)}{\Gamma (m-i+1) \Gamma (a-i+m+1)},
\end{align}
where the summation over $j$ is simplified, by using the identity~(\ref{eq:2f1}), to
\begin{align}
&\sum _{j=1}^c \frac{(-1)^{j+1} \Gamma (m-i+j)}{\Gamma (j) \Gamma (c-j+1) \Gamma (d+j+m)}\nonumber\\
&=\frac{\Gamma (m-i+1)}{\Gamma (c) \Gamma (d+m+1)} \, _2F_1(1-c,m-i+1;d+m+1;1)\\
&=\frac{\Gamma (m-i+1) \Gamma (c+d+i-1)}{\Gamma (c) \Gamma (d+i) \Gamma (c+d+m)}.
\end{align}
 Consequently, we have
\begin{align}
F_1=\!\!~&\frac{1}{\Gamma (c) \Gamma (b+m) \Gamma (a+b+m) \Gamma (c+d+m)}\nonumber\\
&\times\sum _{i=1}^m \frac{\Gamma (c+d+i-1) \Gamma (a+b-i+2 m)}{\Gamma (d+i) \Gamma (a-i+m+1)}.\label{eq:lemma4sc2}
\end{align}
In the same manner, we obtain
\begin{align}
F_2=\!\!~&\frac{1}{\Gamma (d) \Gamma (a+m) \Gamma (a+b+m) \Gamma (c+d+m)}\nonumber\\
&\times\sum _{i=1}^m \frac{\Gamma (c+d+i-1) \Gamma (a+b-i+2 m)}{\Gamma (c+i) \Gamma (b-i+m+1)}.\label{eq:lemma4sd2}
\end{align}
Insert the results~(\ref{eq:lemma4sc2})--(\ref{eq:lemma4sd2}) into~(\ref{eq:lemma4split}), we complete the proof of Lemma~\ref{lemma4}.
\end{proof}
\begin{lemma}\label{lemma5}
Denote
\begin{equation}\label{eq:lemma56n}
\Phi_{a,b,c,d}^{(x)}=\frac{\Gamma (x+c+1) \Gamma (x+d+1) }{\Gamma (x+a+1) \Gamma (x+b+1)},
\end{equation}
we have
\begin{align}
&\sum _{i=1}^m \frac{(-1)^i \Gamma (a-i+m+1)}{\Gamma (m-i+1)}\left(\frac{1}{ a+c-i+2 m+1}-\frac{1}{c+i}\right)\nonumber\\
&=\frac{ \Gamma (a+c+m+1) }{\Gamma (c+m+1)}\sum _{i=1}^m\left(\frac{\Phi _{0,a+c,a,c}^{(m-i)}}{a+c-2 i+2 m+1}+a\Phi _{1,a+c+1,a,c}^{(m-i)} \right.\nonumber\\
&~~~~\left.\!-\frac{\Phi _{1,a+c,a+1,c}^{(m-i)}}{a+c-2 i+2 m+2}\right),\label{eq:lemma5}
\end{align}
where it is sufficient to consider $\Re(a,c)\geq 0 $.
\end{lemma}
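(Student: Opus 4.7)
\textbf{Proof proposal for Lemma~\ref{lemma5}.}

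The plan is to extend the re-summation calculation already used for the $c=0$ case~(\ref{eq:lemma5c=0}), now carrying $c$ through the argument as an additional parameter. First I would split the left-hand side as $\mathcal{G} = G_m - G'_m$, where
\begin{align*}
G_m &= \sum_{i=1}^m \frac{(-1)^i \Gamma(a-i+m+1)}{\Gamma(m-i+1)\,(a+c-i+2m+1)},\\
G'_m &= \sum_{i=1}^m \frac{(-1)^i \Gamma(a-i+m+1)}{\Gamma(m-i+1)\,(c+i)}.
\end{align*}
As in the $c=0$ template, neither sum is tractable under the existing framework on its own --- the alternating sign coupled to the linear denominator blocks direct telescoping or Chu--Vandermonde evaluation --- so the idea is to attack the pair jointly and rely on inner cancellation between the two re-summations.

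Next I would derive a pair of tailor-made one-step recurrences
\begin{align*}
G_m &= \kappa_{m-1}\,G_{m-1} + \rho_{m-1},\\
G'_m &= \kappa_{m-1}\,G'_{m-1} + \rho'_{m-1},
\end{align*}
sharing a \emph{common} coefficient $\kappa_{m-1}$ (the natural analogue of $(a+m)/m$ in the example preceding~(\ref{eq:re-sumer})), obtained by partial-fraction expansions of $1/((m-i+1)(c+i))$ and $1/((m-i+1)(a+c-i+2m+1))$ followed by a shift of the summation index. The remainders $\rho_{m-1}$ and $\rho'_{m-1}$ will each be an alternating sum of the shape $\sum_i (-1)^i \Gamma(\cdots)/\Gamma(\cdots)$, which is precisely a unit-argument $\,_2F_1$ and hence evaluates in closed form via~(\ref{eq:2f1}).

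Iterating both recurrences $m$ times writes $G_m$ and $G'_m$ each as a double alternating sum plus a single-sum residual. I expect the main obstacle to lie exactly here: one must find the index shift that makes the two double sums \emph{identical} term by term, so that they cancel exactly in $\mathcal{G}$, mirroring the cancellation between~(\ref{eq:resumr1}) and~(\ref{eq:resumr2}). With $c$ switched on, extra Pochhammer shifts depending on $c+i$ and $a+c-2i+2m+1$ appear and must be matched carefully; this matching is what forces $\kappa_{m-1}$ to take its specific rational form, and arranging the three auxiliary denominators to land cleanly in the residuals is the delicate book-keeping step.

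Once the cancellation of the double sums is secured, a difference of single sums remains whose summands are ratios of four Gamma functions shifted in $i$. Pulling out the prefactor $\Gamma(a+c+m+1)/\Gamma(c+m+1)$ and rewriting the surviving Gamma ratios through $\Phi^{(m-i)}_{0,a+c,a,c}$, $\Phi^{(m-i)}_{1,a+c+1,a,c}$, and $\Phi^{(m-i)}_{1,a+c,a+1,c}$ via~(\ref{eq:lemma56n}) should produce exactly the three-term combination on the right-hand side of~(\ref{eq:lemma5}), with the weights $1/(a+c-2i+2m+1)$, $a$, and $-1/(a+c-2i+2m+2)$ inherited from the partial-fraction coefficients. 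As a sanity check, specialising $c=0$ and using $\sum_{i=1}^m 1/(a+m+i)=\psi_0(a+2m+1)-\psi_0(a+m+1)$ should recover~(\ref{eq:lemma5c=0}); the hypothesis $\Re(a,c)\geq 0$ only serves to keep all Gamma factors finite during the derivation, and the identity extends to all $a,c\notin \mathbb{Z}^-$ by analytic continuation since both sides become rational after clearing the Gamma prefactors.
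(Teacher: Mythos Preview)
Your plan matches the paper's own proof: split as $G_m-G'_m$, set up matched recurrences with the common coefficient $\kappa_{m-1}=(a+c+m)/(c+m)$ (indeed the natural $c$-deformation of $(a+m)/m$), iterate $m$ times, and cancel the resulting double alternating sums in the difference so that only the single-sum residuals survive and reorganise into the three $\Phi$-terms on the right. One slip to flag: the alternating sum $\sum_i(-1)^i\Gamma(a-i+m)/\Gamma(m-i+1)$ appearing in each remainder does \emph{not} close via~(\ref{eq:2f1}) --- it is not a terminating Gauss ${}_2F_1$ at unit argument --- and it is precisely this unevaluated piece that, upon iteration, produces the double sums you correctly anticipate cancelling in the next step; the closed-form residuals in $r_{m-1}$ (the terms carrying $1/(a+c+2m-1)$ and $1/(a+c+2m)$) arise instead as boundary contributions from the index shift, not from a hypergeometric evaluation.
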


\begin{proof}
We prove the Lemma \ref{lemma5} by using the re-summation technique. We denote the left side of the identity~(\ref{eq:lemma5}) as
\begin{equation}\label{lemma5s}
\mathcal{G}=G_m-G^\prime_m,
\end{equation}
where
\begin{align}
G_m=&\sum _{i=1}^m \frac{(-1)^i \Gamma (a-i+m+1)}{\Gamma (m-i+1) (a+c-i+2 m+1)}\label{eq:lemma5rr1}\\
G^\prime_m=&\sum _{i=1}^m \frac{(-1)^i \Gamma (a-i+m+1)}{ \Gamma (m-i+1)(c+i)}\label{eq:lemma5rr2}.
\end{align}
Here, we choose the recurrence relations
\begin{align}
G_m=&c_{m-1}G_{m-1}+r_{m-1}\label{eq:lemma5r1}\\
G^\prime_m=&c_{m-1}G^\prime_{m-1}+r^{\prime}_{m-1},\label{eq:lemma5r2}
\end{align}
where
\begin{align}
c_{m-1}=~\!\!&\frac{a+c+m}{c+m}\\
r_{m-1}=~\!\!&\frac{a}{c+m} \sum _{i=0}^{m} \frac{(-1)^{i} \Gamma (a-i+m)}{\Gamma (m-i+1)}+\frac{(a+c+m) \Gamma (a+m)}{(c+m) \Gamma (m+1)}\nonumber\\
&\times\left(\frac{m}{a+c+2 m-1}-\frac{a+m}{a+c+2 m}\right)\\
r^{\prime}_{m-1}=~\!\!&\frac{a}{c+m}\sum _{i=1}^m \frac{(-1)^i \Gamma (a-i+m)}{\Gamma (m-i+1)}.
\end{align}
Iterating $m$ times the recurrence relations~(\ref{eq:lemma5r1}) and (\ref{eq:lemma5r2}), we obtain
\begin{align}
G_m=~\!\!&\frac{ \Gamma (a+c+m+1) }{\Gamma (c+m+1)}\left(a\sum _{i=1}^m\! \right. \sum _{j=0}^{m-i+1} (-1)^j \Phi _{a+c+1,1-j,c,a-j}^{(m-i)}\nonumber\\
&+\!\left.\sum _{i=1}^m \left(\frac{\Phi _{0,a+c,c,a}^{(m-i)}}{a+c-2 i+2 m+1}-\frac{\Phi _{1,a+c,c,a+1}^{(m-i)}}{a+c-2 i+2 m+2}\right)\right)\label{eq:lemma5s1}\\
G^\prime_m=~\!\!&\frac{a \Gamma (a+c+m+1) }{\Gamma (c+m+1)}\sum _{i=1}^m \sum _{j=1}^{m-i+1}(-1)^j \Phi _{a+c+1,1-j,c,a-j}^{(m-i)}\label{eq:lemma5s2}.
\end{align}
It is noticed that the double summation in~(\ref{eq:lemma5s1}) is  the same form as the one in~(\ref{eq:lemma5s2}).
By inserting the re-summations~(\ref{eq:lemma5s1}) and~(\ref{eq:lemma5s2}) into~(\ref{lemma5s}), we obtain the desired identity~(\ref{eq:lemma5}). This proves Lemma~\ref{lemma5}.
\end{proof}
\begin{lemma}\label{lemma6}
Using the notation~(\ref{eq:lemma56n}), we have
\begin{align}
& \sum _{i=1}^m\Phi _{0,a,b,a+b}^{(m-i)} \left(\frac{1}{a+b+c-i+2 m+1}-\frac{1}{c+i}\right)\nonumber\\
&=\Phi _{0,a,b,a+b}^{(m+c)}\sum _{i=1}^m \Phi _{0,a,b,a+b}^{(i-1)} \Phi _{b,a+b,0,a}^{(c+i-1)}\left(\frac{1}{a+b+c+2 i-1}-\frac{b (a+b)}{a i (a+b+c+i)}\right.\nonumber\\
&~~~~\!\left.-\frac{b (a-b)}{a (a+i) (b+c+i)}+\frac{(b+i) (a+b+i)}{i (a+i) (a+b+c+2 i)}-\frac{a+b+2 i-2}{(b+i-1) (a+b+i-1)}\right)\nonumber\\
&~~~~+\frac{1}{b}\Phi _{0,a,b,a+b}^{(m-1)},\label{eq:lemma6}
\end{align}
where it is sufficient to consider $a ,b,c, a+b, a+b+c \notin \mathbb{Z}^{-} $.
\end{lemma}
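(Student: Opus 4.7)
The statement of Lemma~\ref{lemma6} has exactly the same combinatorial shape as Lemma~\ref{lemma5}: the left-hand side is a difference of two summations that individually are not simplifiable under the existing framework, but whose difference admits a compact alternative form. Accordingly, my plan is to apply the re-summation technique of Section~\ref{subsubsec 3.2.2} in direct analogy with the proof of Lemma~\ref{lemma5}. First, I would split the left side as $\mathcal{G}=G_m-G'_m$, where
\begin{equation*}
G_m=\sum_{i=1}^m \Phi_{0,a,b,a+b}^{(m-i)}\frac{1}{a+b+c-i+2m+1},\qquad G'_m=\sum_{i=1}^m \Phi_{0,a,b,a+b}^{(m-i)}\frac{1}{c+i},
\end{equation*}
and verify that neither sum is reducible by the changes-of-summation-order of the existing framework, which justifies the use of re-summation.

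Second, I would construct tailor-made recurrence relations
\begin{equation*}
G_m=c_{m-1}G_{m-1}+r_{m-1},\qquad G'_m=c_{m-1}G'_{m-1}+r'_{m-1},
\end{equation*}
in which the coefficient $c_{m-1}$ is chosen to be the ratio $\Phi_{0,a,b,a+b}^{(m)}/\Phi_{0,a,b,a+b}^{(m-1)}$ adjusted by the denominator shift. Concretely, after shifting the index $i\mapsto i+1$ in $G_{m-1}$ and $G'_{m-1}$, the standard bridge through a partial-fraction identity on $(c+i)(a+b+c-i+2m+1)$ together with the identity $\Phi_{0,a,b,a+b}^{(m-i)}=\Phi_{0,a,b,a+b}^{(m-1-i)}\cdot\frac{(m-i+b)(m-i+a+b)}{(m-i)(m-i+a)}$ will supply the common factor $c_{m-1}$. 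The remainders $r_{m-1}$ and $r'_{m-1}$ will contain the ``boundary'' pieces together with single sums of the form $\sum_i \Phi_{0,a,b,a+b}^{(m-i)}$, which can be evaluated in closed form via the hypergeometric identity~(\ref{eq:2f1}), as was done for~(\ref{eq:ofgi}) in the proof of Lemma~\ref{lemma1}.

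Third, I would iterate the two recurrences $m$ times. Each iteration contributes a term of the form $\Phi_{0,a,b,a+b}^{(m+c)}\Phi_{0,a,b,a+b}^{(i-1)}\Phi_{b,a+b,0,a}^{(c+i-1)}$ multiplied by a rational factor coming from the denominators $a+b+c-i+2m+1$ and $c+i$, respectively. Writing out the telescoping products, $G_m$ and $G'_m$ each produce a summation whose summand decomposes into a dominant double sum and explicit boundary contributions. The key observation, exactly parallel to the cancellation between~(\ref{eq:lemma5s1}) and~(\ref{eq:lemma5s2}), is that the double-sum portions of $G_m$ and $G'_m$ coincide term-by-term and cancel in the difference, leaving only the rational and boundary residues. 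Collecting them and identifying the factor $\frac{1}{b}\Phi_{0,a,b,a+b}^{(m-1)}$ as the contribution from the initial $G_1$ (or $G'_1$) step yields the right-hand side of~(\ref{eq:lemma6}).

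The principal obstacle will be bookkeeping rather than conceptual: correctly choosing $c_{m-1}$ so that every iteration produces terms that pair across $G_m$ and $G'_m$, and verifying that the ``rational" residue of each iteration packages into the three-term combination $\frac{1}{a+b+c+2i-1}-\frac{b(a+b)}{ai(a+b+c+i)}-\frac{b(a-b)}{a(a+i)(b+c+i)}+\frac{(b+i)(a+b+i)}{i(a+i)(a+b+c+2i)}-\frac{a+b+2i-2}{(b+i-1)(a+b+i-1)}$ shown in~(\ref{eq:lemma6}). This will likely require a partial-fraction decomposition of the raw residue and a reindexing of one of the pieces (the last term suggests a shift $i\mapsto i-1$). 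Once these algebraic identifications are carried out, the rest of the argument follows the same template as Lemma~\ref{lemma5}.
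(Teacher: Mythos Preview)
Your overall plan is the right one and matches the paper's proof: split $\mathcal{G}=G_m-G'_m$, set up matched recurrences $G_m=c_{m-1}G_{m-1}+r_{m-1}$ and $G'_m=c_{m-1}G'_{m-1}+r'_{m-1}$, iterate $m$ times, and look for cancellation between the resulting double sums. The paper's coefficient is $c_{m-1}=\frac{(b+c+m)(a+b+c+m)}{(c+m)(a+c+m)}$, which is indeed the $c$-shifted version of the ratio you describe.

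However, two of your expectations fail, and one of them is a genuine gap. First, the single sums appearing in the remainders $r_{m-1}$ and $r'_{m-1}$ are of the shape $\sum_i \Phi_{0,a+1,b,a+b}^{(m-i)}$ and $\sum_i \Phi_{1,a,b,a+b}^{(m-i)}$. These are ${}_3F_2$-type sums, not ${}_2F_1$'s, and do \emph{not} evaluate in closed form via~(\ref{eq:2f1}) as~(\ref{eq:ofgi}) did; they survive the iteration intact and become the inner sums of the double sums.

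Second---and this is the missing idea---the double sums produced by $G_m$ and by $G'_m$ do \emph{not} coincide term by term the way they did in Lemma~\ref{lemma5}. After iteration, the inner $j$-sums in $G_m$ carry factors $\Phi_{a+b+1,b,0,a}^{(m-j+c)}$ and $\Phi_{b+1,a+b,a,0}^{(m-j+c)}$, whereas those in $G'_m$ carry $\Phi_{a+b+1,b+1,0,a+1}^{(m-j+c)}$ and $\Phi_{b+1,a+b+1,a,1}^{(m-j+c)}$: the subscript shifts differ. The paper bridges this mismatch with an additional telescoping identity,
\[
(c_1-b_1)\sum_{j=1}^{m}\Phi_{a_1,b_1+1,c_1,d_1+1}^{(m-j)}+(d_1-a_1+1)\sum_{j=1}^{m}\Phi_{a_1,b_1,c_1,d_1}^{(m-j)}
=\Phi_{a_1-1,b_1,c_1,d_1}^{(m)}-\Phi_{a_1-1,b_1,c_1,d_1}^{(0)},
\]
itself proved by a separate one-step recurrence. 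Only after applying this identity (with two specific specializations) to rewrite $G'_m$ do its double sums match those of $G_m$ and cancel; the surviving boundary pieces then assemble into the right-hand side of~(\ref{eq:lemma6}). Without this extra step your argument would stall at a pair of unmatched double sums, so the claim that the cancellation is ``exactly parallel'' to Lemma~\ref{lemma5} does not go through.
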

\begin{proof}
The proof of Lemma \ref{lemma6} uses the re-summation technique. The left side of the identity~(\ref{eq:lemma6}) can be written as
\begin{equation}\label{eq:lemma6s}
\mathcal{G}=G_m-G^\prime_m,
\end{equation}
where
\begin{align}
G_m&=\sum _{i=1}^m \Phi _{0,a,b,a+b}^{(m-i)}\frac{1}{a+b+c-i+2 m+1}\\
G^\prime_m&=\sum _{i=1}^m \Phi _{0,a,b,a+b}^{(m-i)}\frac{1}{c+i}.
\end{align}
In relating the two summations above, we choose the following recurrence relations
\begin{align}
G_m&=c_{m-1}G_{m-1}+r_{m-1}\label{eq:lemma6r1}\\
G^\prime_m&=c_{m-1}G^\prime_{m-1}+r^{\prime}_{m-1},\label{eq:lemma6r2}
\end{align}
where
\begin{align}
c_{m-1}=~\!\!&\frac{(b+c+m) (a+b+c+m)}{(c+m) (a+c+m)}\\
r_{m-1}=~\!\!&c_{m-1} \left(\frac{\Phi _{0,a,b,a+b}^{(m-1)}}{a+b+c+2 m-1}+\frac{\Phi _{0,a,b,a+b}^{(m)}}{a+b+c+2 m}\right)-\frac{b}{a}c_{m-1}\nonumber\\
&\times\left(\frac{(a-b) }{b+c+m}\sum _{i=1}^m \Phi _{0,a+1,b,a+b}^{(m-i)}+\frac{(a+b) }{a+b+c+m}\sum _{i=1}^{m+1} \Phi _{1,a,b,a+b}^{(m-i)}\right)\\
r^{\prime}_{m-1}=~\!\!&\frac{b (a-b) }{a (a+c+m)}\sum _{i=1}^{m-1} \Phi _{0,a+1,b,a+b}^{(m-i-1)}+\frac{b (a+b) }{a (c+m)}\sum _{i=1}^m \Phi _{1,a,b,a+b}^{(m-i-1)}.
\end{align}
Iterating $m$ times the above relations~(\ref{eq:lemma6r1}) and (\ref{eq:lemma6r2}), one obtains
\begin{align}
G_m=~\!\!&\Phi _{0,a,b,a+b}^{(m+c)}\left(\rule{0cm}{0.67cm}\!-\frac{b (a+b)}{a}\sum _{i=1}^m \Phi _{1,a,b,a+b}^{(i-2)}\sum _{j=1}^{m-i+1} \Phi _{a+b+1,b,0,a}^{(m-j+c)}-\frac{b (a-b)}{a}\right.\nonumber\\
&\times\sum _{i=1}^m \Phi _{0,a+1,b,a+b}^{(i-1)}\sum _{j=1}^{m-i+1} \Phi _{b+1,a+b,a,0}^{(m-j+c)}+\sum _{i=1}^m \Phi _{b,a+b,0,a}^{(m-i+c)}\nonumber\\
&\times\left(\frac{\Phi _{0,a,b,a+b}^{(m-i+1)}}{a+b+c-2 i+2 m+2}+\frac{\Phi _{0,a,b,a+b}^{(m-i)}}{a+b+c-2 i+2 m+1}\right.\nonumber\\
&\left.\!-\frac{b (a+b) \Phi _{1,a,b,a+b}^{(m-i)}}{a (a+b+c-i+m+1)}\right)\!\left.\rule{0cm}{0.67cm}\right)\label{eq:lemma6s1}\\
G^\prime_m=~\!\!&\Phi _{0,a,b,a+b}^{(m+c)}\left(\frac{b (a+b)}{a}\sum _{i=1}^m \Phi _{1,a,b,a+b}^{(i-2)}\sum _{j=1}^{m-i+1} \Phi _{a+b+1,b+1,0,a+1}^{(m-j+c)}\right.\nonumber\\
&+\!\left.\frac{b (a-b)}{a}\sum _{i=1}^m \Phi _{0,a+1,b,a+b}^{(i-2)}\sum _{j=1}^{m-i+1} \Phi _{b+1,a+b+1,a,1}^{(m-j+c)}\right)\label{eq:lemma6s2}.
\end{align}
Note that in order to reveal the potential cancellations between the two re-summations above, one will also need the identity
\begin{align}
&(c_1-b_1) \sum _{j=1}^m \Phi _{a_1,b_1+1,c_1,d_1+1}^{(m-j)}+(d_1-a_1+1) \sum _{j=1}^m \Phi _{a_1,b_1,c_1,d_1}^{(m-j)}\nonumber\\
&=\Phi _{a_1-1,b_1,c_1,d_1}^{(m)}-\Phi _{a_1-1,b_1,c_1,d_1}^{(0)}\label{eq:lemma6I}, \qquad a_1,b_1,c_1,d_1\notin\mathbb{Z}^{-}.
\end{align}
This identity is obtained by iterating $m$ times the following recurrence relation
\begin{equation}
H_{m,a_1,d_1}=H_{m-1,a_1+1,d_1+1}+s_{m-1},
\end{equation}
where
\begin{equation}
H_{m,a_1,d_1}=\sum _{j=1}^m \Phi _{a_1,b_1,c_1,d_1}^{(m-j)}
\end{equation}
\begin{equation}
s_{m-1}=H_{m,a_1,d_1}-H_{m-1,a_1+1,d_1+1}=(c_1-b_1) \sum _{j=1}^{m-1} \Phi _{a_1,b_1,c_1-1,d_1}^{(m-j)}+\Phi _{a_1,b_1,c_1,d_1}^{(0)}.
\end{equation}
We now use the identity~(\ref{eq:lemma6I}) with the specializations
\begin{align}
&a_1=a+b+c+i,~~~b_1=b+c+i-1,~~~c_1=c+i-1,\nonumber\\
&d_1=a+c+i-1, ~~~m\to m-i+1
\end{align}
and
\begin{align}
~~~~~~&a_1=b+c+i,~~~b_1=a+b+c+i-1,~~~c_1=a+c+i-1,\nonumber\\
&d_1=c+i-1,~~~m\to m-i+1,
\end{align}
the result (\ref{eq:lemma6s2}) becomes
\begin{align}
G^\prime_m=&~\Phi _{0,a,b,a+b}^{(c+m)}\Bigg(\!-\frac{b (a+b)}{a}\sum _{i=1}^m \Phi _{1,a,b,a+b}^{(i-2)} \sum _{j=1}^{m-i+1} \Phi _{a+b+1,b,0,a}^{(m-j+c)}-\frac{b (a-b)}{a}\nonumber\\
&\!~\times\sum _{i=1}^m \Phi _{0,a+1,b,a+b}^{(i-2)} \sum _{j=1}^{m-i+1} \Phi _{b+1,a+b,a,0}^{(m-j+c)}+\sum _{i=1}^m \left(\Phi _{a+b,b,0,a}^{(c+i-1)}-\Phi _{a+b,b,0,a}^{(c+m)}\right)\nonumber\\
&\!~\times\left(\frac{a-b }{a}\Phi _{0,a+1,b,a+b}^{(i-2)}+\frac{a+b}{a}\Phi _{1,a,b,a+b}^{(i-2)}\right)\!\Bigg),
\end{align}
where the double summations are of the same form as the ones in~(\ref{eq:lemma6s1}). Therefore,
all the double summations are cancelled completely in~(\ref{eq:lemma6s}). The remaining terms are the desired result~(\ref{eq:lemma6}). This completes the proof of Lemma~\ref{lemma6}.
\end{proof}

Note that the results in lemmas~\ref{lemma2}--\ref{lemma4} are analytically continued to any complex number $c$ or $d$ except for negative integers. This fact allows us to take derivatives of the formulas~(\ref{eq:lemma2}),~(\ref{eq:lemma3}), and~(\ref{eq:lemma4}) in obtaining the identities (\ref{eq:Bn0})--(\ref{eq:Bn6}) listed in Appendix \ref{secB2}. For example, the identity~(\ref{eq:Bn4}) is established by taking derivatives of $c$ and $d$ of the formula~(\ref{eq:lemma4}) before setting $c=d=0$. The lemmas~\ref{lemma1}--\ref{lemma4} along with the identities (\ref{eq:Bn0})--(\ref{eq:Bn6}) are useful in simplifying the summations in~(\ref{eq:A2S}) and~(\ref{eq:fA2S}). It is also worth mentioning that the identities of Lemma~\ref{lemma5} and Lemma~\ref{lemma6} admit closed-form representations for some special cases,  e.g., when the parameter $c$ is  fixed to non-negative integers, since all the gamma functions involved are reduced to rational functions. These identities as well as their derivatives of $c$ are the key tools in simplifying the summations in~(\ref{eq:B2S}) and~(\ref{eq:fB2S}).

Note also that in lemmas \ref{lemma1}--\ref{lemma4}, by rewriting the summations involved as hypergeometric functions, the corresponding identities induce the following new transformation formulas of unit arguments~(\ref{eq:tf1})--(\ref{eq:tf4}), which may be of independent interest.
\begin{align}
&\, _3F_2(c+1,1-m,1-b-m;a+1,c+2;1)\nonumber\\
&=\frac{(c+1) \Gamma (a+1) \Gamma (a+b+2 m-1)}{(c+m) \Gamma (a+m) \Gamma (a+b+m)}\nonumber\\
&~~~\times\, _3F_2(1,1-m,1-a-m;2-a-b-2 m,1-c-m;1)\label{eq:tf1}\\
\nonumber\\
&\, _3F_2(1,1-m,1-b-m;a+1,c+1;1)\nonumber\\
&=\frac{c }{c+m-1}\, _3F_2(1,1-m,a+b+m;a+1,2-c-m;1)\\
\nonumber\\
&\, _4F_3(1,1,1-m,1-b-m;2,a+1,c+1;1)\nonumber\\
&=\frac{a (b+c+m)}{(a+m-1) (b+m)}\, _4F_3(1,1,1-m,b+c+m+1;2,c+1,2-a-m;1)\nonumber\\
&~~~+\frac{a c (\psi_0(a)-\psi_0(a+m))}{m (b+m)}\\
\nonumber\\
&\frac{\, _3F_2(1,1-b-m,1-d-m;a+1,c+1;1)}{\Gamma (a+1) \Gamma (c+1) \Gamma (b+m) \Gamma (d+m)}\nonumber\\
&=\frac{\, _3F_2(1,1-b,1-d;a+m+1,c+m+1;1)}{\Gamma (b) \Gamma (d) \Gamma (a+m+1) \Gamma (c+m+1)}-\frac{1}{a+b+m-1}\nonumber\\
&~~~\times\left(\frac{\, _3F_2(1,1-a,c+d+m;2-a-b-m,d+m+1;1)}{\Gamma (a) \Gamma (c) \Gamma (b+m) \Gamma (d+m+1)}\right.\nonumber\\
&~~~+\left.\!\frac{\, _3F_2(1,1-b,c+d+m;2-a-b-m,c+m+1;1)}{\Gamma (b) \Gamma (d) \Gamma (a+m) \Gamma (c+m+1)}
\right)\nonumber\\
&~~~+\frac{\Gamma (c+d) \Gamma (a+b+2 m-1)}{\Gamma (c) \Gamma (d) \Gamma (a+m) \Gamma (b+m) \Gamma (a+b+m) \Gamma (c+d+m)}\nonumber\\
&~~~\times\left(\frac{1}{d}\, _3F_2(1,c+d,1-a-m;d+1,2-a-b-2 m;1)\right.\nonumber\\
&~~~+\!\left.\frac{1}{c}\, _3F_2(1,c+d,1-b-m;c+1,2-a-b-2 m;1)\right)\label{eq:tf4}.
\end{align}

\subsection{Simplification of summations}\label{subsec3.3}
In this section, we compute the integrals $\mathrm{I_A}$ and $\mathrm{I_B}$ in obtaining the variance formulas in Proposition~\ref{propav} and Proposition~\ref{propfv}. For convenience, we summarize in Table~\ref{tab_in} below the corresponding integrals of the variance~(\ref{def:vs}) for case A and case B.

\begin{table}[htbp]
\begin{center}
\begin{minipage}{240pt}
\caption{Integrals of the variance~(\ref{def:vs}) for case A and case B.}\label{tab_in}
\begin{tabular}{@{}lcc@{}}
\toprule
&~~~~$\mathrm{I_A}$~~~~&$~~~~\mathrm{I_B}$~~~~ \\
\midrule
Case A: Arbitrary number of particle~(\ref{eq:fgap}) &(\ref{eq:IAex})& (\ref{eq:IBex}) \\
\\
 Case B: Fixed number of particle~(\ref{eq:fgfp})    &(\ref{eq:IAexf})&(\ref{eq:IBexf})\\
\botrule
\end{tabular}
\end{minipage}
\end{center}
\end{table}
\noindent

\subsubsection*{Case A: Arbitrary number of particles}\label{subsubsec4}
In case A, the variance calculation boils down to simplifying the summation representations of $\mathrm{A_1}$, $\mathrm{A_2}$, $\mathrm{B_1}$, and $\mathrm{B_2}$ in~(\ref{eq:A1S})--(\ref{eq:B2S}) as listed in Appendix \ref{subsecA1}.
The corresponding simplification procedures are shown below.

We first discuss the simplifications of summation representations~(\ref{eq:A1S}) and~(\ref{eq:A2S}) in computing the integral $\mathrm{I_A}$ in (\ref{eq:IAex}). The summations in~(\ref{eq:A1S})  can be simplified by appropriately changing the summation orders and using the first type identities~(\ref{eq:B1})--(\ref{eq:B12c1}) of the existing framework, where the closed-form identities are~(\ref{eq:B1})--(\ref{eq:B5}) and the semi closed-form ones are~(\ref{eq:B6})--(\ref{eq:B12c1}). Semi closed-form identities that we referred to represent the relation between two single summations. These sums are what we refer to as unsimplifiable basis, which will cancel completely in obtaining the closed-form results. On the other hand, whether the considered unsimplifiable basis can be computed into closed-form formulas is not of primary importance.

The simplification of the summations in (\ref{eq:A1S}) is as follows. In~(\ref{eq:A1S}), the first double summation is
\begin{align}
&\sum _{k=0}^{m-1}\sum _{j=2 k-2}^{2 k}\frac{2(-1)^j(2 a+4 k+1) (j+1)_2 (a+j+1)_2}{\Gamma (2k-j+1) \Gamma (j-2 k+3) (2 a+j+2 k+1)_3}\bigg(\!(\psi_0 (j+3)\nonumber\\
&-\psi_0 (2 a+j+2 k+4)-\psi_0 (j-2 k+3)+\psi_0 (a+j+3))^2\nonumber\\
&-\psi _1(2 a+j+2 k+4)+\psi _1(a+j+3)-\psi _1(j-2 k+3)+\psi _1(j+3)\!\bigg),
\end{align}
which is directly reduced to a single sum after evaluating the sum of $j$. By using the identities~(\ref{eq:B1})--(\ref{eq:B12c1}) along with the results~\cite{AS72}
\begin{align}
\psi_{0}(mk)&=\ln m+\frac{1}{m}\sum_{i=0}^{m-1}\psi_{0}\!\left(k+\frac{i}{m}\right)\label{eq:m_poly0}\\
\psi_{1}(mk)&=\frac{1}{m^2}\sum_{i=0}^{m-1}\psi_{1}\!\left(k+\frac{i}{m}\right), \qquad m \in\mathbb{Z^{+}}, \label{eq:m_poly1}
\end{align}
the remaining single sum is computed to the unsimplifiable basis of the form
\begin{equation}\label{eq:s1basis}
\sum _{k=1}^{m}\frac{\psi_0(k+c)}{k+d}, \qquad c \neq d.
\end{equation}
The other double summation in~(\ref{eq:A1S}) is
\begin{align}\label{eq:A1S2}
\sum _{k=0}^{m-1}\sum _{j=0}^{2 k-3} \frac{4(2 a+4 k+1) (j+1)_2 (a+j+1)_2}{(2k-j-2)_3 (2 a+j+2 k+1)_3}\Psi(j),
\end{align}
where
\begin{equation}\label{eq:A1Ss}
\Psi(j)=\psi_0 (2 a+j+2 k+4)-\psi_0 (a+j+3)+\psi_0 (2k-j-2)-\psi_0 (j+3).
\end{equation}
To process the summation~(\ref{eq:A1S2}), we first perform the partial fraction decomposition of the term
\begin{align}\label{eq:A1S2pfd}
&\frac{4(2 a+4 k+1) (j+1)_2 (a+j+1)_2}{(2k-j-2)_3 (2 a+j+2 k+1)_3}\nonumber\\
&=-\frac{2 (a+k) (2 a+2 k+1)}{2 a+j+2 k+2}+\frac{2 (a+k) (a+2 k-1) (2 a+2 k-1)}{(2 a+4 k-1) (2 a+j+2 k+1)}\nonumber\\
&~~~\!~+\frac{2 (a+k+1) (a+2 k+2) (2 a+2 k+1)}{(2 a+4 k+3) (2 a+j+2 k+3)}-\frac{2 (k+1) (2 k+1) (a+2 k+2)}{(2 a+4 k+3) (j-2 k)}\nonumber\\
&~~~\!~-\frac{2 k (2 k-1) (a+2 k-1)}{(2 a+4 k-1) (j-2 k+2)}+\frac{2 k (2 k+1)}{j-2 k+1}.
\end{align}
The sum~(\ref{eq:A1S2}) now boils down to
\begin{align}
&\sum _{k=0}^{m-1}\mathrm{p}_{c}(k)\sum _{j=0}^{2 k-3}\frac{1}{j+2k+2a+c}\Psi(j)\label{eq:A1S3}\\
&\sum _{k=0}^{m-1}\mathrm{p}^{\prime}_{c}(k)\sum _{j=0}^{2 k-3}\frac{1}{j-2k-1+c}\Psi(j)\label{eq:A1S4},
\end{align}
where the parameter $c$ takes the values $c=1,2,3$, $\mathrm{p}_{c}(k)$ and $\mathrm{p}^{\prime}_{c}(k)$ denote the polynomials in $k$ from the partial fraction decomposition~(\ref{eq:A1S2pfd}).  When considering $c=2$ and the term $\psi_0(j+3)$ in~(\ref{eq:A1Ss}), the double summation
\begin{equation}\label{eq:A1S5}
-\sum _{k=0}^{m-1}2 (a+k) (2 a+2 k+1)\sum _{j=0}^{2 k-3}\frac{\psi_0(j+3)}{j+2k+2a+2}
\end{equation}
is simplified to a single sum as
\begin{align}\label{eq:A1S6}
&\sum _{j=0}^{m-3} \Bigg(\psi _0(2 j+3) ((j+1) (2 j+1) (\psi_0(1)-\psi_0(a+j+m+1))+(j-m+2)\nonumber\\
& \times(2 a-j+m)+(j+1) (2 j+1) (\psi_0(a+2 j+3)-\psi_0(1)))+\psi _0(2 j+4) \nonumber\\
&\times\left((j+1) (2 j+3) \left(\psi_0(1)-\psi_0\!\left(a+j+m+\frac{3}{2}\right)\right)+(j+1) (2 j+3)\right.\nonumber\\
&\times\!\left. \left(\psi_0\!\left(a+2 j+\frac{7}{2}\right)-\psi_0(1)\right)+(2 a-j+m-1)(j-m+2) \right)\!\Bigg).
\end{align}
The above result is obtained as follows. In~(\ref{eq:A1S5}), it is noticed that directly evaluating the inner summation does not permit further simplifications. Instead, we first have to separate the inner summation that depends on an even index $2k$ into two as
\begin{equation}\label{eq:simex11}
-2\sum _{k=0}^{m-1} (a+k) (2 a+2 k+1)\left(\sum _{j=0}^{k-2} \frac{\psi_0(2 j+3)}{2 a+2 k+2 j+2}+\sum _{j=0}^{k-2} \frac{\psi_0(2 j+4)}{2 a+2 k+2 j+3}\right),
\end{equation}
which allows us to change the summation order as
\begin{align}\label{eq:simex12}
&-2\sum _{j=0}^{m-3} \psi_0(2 j+3)\sum _{k=j+2}^{m-1} \frac{ (a+k) (2 a+2 k+1)}{2 a+2 k+2 j+2}\nonumber\\
&-2\sum _{j=0}^{m-3} \psi_0(2 j+4)\sum _{k=j+2}^{m-1} \frac{ (a+k) (2 a+2 k+1) }{2 a+2 k+2 j+3}.
\end{align}
Evaluating the inner summations over $k$ in~(\ref{eq:simex12}) directly gives the result~(\ref{eq:A1S6}). Other double summations in~(\ref{eq:A1S3}) and~(\ref{eq:A1S4}) are similarly simplified into single sums. The resulting single sums, cf.~(\ref{eq:A1S6}), are further manipulated into the unsimplifiable basis of the form~(\ref{eq:s1basis}) by using the first type identities~(\ref{eq:B1})--(\ref{eq:B12c1}) along with the result~(\ref{eq:m_poly0}).

Different from the summations in~(\ref{eq:A1S}), the simplification of the summations in~(\ref{eq:A2S}) will need new simplification techniques before the existing framework is applicable. The summations in~(\ref{eq:A2S}) include three double sums and one triple sum. We first simplify the inner summations over $j$ of the three double sums by using the identities~(\ref{eq:Bn1}) and~(\ref{eq:Bn5}) of the new simplification framework.
Specifically, the inner summation
\begin{equation}\label{eq:s_2}
\sum _{j=0}^{2 k} \frac{2 (j+1) (2k-j+1) \psi_0 (j+2) \psi_0 (2k-j+2)}{\Gamma (j+1) \Gamma (a+j+1) \Gamma (2k-j+1) \Gamma (a-j+2 k+1)}
\end{equation}
of the first double sum in~(\ref{eq:A2S}) is simplified to a semi-closed form representation that can be compactly written as
\begin{align}\label{eq:s2_result}
&\frac{1}{\Gamma (2 k) \Gamma (a+2 k+1) \Gamma (2 a+2 k+1)}\Bigg(\!\left(\frac{1-4 a^2}{4 a+8 k-2}+a-8 k^2-14 k-\frac{15}{2}\right)\nonumber\\
&\times \sum _{j=1}^{2 k} \frac{\Gamma (2 a-j+4 k)}{ \Gamma (a-j+2 k)j^2}+\left(\left(\frac{1-4 a^2}{4 a+8 k-2}+a-8 k^2-14 k-\frac{15}{2}\right) \psi _0(2 k)\right.\nonumber\\
&+\!\left.\frac{1-4 a^2}{2 (2 a+4 k-1)^2}+\frac{3-4 a^2}{4 a+8 k-2}+\frac{4 (a-1)}{a+2 k}+a-14 k-\frac{4}{k}-19\right) \nonumber\\
&\times\sum _{j=1}^{2 k} \frac{\Gamma (2 a-j+4 k)}{\Gamma (a-j+2 k)j }\!\Bigg)+\mathrm{CF},
\end{align}
where the shorthand notation $\mathrm{CF}$ denotes the closed-form terms omitted. The result~(\ref{eq:s2_result}) is obtained by using the identities~(\ref{eq:ofgi}), ~(\ref{eq:Bn1}), and~(\ref{eq:Bn5}) after one rewrites the summation~(\ref{eq:s_2}) as
\begin{align}
&8 \sum _{j=2}^{2 k+1} \frac{1}{\Gamma (j) \Gamma (a+j) \Gamma (2 k-j+2) \Gamma (a-j+2 k+2)}\nonumber\\
&+8 \sum _{j=1}^{2 k+1} \frac{\psi _0(j)}{\Gamma (j) \Gamma (a+j) \Gamma (2 k-j+2) \Gamma (a-j+2 k+2)}\nonumber\\
&+8 \sum _{j=1}^{2 k} \frac{\psi _0(j)}{\Gamma (j) \Gamma (a+j+1) \Gamma (2 k-j+1) \Gamma (a-j+2 k+1)}\nonumber\\
&+2 \sum _{j=1}^{2 k+1} \frac{\psi _0(j) \psi _0(2 k-j+2)}{\Gamma (j) \Gamma (a+j) \Gamma (2 k-j+2) \Gamma (a-j+2 k+2)}
\nonumber\\
&+4 \sum _{j=1}^{2 k} \frac{\psi _0(j) \psi _0(2 k-j+1)}{\Gamma (j) \Gamma (a+j) \Gamma (2 k-j+1) \Gamma (a-j+2 k+2)}
\nonumber\\
&+2 \sum _{j=1}^{2 k-1} \frac{\psi _0(j) \psi _0(2 k-j)}{\Gamma (j) \Gamma (a+j+1) \Gamma (2 k-j) \Gamma (a-j+2 k+1)}\nonumber\\
&-\frac{4 }{\Gamma (a+1) \Gamma (a+2 k+1)}\left(\frac{\psi_0 (2 k+1)}{\Gamma(2 k+1)}+\frac{\psi_0 (2 k)}{\Gamma (2 k)}\right).
\end{align}
Using the same approach, the other inner summations over $j$ of the first three double summations in~(\ref{eq:A2S}) are simplified to a semi closed-form representation, cf.~(\ref{eq:s2_result}), where the resulting unsimplifiable basis are
\begin{align}
&\sum _{j=1}^{2 k} \frac{\Gamma (2 a-j+4 k)}{ \Gamma (a-j+2 k)j}\label{basis1}\\
&\sum _{j=1}^{2 k} \frac{\Gamma (2 a-j+4 k)}{ \Gamma (a-j+2 k)j^2}\label{basis2}.
\end{align}

Now we simplify the inner summations over the indexes $i$ and $j$ of the triple sum in~(\ref{eq:A2S}). After the partial fraction decomposition
\begin{equation}\label{eq:A2Stippd}
\frac{1}{(j)_3}=-\frac{1}{j+1}+\frac{1}{2 (j+2)}+\frac{1}{2 j},
\end{equation}
the corresponding inner summation is written as
\begin{align}\label{eq:A2Sti}
&\sum _{j=1}^{2 k-1} \sum _{i=1}^{2 k-j} \frac{4 i (2 k-i+2) \Gamma (a+2 k+j+3)\Gamma (a+2 k-j+1)}{ \Gamma (j+i+1) \Gamma (2 k-j-i+1)\Gamma (a+i) \Gamma (a+2 k-i+2)}\nonumber\\
&\times(\psi _0(a+2 k+j+3)-\psi _0(2 a+4 k+4)+\psi _0(2 k-i+3)-\psi _0(j+3))\nonumber\\
&\times\left(-\frac{1}{j+1}+\frac{1}{2 (j+2)}+\frac{1}{2 j}\right).
\end{align}
We first consider the simplification of the summation
\begin{align}~\label{eq:A2Se}
&\sum _{j=1}^{2 k-1} \sum _{i=1}^{2 k-j} \frac{4 i (2 k-i+2) \Gamma (a+2 k+j+3)\Gamma (a+2 k-j+1)}{(j+1) \Gamma (j+i+1) \Gamma (2 k-j-i+1)\Gamma (a+i) \Gamma (a+2 k-i+2)}\nonumber\\
&\times(\psi _0(a+2 k+j+3)-\psi _0(2 a+4 k+4)+\psi _0(2 k-i+3)-\psi _0(j+3)),
\end{align}
which involves the term $1/(j+1)$ from the decomposition~(\ref{eq:A2Stippd}), and the remaining parts in~(\ref{eq:A2Sti}) can be simplified in the same manner. For convenience, we divide~(\ref{eq:A2Se}) into the three sums
\begin{align}
&\sum _{j=1}^{2 k-1} \sum _{i=1}^{2 k-j} \frac{4 i (2 k-i+2) \Gamma (a+2 k+j+3)\Gamma (a+2 k-j+1)}{(j+1) \Gamma (j+i+1) \Gamma (2 k-j-i+1)\Gamma (a+i) \Gamma (a+2 k-i+2)}\nonumber\\
&\times(-\psi _0(2 a+4 k+4)-\psi _0(j+3))\label{eq:A2Se1}\\
\nonumber\\
&\sum _{j=1}^{2 k-1} \sum _{i=1}^{2 k-j} \frac{4 i (2 k-i+2) \Gamma (a+2 k+j+3)\Gamma (a+2 k-j+1)}{(j+1) \Gamma (j+i+1) \Gamma (2 k-j-i+1)\Gamma (a+i) \Gamma (a+2 k-i+2)}\nonumber\\
&\times\psi _0(2 k-i+3)\label{eq:A2Se2}\\
\nonumber\\
&\sum _{j=1}^{2 k-1} \sum _{i=1}^{2 k-j} \frac{4 i (2 k-i+2) \Gamma (a+2 k+j+3)\Gamma (a+2 k-j+1)}{(j+1) \Gamma (j+i+1) \Gamma (2 k-j-i+1)\Gamma (a+i) \Gamma (a+2 k-i+2)}\nonumber\\
&\times\psi _0(a+2 k+j+3)\label{eq:A2Se3}.
\end{align}
The digamma functions in the summation~(\ref{eq:A2Se1}) is independent to the index $i$. This fact allows us to evaluate the summation over $i$ by using the identity~(\ref{eq:lemma2}) of Lemma \ref{lemma2}. Specifically, we first rewrite~(\ref{eq:A2Se1}) as
\begin{align}
&\sum _{j=1}^{2 k-1} \frac{\Gamma (a-j+2 k+1) \Gamma (a+j+2 k+3) (-\psi_0(2 a+4 k+4)-\psi_0(j+3))}{j+1}\nonumber\\
&\times\sum _{i=1}^{2 k-j}\frac{1}{\Gamma (a+i) \Gamma (i+j) \Gamma (a-i+2 k+2) \Gamma (2k-i-j)} \frac{4 i (2k-i+2)}{(i+j) (2k-j-i) }.\label{eq:A2Se12}
\end{align}
We now partial fraction decompose the rational function with respect to $i$ in~(\ref{eq:A2Se12}). The summation~(\ref{eq:A2Se1}) becomes
\begin{align}
&\sum _{j=1}^{2 k-1} \frac{\Gamma (a-j+2 k+1) \Gamma (a+j+2 k+3) (-\psi_0(2 a+4 k+4)-\psi_0(j+3))}{j+1}\nonumber\\
&\times\left(4 \sum _{i=1}^{2 k-j-1} \frac{1}{\Gamma (a+i) \Gamma (i+j) \Gamma (a-i+2 k+2) \Gamma (2k-j-i)}-\frac{j+2 k+2}{k}\right.\nonumber\\
&\times  2j\sum _{i=1}^{2k-j-1} \frac{1}{\Gamma (a+i) \Gamma (i+j+1) \Gamma (a-i+2 k+2) \Gamma (2k-j-i)}+\frac{2 k-j}{k}\nonumber\\
& \times\!\left.2(j+2) \sum _{i=1}^{2 k-j} \frac{1}{\Gamma (a+i) \Gamma (i+j) \Gamma (a-i+2 k+2) \Gamma (2k-j+1-i)}\right),\label{eq:A2Se13}
\end{align}
where the inner summations over $i$ can be evaluated by the identity~(\ref{eq:lemma2}) with the specializations
\begin{align}
&a= j,\qquad ~~~~~~\!b= a+j+2,\qquad c= a,\qquad m= 2 k-j-1\\
&a= j+1,\qquad b= a+j+2,\qquad c= a,\qquad m= 2 k-j-1\\
&a= j,\qquad ~~~~~~\!b= a+j+1,\qquad c= a,\qquad m= 2 k-j.
\end{align}
The summation~(\ref{eq:A2Se13}) becomes
\begin{align}
&\sum _{j=1}^{2 k-1} \left(j+1-\frac{(a+2 k+1)^2}{j+1}\right)\frac{\psi _0(2 a+4 k+4)+\psi _0(j+3)}{\Gamma (a) \Gamma (a+2 k+1)}\nonumber\\
&\times\left(4 (a-j+2 k-1) (a+j+2 k+1)\sum _{i=1}^{2 k-j-1} \frac{\Gamma (a+i-1) \Gamma (a-i+4 k)}{\Gamma (i) \Gamma (2 k-i)}\right.\nonumber\\
&-\frac{2 j (j+2 k+2) (a-j+2 k-1)}{k}\sum _{i=1}^{2k-j-1} \frac{\Gamma (a+i-1) \Gamma (a-i+4 k+1)}{\Gamma (i) \Gamma (2k-i+1)}\nonumber\\
&+\!\left.\frac{2 (j+2) (2 k-j) (a+j+2 k+1)}{k}\sum _{i=1}^{2 k-j} \frac{\Gamma (a+i-1) \Gamma (a-i+4 k+1)}{\Gamma (i) \Gamma (2k-i+1)}\right).\label{eq:A2Se1l}
\end{align}
In~(\ref{eq:A2Se1l}), we further change the orders of summations to simplify the sums over $j$. These summations admit closed-form expressions by using the identities~(\ref{eq:B3})--(\ref{eq:B32}) and (\ref{eq:B4}). The remaining summations only consist of single sums, which can be further simplified by using the identities~(\ref{eq:Bn6})--(\ref{eq:Bn7}) into the unsimplifiable basis~(\ref{basis1}) and
\begin{align}
&\sum _{j=1}^{2 k} \frac{\Gamma (2 a-j+4 k-1)\psi _0(j) }{\Gamma (a-j+2 k+1)}\label{basis3}\\
&\sum _{j=1}^{2 k} \frac{ \Gamma (2 a-j+4 k-1)\psi _0(j)}{\Gamma (a-j+2 k)j }\label{basis4}\\
&\sum _{j=1}^{2 k} \frac{\Gamma (2 a-j+4 k) \psi _0(2 a-j+4 k)}{\Gamma (a-i+2 k)j }\label{basis5}.
\end{align}

We now move on to simplifying the summation~(\ref{eq:A2Se2}). In~(\ref{eq:A2Se2}), we change the summation order to evaluate the sum over $j$ by the identity~(\ref{eq:lemma3}) in Lemma~\ref{lemma3}. The summation then becomes
\begin{align}
&\Gamma^2 (a+2 k+2)\sum _{i=1}^{2 k-1} \sum _{j=1}^{2 k-i}\frac{4 (i+j+1) (a+i+j+1) (2k-i-j+1)}{j \Gamma (i+1) \Gamma (a+i+2) \Gamma (2k-i+1) \Gamma (a-i+2 k+2)}\nonumber\\
&\times (a-i-j+2 k+1) \psi_0 (i+j+2)-\Gamma^2 (a+2 k+2)\sum _{i=1}^{2 k-1}4 i (2k-i+2)\nonumber\\
&\times\frac{\psi _0(2k-i+3) \left(\psi _0(a+i+1)-\psi _0(a+2 k+2)\right)}{\Gamma (i) \Gamma (a+i) \Gamma (2k-i+2) \Gamma (a-i+2 k+2)}-\Gamma (a+2 k+1) \nonumber\\
&\times\Gamma (a+2 k+3)\sum _{i=1}^{2 k-1} \frac{4 (2k-i+2) (2k-i+1) \psi_0 (2k-i+3)}{\Gamma (i) \Gamma (a+i) \Gamma (2k-i+2) \Gamma (a-i+2 k+2)}+\frac{1}{\Gamma (a+2)}\nonumber\\
&\times\frac{\Gamma (a+2 k+2)}{\Gamma (2 k+1) } \sum _{i=1}^{2 k-1} \frac{4 i (a+i) (2k-i+2) (a-i+2 k+2) \psi _0(2k-i+3)}{2k-i+1}.\label{eq:A2Se21}
\end{align}
We now shift the index $i\to 2k-i$ of the three single sums in the above result~(\ref{eq:A2Se21}). The first two single sums are simplified into the unsimplifiable basis~(\ref{basis1}) by using the identities~(\ref{eq:Bn1})--(\ref{eq:Bn2}). The last single sum in~~(\ref{eq:A2Se21}) is simplified directly into a closed-form representation by using the identities~(\ref{eq:B3})--(\ref{eq:B32}) and (\ref{eq:B4}). For the double summation in~(\ref{eq:A2Se21}), we take the partial fraction decomposition
\begin{align}
&\frac{4 (i+j+1) (a+i+j+1) (2k-i-j+1)(a-i-j+2 k+1) }{j}\nonumber\\
&=\frac{4 (i+1) (a+i+1) (2k-i+1) (a-i+2 k+1)}{j}+4 j^3+16 (i-k) j^2\nonumber\\
&~~~\!~-4 \left(a^2+2 a (k+1)-6 i^2+12 i k-4 k^2+4 k+2\right)j \nonumber\\
&~~~\!~+8 (i-k) \left(-a^2-2 a (k+1)+2 (i+1) (i-2 k-1)\right).\label{eq:A2Se21p}
\end{align}
In~(\ref{eq:A2Se21p}), the polynomial part can be simplified similarly approach as in~(\ref{eq:A2Se1l}), while the rational part is simplified as follows. The corresponding summation is
\begin{align}
\sum _{i=1}^{2 k-1} \frac{4 (i+1) (2k-i+1) \Gamma^2 (a+2 k+2)}{\Gamma (i+1) \Gamma (a+i+1) \Gamma (2k-i+1) \Gamma (a-i+2 k+1)}\sum _{j=1}^{2 k-i} \frac{\psi_0 (i+j+2)}{j}.\label{eq:A2Se22}
\end{align}
To evaluate~(\ref{eq:A2Se22}), the sum over $j$ is computed by using the identity~(\ref{eq:B11ic}) with the specialization
\begin{equation}
i= j,\qquad a= i+1,\qquad b= 1,\qquad m= 2 k-i,
\end{equation}
then we shift the summation index $i\to2k-i$ of the outer sum. Consequently,~(\ref{eq:A2Se22}) is simplified to
\begin{align}
&-\sum _{i=1}^{2 k-1} \frac{4 (i+1) (2k-i+1) \Gamma^2 (a+2 k+2)}{\Gamma (i+1) \Gamma (a+i+1) \Gamma (2k-i+1) \Gamma (a-i+2 k+1)}\nonumber\\
&\times\Bigg(\sum _{j=1}^{2k-i+1} \frac{\psi_0(i+j+1)}{j}-\frac{1}{2}(\left(\psi _0(2k-i+2)+\psi _0(i+1)-2 \psi _0(1)\right)\nonumber\\
&\times\left(\psi _0(2k-i+2)+\psi _0(i+1)\right)\left.-\psi _1(2k-i+2)-\psi _1(i+1)+2 \psi _1(1)\right)\!\Bigg).\label{eq:A2Se23}
\end{align}
In~(\ref{eq:A2Se23}), the double sum is the same form as in~(\ref{eq:A2Se22}) but with a negative sign. Therefore, by adding up~(\ref{eq:A2Se22}) and (\ref{eq:A2Se23}), and dividing the result by two, we reduce the double summation in~(\ref{eq:A2Se22}) to a single sum as
\begin{align}
&\sum _{i=1}^{2 k-1} \frac{4 (i+1) (2k-i+1) \Gamma^2 (a+2 k+2)}{\Gamma (i+1) \Gamma (a+i+1) \Gamma (2k-i+1) \Gamma (a-i+2 k+1)}\sum _{j=1}^{2 k-i} \frac{\psi_0 (i+j+2)}{j}\nonumber\\
&=\sum _{i=1}^{2 k-1} \frac{2 \Gamma^2 (a+2 k+2)}{\Gamma (i+1) \Gamma (a+i+1) \Gamma (2k-i+1) \Gamma (a-i+2 k+1)}\Big(\frac{1}{2}(2k-i+1)\nonumber\\
&~~~\!~\times  (i+1)((\psi _0(2k-i+2)+\psi _0(i+1))(\psi _0(2k-i+2)+\psi _0(i+1)\nonumber\\
&~~~\!~-2 \psi _0(1))-\psi _1(2k-i+2)-\psi _1(i+1)+2 \psi _1(1))+(2k-i+1) \nonumber\\
&~~~\!~\times\left(\psi _0(2k-i+2)+\psi _0(i+2)-\psi _0(2 k+3)-\psi _0(1)\right)-(i+1) \psi _0(2 k+3)\Big),
\end{align}
which is further simplified into a semi closed-form expression involving the unsimplifiable basis~(\ref{basis1})--(\ref{basis2})
by using the identities~(\ref{eq:Bn1}), and~(\ref{eq:Bn4}).

We now simplify the summation (\ref{eq:A2Se3}) as a last piece in~(\ref{eq:A2S}). We first change the summation order in~(\ref{eq:A2Se3}) to evaluate the sum over $j$ by the identity~(\ref{eq:Bn8}). As a result, (\ref{eq:A2Se3}) is simplified to
\begin{align}
&\Gamma^2 (a+2 k+2) \sum _{i=1}^{2 k} 4 i (a+i) (2k-i+2) (a-i+2 k+2)\nonumber\\
&\times\sum _{j=1}^{2k-i+1} \frac{\psi _0(a-i+2 k+3)-\psi _0(a-i-j+2 k+3)+\psi _0(a+2 k+2)}{j \Gamma (i+j) \Gamma (a+i+j+1) \Gamma (2k-i-j+2) \Gamma (a-i-j+2 k+3)}\nonumber\\
&-\Gamma^2 (a+2 k+2) \sum _{i=1}^{2 k-1} \frac{4 i (2k-i+2) \psi _0(a+2 k+2)}{\Gamma (i) \Gamma (a+i) \Gamma (2k-i+2) \Gamma (a-i+2 k+2)}\nonumber\\
&\times \left(\psi _0(a+i+1)-\psi _0(a+2 k+2)\right)-\Gamma (a+2 k+1) \Gamma (a+2 k+3)\nonumber\\
&\times \sum _{i=1}^{2 k-1} \frac{4 (-i+2 k+2) \psi _0(a+2 k+3)}{\Gamma (i) \Gamma (a+i) \Gamma (-i+2 k+1) \Gamma (a-i+2 k+2)}-16 (a+2) k (a+2 k)\nonumber\\
&\times\frac{\Gamma (a+2 k+2) }{\Gamma (a+2) \Gamma (2 k+1)}\left(\psi _0(a+2 k+2)-\psi _0(a+2)+\psi _0(a+3)\right),\label{eq:A2Se31}
\end{align}
where the single summations are simplified similarly to the ones in~(\ref{eq:A2Se21}). To simplify the double summation in~(\ref{eq:A2Se31}), we shift the summation index $i\to 2k-i-j+2$ as
\begin{align}
&\sum _{i=1}^{2 k} \frac{\Gamma^2 (a+2 k+2)}{\Gamma (i) \Gamma (a+i+1) \Gamma (2k-i+2) \Gamma (a-i+2 k+3)}\nonumber\\
&\times\sum _{j=1}^{2k-i+1} \frac{4 (i+j) (a+i+j) (2k-i-j+2) (a-i-j+2 k+2)}{j}\nonumber\\
&\times\left(\psi _0(a+i+j+1)-\psi _0(a+i+1)+\psi _0(a+2 k+2)\right).\label{eq:A2Se32}
\end{align}
In~(\ref{eq:A2Se32}), the inner summations over $j$ are simplified into a closed-form representation except for the sum
\begin{equation}
\sum _{j=1}^{2k-i+1}\frac{\psi _0(a+i+j+1)}{j}.\label{eq:A2Se32u}
\end{equation}
Therefore, the resulting summations consist of several single sums as well as a double sum. These single sums are simplified into closed-form expressions by using the identity~(\ref{eq:ofgi}) and its derivative of $a$.
The double summation involves the sum in~(\ref{eq:A2Se32u}), and is written as
\begin{align}
\sum _{i=1}^{2 k} \frac{4 i (2k-i+2) \Gamma^2 (a+2 k+2)}{\Gamma (i) \Gamma (a+i) \Gamma (2k-i+2) \Gamma (a-i+2 k+2)}\sum _{j=1}^{2k-i+1}\frac{\psi _0(a+i+j+1)}{j}.\label{eq:A2Se33}
\end{align}
To process~(\ref{eq:A2Se33}), we first evaluate the inner summation by the identity~(\ref{eq:B11ic}) with the specialization
\begin{equation}
i=j, \qquad a=a+1, \qquad  b=i,\qquad m=2k-i+1,
\end{equation}
and the sum~(\ref{eq:A2Se33}) becomes
\begin{align}
&\sum _{i=1}^{2 k} \frac{4 i (2k-i+2) \Gamma^2 (a+2 k+2)}{\Gamma (i) \Gamma (a+i) \Gamma (2k-i+2) \Gamma (a-i+2 k+2)}\Bigg(\sum _{j=1}^{2k-i+1} \frac{\psi _0(i+j)}{j}\nonumber\\
&-\sum _{j=1}^{a+1} \frac{\psi _0(j+2 k+1)}{i+j-1}+\frac{1}{2}(-\psi _1(a+i+1)+\psi _1(i)+\left(\psi _0(a+i+1)-\psi _0(i)\right)\nonumber\\
&\times \left(\psi _0(a+i+1)+2 \left(\psi _0(2k-i+2)-\psi_0(1)\right)+\psi _0(i)\right))\!\Bigg).\label{eq:A2Se34}
\end{align}
In comparison with the summations in~(\ref{eq:A2Se23}), the summations in~(\ref{eq:A2Se34}) can be simplified similarly except for a double sum
\begin{align}
\sum _{i=1}^{2 k} \frac{4 i (2k-i+2) \Gamma^2 (a+2 k+2)}{\Gamma (i) \Gamma (a+i) \Gamma (2k-i+2) \Gamma (a-i+2 k+2)}\sum _{j=1}^{a+1} \frac{\psi _0(j+2 k+1)}{i+j-1},\label{eq:A2Se35}
\end{align}
where it does not permit further simplifications by directly evaluating the inner sum. To proceed further~(\ref{eq:A2Se35}), we change the summation order and take the partial fraction decomposition
\begin{equation}
\frac{4 i (2k-i+2)}{i+j-1}=-\frac{4 (j-1) (j+2 k+1)}{i+j-1}-4 (i-1)+4 (j+2 k).
\end{equation}
The summation (\ref{eq:A2Se35}) is now written as
\begin{align}
&-\Gamma^2 (a+2 k+2)\sum _{j=1}^{a+1} 4 (j-1) (j+2 k+1) \psi _0(j+2 k+1)\nonumber\\
&\times\sum _{i=1}^{2 k} \frac{1}{\Gamma (i) \Gamma (a+i) \Gamma (2k-i+2) \Gamma (a-i+2 k+2)}\frac{1}{i+j-1}\nonumber\\
&-4\sum _{j=1}^{a+1} \psi _0(j+2 k+1)\sum _{i=2}^{2 k} \frac{ \Gamma^2 (a+2 k+2) }{\Gamma (i-1) \Gamma (a+i) \Gamma (2k-i+2) \Gamma (a-i+2 k+2)}\nonumber\\
&+4 \sum _{j=1}^{a+1} \psi _0(j+2 k+1) \sum _{i=1}^{2 k} \frac{(j+2 k) \Gamma^2 (a+2 k+2)}{\Gamma (i) \Gamma (a+i) \Gamma (2k-i+2) \Gamma (a-i+2 k+2)}.\label{eq:A2Se351}
\end{align}
In~(\ref{eq:A2Se351}), the second and third double summations are simplified directly into closed-form expressions by using the identities~(\ref{eq:ofgi}), and (\ref{eq:B1})--(\ref{eq:B3}). The first double summation in~(\ref{eq:A2Se351}), after applying the identity (\ref{eq:lemma1}) with the specialization
\begin{equation}
 b= a, \qquad c=j-1, \qquad m=2 k+1,
\end{equation}
 is simplified to
\begin{align}
&\frac{(a+2 k+1) \Gamma (a+2 k+2)}{\Gamma (a+1)  \Gamma (2 k+1)}\sum _{j=1}^{a+1}\frac{ 4 (j-1) (j+2 k+1)}{(j+2 k)} \psi _0(j+2 k+1)-\nonumber\\
&\frac{(a+2 k+1) \Gamma (a+2 k+2)}{\Gamma (2 a+2 k+1) }\sum _{j=1}^{a+1}\sum _{i=1}^{2 k+1} 4 (j-1) (j+2 k+1) \psi _0(j+2 k+1)\nonumber\\
&\times \frac{\Gamma (2 a-i+4 k+2) }{\Gamma (2 k-i+2) \Gamma (a-i+2 k+2)}\frac{\Gamma (2 k-i+j+1)}{\Gamma (j+2 k+1)}.\label{eq:A2Se352}
\end{align}
The single sum over $j$ in~(\ref{eq:A2Se352}) admits a closed-form representation by applying the identity~(\ref{eq:B1}), (\ref{eq:B3}), and~(\ref{eq:B4}). The double sum in~(\ref{eq:A2Se352}) is simplified by first using the identities~(\ref{eq:B201})--(\ref{eq:B202}) to compute the sum over $j$, before using the identities~(\ref{eq:B20})--(\ref{eq:B22}) to evaluate the sum over $i$. Consequently, we obtain a semi closed-form result of (\ref{eq:A2Se352}), and the corresponding unsimplifiable basis are~(\ref{basis1}) and
\begin{equation}
\sum _{i=1}^{2 k} \frac{\psi_0(2 a+i+2 k)}{i} \label{A2basis_inner}.
\end{equation}

We have so far completed the simplification of the inner summations over indexes $i$ and $j$ in~(\ref{eq:A2S}). Summing up these results, we observe the complete cancellation among the unsimplifiable basis (\ref{basis1})-(\ref{basis2}) and (\ref{basis3})-(\ref{basis5}). The only survived term is~(\ref{A2basis_inner}). In the resulting outer sum over index $k$ in~(\ref{eq:A2S}), all the gamma functions are reduced to rational ones. Therefore, (\ref{eq:A2S}) can now be simplified into a similar form as~(\ref{eq:A1S}) in terms of the unsimplifiable basis~(\ref{eq:s1basis}).

Inserting the resulting summations of~(\ref{eq:A1S}) and~(\ref{eq:A2S}) into~(\ref{eq:IAex}), we obtain
\begin{align}
\mathrm{I_A}=&\sum _{k=1}^{m}\left(\left(\frac{2 a-1}{2 k}-\frac{2 a+1}{2 (a+k)}+\frac{2 a+1}{2 k+1}-\frac{2a-1}{2 a+2 k+1}\right) \psi _0(2 a+4 k)\right.\nonumber\\
&+\left(\frac{2 (a+m) (2 a+3 m-1)}{2 a+4 m-1}\left(\frac{1}{a+2 k+1}+\frac{1}{a+2 k}\right)-\frac{2 a}{2 k+1}\right.\nonumber\\
&+\!\left.\frac{1-2 a}{2 k}+\frac{1}{2 (a+k)}\right) \psi _0(2 a+2 k)+\left(\frac{2 a m-2 a+6 m^2-6 m+1}{(2 k+1) (2 a+4 m-1)}\right.\nonumber\\
&+\!\left.\frac{4 a m+2 a+12 m^2-1}{4 k (2 a+4 m-1)}+\frac{1}{4 (a+k)}\right) \psi _0(a+2 k)-\left(\frac{2 a+2 m-1}{2 (2 k+1)}\right.\nonumber\\
&+\!\left.\frac{a+m}{2 k}\right) \psi _0(a+k+m)-\left(\frac{2 a+2 m-1}{4 k}+\frac{a+m}{2 k+1}\right) \nonumber\\
&\times\!\left.\psi _0\!\left(a+k+m+\frac{1}{2}\right)-\left(\frac{1}{2 (2 k+1)}+\frac{1}{4 k}\right) \psi _0(a+k)\right)+\mathrm{CF}.\label{I_Aas}
\end{align}
We remind the readers that the omitted closed-form terms are denoted by the abbreviation $\mathrm{CF}$, which in general is different in each use.
Now we discuss the simplification of the summations in~(\ref{eq:B1S}) and~(\ref{eq:B2S}) in computing the integral $\mathrm{I_B}$ in~(\ref{eq:IBex}). The single summation in~(\ref{eq:B1S}) are computed into the unsimplifiable basis of the form~(\ref{eq:s1basis}) by using the identities~(\ref{eq:B1})--(\ref{eq:B12c1}) along with the result~(\ref{eq:m_poly0}).

The simplification of the double summation in~(\ref{eq:B2S}) will utilize Lemma~\ref{lemma6}. By partial fraction decomposing the rational functions in $j$ and shifting the summation index $j\to m-k-j$, the simplification of~(\ref{eq:B2S}) boils down to computing the summations
\begin{align}
&\sum _{k=1}^{m-1}\mathrm{p}_{c,\lambda}(k) \frac{\Gamma (2m-2k+1)}{\Gamma (2 a-2 k+2 m+1)}\sum _{j=1}^{m-k}\frac{\Gamma (2 m+2 a-2 j-2 k+1)}{\Gamma (2 m-2 j-2 k+1)}\nonumber\\
&\times \left(\frac{1}{\left(a-j-2 k+2 m+\frac{1}{2}+c\right)^{\lambda}}-\frac{1}{(j+c)^{\lambda}}\right),\label{eq:b2spfd}
\end{align}
where the parameter $c$ and $\lambda$ take the values $c=-1/2, 0, 1/2$, $\lambda=1,2$, and $\mathrm{p}_{c,\lambda}(k)$ denotes the rational functions in $k$. These summations are simplified by using Lemma~\ref{lemma6} to evaluate the inner sums over $j$, which reduces the gamma ratio
\begin{equation}\label{eq:gratio}
\frac{\Gamma (2m-2k+1)}{\Gamma (2 a-2 k+2 m+1)}
\end{equation}
into a rational function. 
Specifically, in the case when $c=1/2$ and $\lambda=1$ in~(\ref{eq:b2spfd}), the corresponding inner summation is
\begin{equation}
\sum _{j=1}^{m-k}\frac{\Gamma (2 m+2 a-2 j-2 k+1)}{\Gamma (2 m-2 j-2 k+1)} \left(\frac{1}{1+a-j-2 k+2 m}-\frac{1}{j+\frac{1}{2}}\right).\label{eq:s_4}
\end{equation}
By using the relation~\cite{Brychkov08}
\begin{equation}
\Gamma (2 k)=\frac{2^{2 k-1} }{\sqrt{\pi }}\Gamma (k) \Gamma\! \left(k+\frac{1}{2}\right),\label{eq:gamma2}
\end{equation}
the summation~(\ref{eq:s_4}) is written as
\begin{equation}
2^{2a}\sum _{j=1}^{m-k}\Phi _{0,-\frac{1}{2},a,a-\frac{1}{2}}^{(m-k-j)}\left(\frac{1}{1+a-j-2 k+2 m}-\frac{1}{j+\frac{1}{2}}\right).\label{eq:s_4next}
\end{equation}
Here, we recall the notation~(\ref{eq:lemma56n})
\begin{equation}
\Phi_{a,b,c,d}^{(x)}=\frac{\Gamma (x+c+1) \Gamma (x+d+1) }{\Gamma (x+a+1) \Gamma (x+b+1)}.
\end{equation}
By using Lemma~\ref{lemma6} with the specialization
\begin{equation}
a=-\frac{1}{2},~~~b=a,~~~c=\frac{1}{2},~~~m\to m-k,
\end{equation}
(\ref{eq:s_4}) is simplified to
\begin{align}
&\sum _{j=1}^{m-k}\frac{\Gamma (2 m+2 a-2 j-2 k+1)}{\Gamma (2 m-2 j-2 k+1)} \left(\frac{1}{1+a-j-2 k+2 m}-\frac{1}{j+\frac{1}{2}}\right)\nonumber\\
&=\frac{\Gamma (2 a-2 k+2 m+2)}{\Gamma (2m-2k+2)}\Bigg(\psi _0(a+1)-\psi _0(a-2 k+2 m+1)\nonumber\\
&~~~\!~-\frac{2 (2 a+1)}{2 a-2 k+2 m+1}+2\Bigg).\label{eq:s_4r}
\end{align}

To simplify the summation~(\ref{eq:b2spfd}) for $c=1/2$ and $\lambda=2$, we again use the relation~(\ref{eq:gamma2}), and the corresponding inner summation
\begin{equation}\label{eq:s_50}
\sum _{j=1}^{m-k}\frac{\Gamma (2 m+2 a-2 j-2 k+1)}{\Gamma (2 m-2 j-2 k+1)} \left(\frac{1}{(1+a-j-2 k+2 m)^2}-\frac{1}{\left(j+\frac{1}{2}\right)^2}\right)
\end{equation}
 becomes
\begin{equation}
2^{2a}\sum _{j=1}^{m-k}\Phi _{0,-\frac{1}{2},a,a-\frac{1}{2}}^{(m-k-j)} \left(\frac{1}{(1+a-j-2 k+2 m)^2}-\frac{1}{(j+\frac{1}{2})^2}\right)\label{eq:s_5},
\end{equation}
which is evaluated by taking derivative of the parameter $c$ of the identity~(\ref{eq:lemma6}) with the specialization
\begin{equation}
a=-\frac{1}{2},~~~b=a,~~~m\to m-k,
\end{equation}
before setting $c=1/2$. As a result, (\ref{eq:s_50}) is simplified to
\begin{align}
&\sum _{j=1}^{m-k}\frac{\Gamma (2 m+2 a-2 j-2 k+1)}{\Gamma (2 m-2 j-2 k+1)} \left(\frac{1}{(1+a-j-2 k+2 m)^2}-\frac{1}{\left(j+\frac{1}{2}\right)^2}\right)\nonumber\\
&=\frac{\Gamma (2 m+2 a-2 k+2)}{\Gamma (2m-2k+2)}\sum _{j=1}^{m-k}\Bigg(\!\left(-\frac{4 a^2-1}{a+j-1}-\frac{1-4 a^2}{a+j}-\frac{4 a^2-6 a+2}{-2 a-2 j+3}\right.\nonumber\\
&~~~\!~-\!\left.\frac{2 a (2 a+1)}{2 a+2 j+1}+\frac{1}{a+2 j}-\frac{2-8 a}{2 a+2 j-1}-\frac{1}{-a-2 j+1}\right) \left(\psi _0\!\left(j+\frac{1}{2}\right)\right.\nonumber\\
&~~~\!~-\psi _0\!\left(a+j+\frac{1}{2}\right)+\psi _0(a-k+m+1)+\psi _0\!\left(a-k+m+\frac{3}{2}\right)+\psi _0(j)\nonumber\\
&~~~\!~-\!\left.\psi _0(a+j)-\psi _0(m-k+1)-\psi _0\!\left(m-k+\frac{3}{2}\right)\right)+\frac{a+j}{j (a+2 j)^2}\nonumber\\
&~~~\!~+\frac{1}{2 a+2 j-1}\left(\frac{a (2 a-1) (2 j-1)}{j (a+j)^2}-\frac{2 a (2 a+1)}{\left(a+j+\frac{1}{2}\right)^2}+\frac{2 j-1}{(a+2 j-1)^2}\right)\!\Bigg).
\label{eq:s_5r}
\end{align}

The other cases of $c$ and $\lambda$ combinations are obtained similarly. Inserting the results into~(\ref{eq:b2spfd}), the gamma ratio~(\ref{eq:gratio}) of the outer sum over $k$ is reduced to a rational function. The resulting summations are further computed into unsimplifiable basis of the form~(\ref{eq:s1basis}) by using the identities~(\ref{eq:B1})--(\ref{eq:B12c1}).

Inserting the resulting summations of~(\ref{eq:B1S}) and~(\ref{eq:B2S}) into~(\ref{eq:IBex}), we obtain
\begin{align}
\mathrm{I_B}=&\sum _{k=1}^{m}\left(\left(\frac{ 2 (a+m) (2 a+3 m-1)}{2 a+4 m-1}\left(\frac{1}{a+2 k+1}+\frac{1}{a+2 k}\right)+\frac{m}{k}+\frac{2 m-1}{2 k+1}\right)\right.\nonumber\\
&\times \psi_0(2 a+2 k)+\left(\frac{m}{2 a+2 k+1}+\frac{2 m-1}{4 (a+k)}-\frac{2 m-1}{2 (2 k+1)}-\frac{m}{2 k}\right)\times\nonumber\\
& \psi _0\!\left(a+2 k+\frac{1}{2}\right)+\left(\frac{m}{2 (a+k)}+\frac{m}{2 a+2 k+1}+\frac{ m (2 m-1)}{2 a+4 m-1}\right.\nonumber\\
&\times\left.\left.\left(\frac{1}{2 k+1}+\frac{1}{2 k}\right)\right) \psi _0(a+2 k)\right)+\mathrm{CF}.\label{I_Bas}
\end{align}

Now inserting the $\mathrm{I_A}$ expression (\ref{I_Aas}) and $\mathrm{I_B}$ expression~(\ref{I_Bas}) into~(\ref{def:vs}), we obtain
\begin{align}
\mathbb{V}\!\left[S\right]=&\frac{-2 a-2 m+1}{4} \Omega _1^{\left(a,a+\frac{1}{2}\right)}+\frac{-2 a-2 m+1}{4} \left(\Omega _2^{\left(a+\frac{1}{2},0\right)}+\Omega _2^{\left(a+\frac{1}{2},\frac{1}{2}\right)}\right)\nonumber\\
&-\frac{a+m}{2} \left(\Omega _2^{\left(a,0\right)}+\Omega _2^{\left(a,\frac{1}{2}\right)}\right)+\mathrm{CF},\label{eq:apvascr}
\end{align}
where
\begin{align}
\Omega _1^{\left(a,b\right)}=&\sum _{k=1}^m \left(\frac{\psi _0(a+k)}{b+k}+\frac{\psi _0(b+k)}{a+k}\right)\\
\Omega _2^{\left(a,b\right)}=&\sum _{k=1}^m \left(\frac{\psi _0(a+b+k+m)}{b+k}+\frac{\psi _0(a+k)}{b+k}+\frac{\psi _0(a+b+2 k)}{a+k}\right.\nonumber\\
&-\!\left.\frac{\psi _0(a+b+k)}{a+k}-\frac{\psi _0(a+b+2 k)}{b+k}\right)\label{eq:omega2}.
\end{align}
Simplifying the single sums in~(\ref{eq:apvascr}) by using the closed-form identities~(\ref{eq:B12c1}) and~(\ref{eq:B12c2}) directly leads to the desired result~(\ref{f:apv}). This completes the proof of Proposition~\ref{propav}.
\subsubsection*{Case B: Fixed number of particles}\label{subsubsec5}
In case B, the variance calculation boils down to simplifying the summation representations of $\mathcal{A}_1$, $\mathcal{A}_2$, $\mathcal{B}_1$, $\mathcal{B}_2$ as summarized  in~(\ref{eq:fA1S})--(\ref{eq:fB2S}) in Appendix \ref{subsecA2}.

Note that the simplification procedure in case A also works for the majority of the summations in case B. The only new summation in case B is the double summation
\begin{align}
&\sum _{k=1}^{m-2}4 (a+b-2 k+2 m+1) \frac{ \Gamma (m-k+1)}{\Gamma (a+b-k+m+1)}\sum _{j=1}^{m-k-1} \frac{\Gamma (a+b-j-k+m)}{\Gamma (m-k-j)}\nonumber\\
&\times(-1)^j(2 j+2 k-2 m-a-b+1) \left(\frac{1}{(j)_3 (a+b-j-2 k+2 m-1)_3}\right)^2\nonumber\\
&\times \left((1-a) j (a+b-j-2 k+2 m-1)-2 (a-k+m) (a+b-k+m)\right)\nonumber\\
&\times\left((1-b) j (a+b-j-2 k+2 m-1)-2 (b-k+m) (a+b-k+m)\right),\label{eq:fb2c}
\end{align}
which is obtained after opening the bracket of the double summation in~(\ref{eq:fB2S}) and shifting the index $k\to m-j-k$.
To process~(\ref{eq:fb2c}), we take the partial fraction decomposition of the rational functions (starting from the second to the last line) in $j$, the summation (\ref{eq:fb2c}) now boils down to
\begin{align}
&\sum _{k=1}^{m-2}\mathrm{p}_{c,\lambda}(k)\frac{\Gamma (m-k+1)}{\Gamma (a+b-k+m+1)}\sum _{j=1}^{m-k-1} \frac{(-1)^j \Gamma (m-k+a+b-j)}{\Gamma (m-k-j)}\nonumber\\
&\times\left(\frac{1}{(2m-2k-1+a+b-j+c)^\lambda}-\frac{1}{(j+c)^\lambda}\right),\label{eq:fb2c1}
\end{align}
where the parameters $c$ and $\lambda$ take the values $c=0,1,2$, $\lambda=1,2$, and $\mathrm{p}_{c,\lambda}(k)$ denotes the rational polynomials in $k$. The simplification of these summations will utilize the Lemma \ref{lemma5}. For the case when $c=0$ and $\lambda=1$ in~(\ref{eq:fb2c1}), the corresponding inner summation  can be simplified into a closed-form expression by directly using the result~(\ref{eq:lemma5c=0}), which is a special case of the identity~(\ref{eq:lemma5}) in Lemma~\ref{lemma5}. When $c=0$ and $\lambda=2$ in~(\ref{eq:fb2c1}), the inner summation
\begin{equation}
\sum _{j=1}^{m-k-1} \frac{(-1)^j  \Gamma (a+b-j-k+m)}{\Gamma (m-k-j)}\left(\frac{1}{(a+b-j-2 k+2 m-1)^2}-\frac{1}{j^2}\right)
\end{equation}
is simplified to
\begin{align}
&\frac{\Gamma (a+b-k+m)}{\Gamma (m-k)}\sum _{j=1}^{m-k-1}\Bigg(\!\left(\frac{1}{a+b-j-k+m}-\frac{1}{a+b-2 j-2 k+2 m}\right.\nonumber\\
&-\!\left.\frac{1}{a+b-2 j-2 k+2 m-1}\right)(\psi _0(m-k)-\psi _0(a+b-j-k+m)\nonumber\\
&-\psi _0(a+b-k+m)-\psi _0(m-k-j))+\frac{1}{(a+b-2 j-2 k+2 m-1)^2}\Bigg).\label{eq:s6r}
\end{align}
The result~(\ref{eq:s6r}) is obtained by taking derivative of $c$ of the identity~(\ref{eq:lemma5}) with the specialization
\begin{equation}
 a\to a+b,~~~m\to m-k-1,
\end{equation}
before setting $c=0$. For other combinations of $c$ and $\lambda$, the corresponding inner summations in (\ref{eq:fb2c1}) can be simplified similarly as the two cases above.

Inserting the simplification results of the inner summations into~(\ref{eq:fb2c}), the resulting sums only consist of polygamma and rational functions, which are further computed into unsimplifiable basis of the form~(\ref{eq:s1basis}) by using the identities~(\ref{eq:B1})--(\ref{eq:B12c1}).

For the integrals $\mathrm{I_A}$ and $\mathrm{I_B}$ in case B, the corresponding summations~(\ref{eq:fA1S})--(\ref{eq:fB2S}) are now simplified to the results shown below. For $\mathrm{I_A}$, one has
\begin{align}
\mathrm{I_A}=~\!\!&d_1 \sum _{k=1}^m \frac{\psi _0(a+k)}{k}+d_2 \sum _{k=1}^m \frac{\psi _0(b+k)}{k}-2 m \sum _{k=1}^m \frac{\psi _0(a+b+k+m)}{k}\nonumber\\
&+d_3 \sum _{k=1}^m \frac{\psi _0(a+b+k+m)}{a+k}+\left(d_3+d_4\right) \sum _{k=1}^m \frac{\psi _0(a+b+k+m)}{b+k}+\mathrm{CF},\label{eq:IAf}
\end{align}
where the coefficients $d_i$ are
\begin{align}
d_1=&\frac{2 m (a+m) \left(a^2+a (b+3 m)+2 b m+3 m^2-1\right)}{(a+b+2 m-1)_3}\\
d_2=&\frac{2 m (b+m) \left(a (b+2 m)+b^2+3 b m+3 m^2-1\right)}{(a+b+2 m-1)_3}\\
d_3=&\frac{2 (b+m) (a+b+m) \left(m (3 a+4 b)+(a+b)^2+3 m^2-1\right)}{(a+b+2 m-1)_3}\\
d_4=&\frac{2 (a-b) (a+b+m)}{a+b+2 m}.
\end{align}
For $\mathrm{I_B}$, one has
\begin{align}
\mathrm{I_B}=~\!\!&d_1 \sum _{k=1}^m \frac{\psi _0(a+k)}{k}+d_2 \sum _{k=1}^m \frac{\psi _0(b+k)}{k}+2 m \sum _{k=1}^m \frac{\psi _0(a+b+k)}{k} \nonumber\\
&-2 m\sum _{k=1}^m \frac{\psi _0(a+b+2 k)}{k}+d_3 \sum _{k=1}^m \frac{\psi _0(a+b+k)}{b+k}+d_4\sum _{k=1}^m \frac{\psi _0(b+k)}{a+k}\nonumber\\
&-d_4 \sum _{k=1}^m \frac{\psi _0(a+b+2 k)}{a+k}+d_4 \sum _{k=1}^m \frac{\psi _0(a+b+2 k)}{b+k}\nonumber\\
&+\left(d_3+d_4\right) \sum _{k=1}^m \frac{\psi _0(a+b+k)}{a+k}+\mathrm{CF}.\label{eq:IBf}
\end{align}
Inserting the $\mathrm{I_A}$ expression~(\ref{eq:IAf}) and $\mathrm{I_B}$ expression~(\ref{eq:IBf}) into~(\ref{def:vs}), we arrive at
\begin{align}
\mathbb{V}\!\left[S\right]=&-2 m \Omega _2^{(a+b,0)}-d_4 \Omega _2^{(b,a)}+\left(d_3+d_4\right) \Omega _3^{(a,b)}+\mathrm{CF},\label{eq:vsfsc}
\end{align}
where the summation $\Omega _2^{(a,b)}$ is defined in~(\ref{eq:omega2}), and
\begin{align}
\Omega _3^{(a,b)}=~\!\!&\sum _{k=1}^m \left(\frac{\psi _0(a+b+k+m)}{a+k}+\frac{\psi _0(a+b+k+m)}{b+k}-\frac{\psi _0(a+b+k)}{a+k}\right.\nonumber\\
&-\!\left.\frac{\psi _0(a+b+k)}{b+k}\right).
\end{align}
By using the identities~(\ref{eq:B12c2}) and~(\ref{eq:B12c3}), the result~(\ref{eq:vsfsc}) is simplified to the variance formula~(\ref{f:fpv}), which completes the proof of Proposition~{\ref{propfv}.

\section{Conclusions}\label{sec13}
In this work, we compute the exact yet explicit variance formulas of von Neumann entanglement entropy over fermionic Gaussian states with and without particle number constrains. The obtained formulas provide insights into the fluctuations of von Neumann entropy. An essential ingredient in obtaining the results is a new simplification framework of dummy summation and re-summation techniques. The new framework may also be useful in computing higher order moments of von Neumann entropy as well as other entanglement indicators over the fermionic Gaussian ensemble.

\bmhead{Acknowledgments}
The work of Lu Wei is supported in part by the U.S. National Science Foundation ($\#$2150486).




\begin{appendices}
\section{Summation representations}\label{secA}
In this appendix, we list the summation representations of the integrals in $\mathrm{I_A}$ and $\mathrm{I_B}$ as summarized in Table~\ref{tab_in}. The summation representations for case A are listed in Appendix~(\ref{subsecA1}), the ones for case B are listed in Appendix~(\ref{subsecA2}).

\subsection{Summation representations of case A}\label{subsecA1}
\mathleft
\begin{align}\label{eq:A1S}
\mathrm{A_1}=&\sum _{k=0}^{m-1}\sum _{j=2 k-2}^{2 k}\frac{2(-1)^j(2 a+4 k+1) (j+1)_2 (a+j+1)_2}{\Gamma (2k-j+1) \Gamma (j-2 k+3) (2 a+j+2 k+1)_3}\bigg((\psi_0 (j+3)\nonumber\\
&-\psi_0 (2 a+j+2 k+4)-\psi_0 (j-2 k+3)+\psi_0 (a+j+3))^2\nonumber\\
&-\psi _1(2 a+j+2 k+4)+\psi _1(a+j+3)-\psi _1(j-2 k+3)+\psi _1(j+3)\bigg)\nonumber\\
& +\sum _{k=0}^{m-1}\sum _{j=0}^{2 k-3} \frac{4(2 a+4 k+1) (j+1)_2 (a+j+1)_2}{(2k-j-2)_3 (2 a+j+2 k+1)_3}(\psi_0 (2 a+j+2 k+4)\nonumber\\
&-\psi_0 (a+j+3)+\psi_0 (2k-j-2)-\psi_0 (j+3))
\end{align}

\begin{align}\label{eq:A2S}
\mathrm{A_2}=&\sum _{k=0}^{m-1} \frac{(2 a+4 k+1) \Gamma (2 k+1) \Gamma (2 a+2 k+1)}{\Gamma (2 a+4 k+4)}\left(\rule{0cm}{0.67cm}\sum _{j=0}^{2 k}\frac{2 (j+1) (2k-j+1) }{\Gamma (j+1)\Gamma (a+j+1)}\right.\nonumber\\
 &\times\frac{\Gamma^2 (a+2 k+2)}{  \Gamma (2k-j+1) \Gamma (a+2k-j+1)}((\psi_0 (a+2 k+2)-\psi_0 (2 a+4 k+4)\nonumber\\
 &-\psi_0 (2)+\psi_0 (2k-j+2)) (\psi_0 (a+2 k+2)-\psi_0 (2 a+4 k+4)+\psi_0 (j+2)\nonumber\\
 &-\psi_0 (2))-\psi _1(2 a+4 k+4))-\sum _{j=0}^{2 k} \frac{(j+1) \Gamma (a+2 k+1) \Gamma (a+2 k+3)}{\Gamma (j) \Gamma (a+j+1) \Gamma (a-j+2 k+1)}\nonumber\\
& \times\frac{1}{\Gamma (2k-j+1)}((\psi_0 (a+2 k+1)-\psi_0 (2 a+4 k+4)+\psi_0 (2k-j+2)\nonumber\\
 &-\psi_0 (1)) (\psi_0 (a+2 k+3)-\psi_0 (2 a+4 k+4)+\psi_0 (j+2)-\psi_0 (3))\nonumber\\
 &-\psi _1(2 a+4 k+4))-\sum _{j=0}^{2 k} \frac{ (2k-j+1)\Gamma (a+2 k+1) \Gamma (a+2 k+3)}{ \Gamma (a+j+1) \Gamma (2 k-j) \Gamma (2 k-j+a+1)}\nonumber\\
&\times\frac{1}{\Gamma (j+1)}((\psi_0 (a+2 k+3)-\psi_0 (2 a+4 k+4) +\psi_0 (2k-j+2)\nonumber\\
&-\psi_0 (3)) (\psi_0 (a+2 k+1)-\psi_0 (2 a+4 k+4)+\psi_0 (j+2)-\psi_0 (1))\nonumber\\
& -\psi _1(2 a+4 k+4))+\sum _{j=1}^{2 k-1} \sum _{i=1}^{2 k-j} \frac{4 i (2 k-i+2) \Gamma (a+2 k-j+1)}{\Gamma (a+i) \Gamma (a+2 k-i+2)}\nonumber\\
&\times\frac{\Gamma (a+2 k+j+3)}{(j)_3 \Gamma (j+i+1) \Gamma (2 k-j-i+1)}(\psi _0(a+2 k+j+3)\nonumber\\
&-\psi _0(2 a+4 k+4)+\psi _0(2 k-i+3)-\psi _0(j+3))\left.\!\! \rule{0cm}{0.67cm}\right)
\end{align}

\begin{align}\label{eq:B1S}
 \mathrm{B_1}=&\sum _{k=0}^{m-1} \left(\psi_0 (a+2 k)+\psi_0 (2 a+2 k)-2 \psi_0 (2 a+4 k)-\frac{1}{2} \left(\frac{a}{a+2 k+1}\right.\right.~~~~~~~\nonumber\\
&\!\left.\left.+\frac{a}{a+2 k}+\frac{2}{2 a+4 k+1}\right)+1\right)^2
\end{align}

\begin{align}\label{eq:B2S}
 \mathrm{B_2}=&\sum _{k=1}^{m-1} \sum _{j=1}^{m-k} \frac{\Gamma (2 a+2 k-1) \Gamma (2 j+2 k-1)}{2 (2j-1)^2 j^2 (2 j+1)^2 \Gamma (2 k-1) \Gamma (2 a+2 j+2 k-1)}\nonumber\\
&\times \frac{(2 a+4 k-3) (2 a+4 j+4 k-3)}{(a+j+2 k-2)^2 (a+j+2 k-1)^2 (2 a+2 j+4 k-3)^2} \left(a^2 (2 j+1)\right.\nonumber\\
&+\!\left.a (j+1) (2 j+4 k-3)+2 j^2+j (4 k-3)+4 k^2-6 k+2\right)^2
\end{align}

\subsection{Summation representations of case B}\label{subsecA2}
\mathcenter
\begin{equation}\label{eq:fA1S}
\mathcal{A}_1=\mathcal{A}_1^{(a,b)}+\mathcal{A}_1^{(b,a)},
\end{equation}
\mathleft
\begin{align}\label{eq:fA1_abS}
\mathcal{A}_1^{(a,b)}=&-\frac{2 (m (b+m))}{a+b+2 m}\sum _{i=1}^{m-3} \frac{(b+i+1) (i)_2}{(m-i-2)_3 (a+b+i+m+1)}(\psi_0(b+i+2)\nonumber\\
&-\psi_0(a+b+i+m+2)-\psi_0(m-i-2)+\psi_0(i+2))+\frac{ a+b+m}{a+b+2 m}\nonumber\\
&\times2 (a+m)\sum _{i=1}^{m-2} \frac{(b+i+1) (i)_2}{(m-i-1) (a+b+i+m)_3}(-\psi_0(a+b+i+m+3)\nonumber\\
&+\psi_0(b+i+2)-\psi_0(m-i-1)+\psi_0(i+2))-\frac{m (b+m)}{a+b+2 m}\nonumber\\
&\sum _{i=m-3}^{m-1} \frac{(b+i+2) (-1)^{i+m} (i+1)_2}{\Gamma (m-i) \Gamma (i-m+4) (a+b+i+m+2)}\big(\psi _1(b+i+3)\nonumber\\
&+\psi _1(i+3)-\psi _1(i-m+4)-\psi _1(a+b+i+m+3)+(\psi_0(i+3)\nonumber\\
&-\psi_0(a+b+i+m+3)-\psi_0(i-m+4)+\psi_0(b+i+3))^2\big)\nonumber\\
&-\frac{(a + m)(a+b+m) (b+m) (m-1)_2}{(a+b+2 m) (a+b+2 m-1)_3}\big(\!-\psi _1(a+b+2 m+2)\nonumber\\
&+\psi _1(b+m+1)+\psi _1(m+1)-\psi _1(1)+\psi_0^2(1)+(\psi _0(b+m+1)\nonumber\\
&-\psi _0(a+b+2 m+2)+\psi _0(m+1))(\psi _0(b+m+1)+\psi _0(m+1)\nonumber\\
&-\psi _0(a+b+2 m+2)-2 \psi _0(1))\big)
\end{align}

\begin{align}\label{eq:fA2S}
\mathcal{A}_2=&-\frac{2 (a+m) (b+m) \Gamma (m+1)}{(a+b+2 m) (a+b+m+2)_m}\sum _{i=0}^{m-1}(-1)^i \frac{(i+1) (m-i)}{\Gamma (a+i+2)}\nonumber\\
&\times\sum _{j=i-1}^{i+1} (-1)^j \frac{\Gamma (a+i-j+m+1) (b-i+m+1)_j}{\Gamma (j+1) \Gamma (i-j+2) \Gamma (j-i+2) \Gamma (m-j)}\nonumber\\
&\times (\psi _1(a+b+2 m+2)+(\psi _0(a+i-j+m+1)-\psi _0(a+b+2 m+2)\nonumber\\
&-\psi _0(i-j+2)+\psi _0(i+2)) (\psi _0(a+b+2 m+2)+\psi _0(j-i+2)\nonumber\\
&-\psi _0(b-i+j+m+1)-\psi _0(m-i+1)))-2 \Gamma (m+1) \Gamma (a+m+1)\nonumber\\
&\times\frac{(a+b+m) \Gamma (b+m+1)}{(a+b+2 m) (a+b+m+1)_{m+1}}\sum _{i=0}^{m-2} \frac{1}{\Gamma (i+1) \Gamma (a+i+2)}\nonumber\\
&\times\frac{1}{\Gamma (m-i-1) \Gamma (b-i+m)}(\psi _1(a+b+2 m+2)+(\psi _0(a+m+1)\nonumber\\
&-\psi _0(a+b+2 m+2)+\psi _0(i+2)-\psi _0(1)) (\psi _0(a+b+2 m+2)\nonumber\\
&-\psi _0(b+m+1)-\psi _0(m-i)+\psi _0(1)))+\mathcal{A}_2^{(a,b)}+\mathcal{A}_2^{(b,a)},
\end{align}
\begin{align}\label{eq:fA2_abS}
\mathcal{A}_2^{(a,b)}=&\frac{2 \Gamma (m+1) \Gamma (a+b+m+1)}{\Gamma (a+b+2 m+2)}\Bigg(\frac{(a+m) (b+m) (a+b+m+1) }{a+b+2 m}\nonumber\\
&\times\sum _{i=1}^{m-2} \frac{i (m-i+1)}{\Gamma (b+i+1) \Gamma (a-i+m+2)}\sum _{j=1}^{m-i-1} \frac{\Gamma (a+j+m+2)}{(j)_3 \Gamma (i+j+1)}\nonumber\\
&\times\frac{\Gamma (b-j+m) }{\Gamma (m-i-j)}(\psi _0(m-i+2)-\psi _0(a+b+2 m+2)-\psi _0(j+3)\nonumber\\
&+\psi _0(a+j+m+2))-\frac{a+b+m}{a+b+2 m} \sum _{i=1}^{m-1} \frac{i (m-i)}{\Gamma (b+i+1)}\nonumber\\
&\times\frac{1}{ \Gamma (a-i+m+1)} (\psi _0(a+j+m+1)-\psi _0(a+b+2 m+2)\nonumber\\
&+\psi _0(m-i+1)-\psi _0(j+1))\Bigg)
\end{align}
\begin{align}\label{eq:fB1S}
\mathcal{B}_1=&\sum _{k=0}^{m-1}\left(\left(\frac{a^2-b^2}{4 (a+b+2 k)}+\frac{b^2-a^2}{4 (a+b+2 k+2)}+\frac{1}{2}\right) \psi _0(a+k+1)\right.\nonumber\\
&+\left(\frac{a^2-b^2}{4 (a+b+2 k+2)}+\frac{b^2-a^2}{4 (a+b+2 k)}+\frac{1}{2}\right) \psi _0(b+k+1)\nonumber\\
&+\psi _0(a+b+k+1)-2 \psi _0(a+b+2 k+2)-\frac{a+b}{2 (a+b+2 k)}\nonumber\\
&-\!\left.\frac{a+b}{2 (a+b+2 k+2)}+\frac{1}{a+b+2 k+1}+1\right)^2
\end{align}
\begin{align}\label{eq:fB2S}
\mathcal{B}_2=&\sum _{k=1}^{m-1} \frac{k (a+b+k)}{2 (a+k) (b+k) (a+b+2 k) (a+b+2 k-1)_3}\Bigg(2 (a+k) (b+k)\nonumber\\
&\times (\psi_0 (b+k+1)-\psi_0 (a+k+1))+\frac{(k-1) (a-b) (a+b+2 k+1)}{a+b+k}\!\Bigg)^2\nonumber\\
&+\sum _{k=1}^{m-2} \sum _{j=1}^{m-k-1}\frac{2 (a+b+2 k-1) \Gamma (j+k+1) (a+b+2 j+2 k+1)}{\Gamma (k) \Gamma (a+k) \Gamma (b+k) \Gamma (a+j+k+1) \Gamma (b+j+k+1)}\nonumber\\
&\times\frac{\Gamma (a+b+k)}{\Gamma (a+b+j+k+1)}\Bigg(\frac{\Gamma (a+k) \Gamma (b+j+k+1)}{(j)_3 (a+b+j+2 k-1)_3}\left(a^2 (j+2)+a (j+2)\right.\nonumber\\
&\times\left.\! (b+j+2 k)+j (b+2 k+1)+2 k (b+k)+j^2\right)\nonumber\\
&-\frac{(-1)^j \Gamma (b+k) \Gamma (a+j+k+1)}{(j)_3 (a+b+j+2 k-1)_3}(2 (k-1) (a+(b+1) (j+2)+2k-2)\nonumber\\
&+(b+1) (j+2) (a+b+j+1))\!\Bigg)^2
\end{align}

\section{Summation identities}\label{secB}
In this appendix, we list the finite sum identities useful in simplifying the summations in Appendix \ref{secA} that are not listed in the main part of the paper. The identities of the existing simplification framework are listed in Appendix \ref{secB1} and the identities of the new simplification framework are listed in Appendix \ref{secB2}.

\subsection{Summation identities of the existing simplification framework}\label{secB1}
We list below the identities of the existing simplification framework. Here, it is sufficient to assume that $a,b\ge0, a\neq b$ in identities (\ref{eq:B1})--(\ref{eq:B4}),~(\ref{eq:B7})--(\ref{eq:B72}),~(\ref{eq:B12c1}),~(\ref{eq:B12c2})--(\ref{eq:B12c3}), $a>m$ in~(\ref{eq:B9}), and $a,b\geq 1, n\geq m$ in~(\ref{eq:B11ic}),~(\ref{eq:B20})--(\ref{eq:B202}).
\mathleft
\begin{align}\label{eq:B1}
\sum_{i=1}^{m}\psi_{0}(i+a)=(m+a)\psi_{0}(m+a+1)-a\psi_{0}(a+1)-m
\end{align}

\begin{align}\label{eq:B3}
\sum_{i=1}^m i \psi_0(i+a) =&-\frac{1}{2} (a-m-1) (a+m) \psi _0(a+m+1)\nonumber\\
&+\frac{1}{2} (a-1) a \psi _0(a+1)-\frac{1}{4} m (-2 a+m+3)
\end{align}

\begin{align}\label{eq:B31}
\sum_{i=1}^m i^2 \psi_0(i+a) =&\frac{1}{6}\left(2a^3-3a^2+a+2m^3+3m^2+m\right)\psi_{0}(a+m+1)\nonumber\\
&-\frac{1}{6}a\left(2a^2-3a+1\right)\psi_{0}(a+1)\nonumber\\
&-\frac{1}{36}m\left(12a^2-6am-24a+4m^2+15m+17\right)
\end{align}

\begin{align}\label{eq:B32}
\sum_{i=1}^m i^3 \psi_0(i+a) =&-\frac{1}{4} \left(a^4-2a^3+a^2-m^4-2m^3-m^2\right)\psi_{0}(a+m+1)\nonumber\\
&+\frac{1}{4}(a-1)^{2}a^{2}\psi_{0}(a+1)-\frac{1}{48}m\left(-12a^3+6a^{2}m+30a^2\right.\nonumber\\
&-\!\left.4am^2-18am-26a+3m^3+14m^2+21m+10\right)
\end{align}

\begin{align}\label{eq:B2}
\sum_{i=1}^{m}\psi_{1}(i+a)=&(m+a)\psi_{1}(m+a+1)-a\psi_{1}(a+1)\nonumber\\
&+\psi_{0}(m+a+1)-\psi_{0}(a+1)
\end{align}

\begin{align}\label{eq:B30}
\sum_{i=1}^{m} \psi_{0}^2(i+a)
=&(a+m) \psi _0^2(a+m){}-(2 a+2 m-1) \psi _0(a+m)-a \psi _0^2(a){}\nonumber\\
&+(2 a-1) \psi _0(a)+2 m.
\end{align}

\begin{align}\label{eq:B4}
\sum_{i=1}^{m}\frac{\psi_{0}(i+a)}{i+a}=\frac{1}{2}\left(\psi_{1}(m+a+1)-\psi_{1}(a+1)+\psi_{0}^{2}(m+a+1)-\psi_{0}^{2}(a+1)\right)
\end{align}

\begin{align}\label{eq:B5}
\sum_{i=1}^{m}\frac{\psi_{0}(m+1-i)}{i}=\psi_{0}^{2}(m+1)-\psi_{0}(1)\psi_{0}(m+1)+\psi_{1}(m+1) -\psi_{1}(1)
\end{align}

\begin{align}\label{eq:B6}
\sum _{i=1}^m \frac{\psi_0 (m+1+i)}{i}=\psi _0^2(m+1)-\psi _0(1) \psi _0(m+1)-\frac{1}{2} \psi _1(m+1)+\frac{1}{2}\psi _1(1)
\end{align}

\mathleft
\begin{align}\label{eq:B7}
&\sum_{i=1}^{m}\psi_{0}(i+a)\psi_{0}(i+b)\nonumber\\
&=(b-a) \sum _{i=1}^{m-1}\frac{\psi_0(a+i)}{b+i}-a \psi_0(a+1)\psi_0(b+1)+(m+a) \psi_0(m+a)  \nonumber\\
&~~~\!~\times\psi_0(m+b)+a \psi_0(a+1)-(m+a-1) \psi_0(m+a)-(m+b) \psi_0(m+b)\nonumber\\
&~~~\!~+(b+1) \psi_0(b+1)+2 m-2
\end{align}

\begin{align}\label{eq:B71}
&\sum_{i=1}^{m}i\psi_{0}(i+a)\psi_{0}(i+b)\nonumber\\
&=\frac{1}{2} (b-a+1) (a-b) \sum _{i=1}^{m-1} \frac{\psi _0(a+i)}{b+i}-\frac{1}{4} a (a+2 b-3) \psi _0(a+1)-\frac{1}{4} (b+1)\nonumber\\
&~~~\!~\times (2 a+b-2) \psi _0(b+1)+\frac{1}{2} (a-1) a \psi _0(a+1) \psi _0(b+1)+\frac{1}{4} (a+m-1)\nonumber\\
&~~~\!~ \times(a+2 b-m-2) \psi _0(a+m)+\frac{1}{4} (b+m) (2 a+b-m-1) \psi _0(b+m)-\frac{1}{2}\nonumber\\
&~~~\!~ \times\left(a^2-a-m (m+1)\right) \psi _0(a+m) \psi (b+m)-\frac{1}{4} (m-1) (3 a+3 b-m-4)
\end{align}

\begin{align}\label{eq:B72}
&\sum_{i=1}^{m}i^2\psi_{0}(i+a)\psi_{0}(i+b)\nonumber\\
&=\frac{1}{6} (a-b) \left(3 a^2+2 a b-4 a-2 b^2-b+1\right) \sum _{i=1}^{m-1} \frac{\psi _0(a+i)}{b+i}-\frac{1}{6} \left(-a^2 (5 b+2)\right.\nonumber\\
&~~~\!~+\!\left.3 a^3+a (5 b-1)-m \left(2 m^2+3 m+1\right)\right) \psi _0(a+m) \psi _0(b+m)+\frac{1}{6} (a-1) a\nonumber\\
&~~~\!~\times (3 a-5 b+1) \psi _0(a+1) \psi_0 (b+1)-\Bigg(\!-\frac{1}{12} (2 b-1) m^2+\frac{1}{36} m \left(24 a^2\right.\nonumber\\
&~~~\!~-\left.\!24 a b-24 a+12 b^2+12 b-1\right)+\frac{1}{36} (a-1) \left(28 a^2-18 a b-5 a+6 b\right.\nonumber\\
&~~~\!~+\!\left.12 b^2+6\right)+\frac{(a-1) (a-b)}{3 (b+m-1)}+\frac{m^3}{9}\Bigg)\psi _0(a+m)-\frac{1}{36} \left(4 m^312 a^2 b\right.\nonumber\\
&~~~\!~-3 (2 a-1) m^2+\left(12 a^2-12 a-1\right) m-30 a^2+6 a b^2-12 a b+30 a+4 b^3\nonumber\\
&~~~\!~-\!\left.3 b^2-b\right) \psi _0(b+m)+\frac{1}{36} a \left(28 a^2-9 a (2 b+1)+12 b^2-13\right) \psi _0(a+1)\nonumber\\
&~~~\!~+\frac{1}{36} \left(6 a^2 (2 b-3)+6 a \left(b^2-2 b+2\right)+4 b^3-3 b^2-b+6\right) \psi _0(b+1)\nonumber\\
&~~~\!~+\frac{2 m^3}{27}-\frac{5}{36} m^2 (a+b-1)+\frac{1}{36} \left(-40 a^2+12 a b+51 a-16 b^2+3 b-16\right)\nonumber\\
&~~~\!~+\frac{1}{108} m \left(120 a^2-36 a b-138 a+48 b^2+6 b+25\right)+\frac{a-1}{3 (b+m-1)}\nonumber\\
&~~~\!~-\frac{a-1}{3 (a+m-1)}
\end{align}

\begin{align}\label{eq:B9}
\sum_{i=1}^{m}\frac{\psi_{0}(a+1-i)}{i}=&-\sum_{i=1}^{m}\frac{\psi_{0}(i+a-m)}{i}+\frac{1}{2}\left(\psi_{1}(a+1)-\psi_{1}(a-m)\right)\nonumber\\
&+(\psi_{0}(a-m)+\psi_0(a+1))(\psi_{0}(m+1)-\psi_{0}(1))\nonumber\\
&+\frac{1}{2}(\psi_{0}(a-m)-\psi_{0}(a+1))^2
\end{align}

\begin{align}\label{eq:B12c1}
&\sum_{i=1}^{m}\left(\frac{\psi_{0}(i+b)}{i+a}+\frac{\psi_{0}(i+a)}{i+b}\right)\nonumber\\
&=\psi_{0}(m+a+1)\psi_{0}(m+b+1)-\psi_{0}(a+1)\psi_{0}(b+1)+\frac{1}{a-b}(\psi_{0}(m+a+1)\nonumber\\
&~~~-\psi_{0}(m+b+1)-\psi_{0}(a+1)+\psi_{0}(b+1))
\end{align}

\begin{align}\label{eq:B11ic}
\sum _{i=1}^m \frac{\psi_0(a+b+i)}{i}=&\sum _{i=1}^m \frac{\psi_0(b+i)}{i}-\sum _{i=1}^a \frac{\psi_0(b+i+m)}{b+i-1}+\frac{1}{2} \big((\psi_0(a+b)\nonumber\\
&-\psi_0(b))\times (\psi_0(a+b)+\psi_0(b)+2 (\psi_0(m+1)-\psi_0(1)))\nonumber\\
&-\psi _1(a+b)+\psi _1(b)\big)
\end{align}

\begin{align}\label{eq:B12c2}
&\sum _{i=1}^m \left(\frac{\psi _0(a+b+i+m)}{b+i}+\frac{\psi _0(a+i)}{b+i}+\frac{\psi _0(a+b+2i)}{a+i}-\frac{\psi _0(a+b+i)}{a+i}\right.\nonumber\\
&-\!\left.\frac{\psi _0(a+b+2 i)}{b+i}\right)\nonumber\\
&=\frac{\psi_0\left(\frac{a}{2}+\frac{b}{2}+m\right)}{b-a}-\frac{(a+b+m) \psi_0(a+b+m)}{b (a+m)}+\frac{\psi_0(a+b+2 m)}{a+m}\nonumber\\
&~~~\!~+\psi _0(a+m) \left(\psi _0(b+m+1)-\psi _0(b)-\frac{1}{b-a}\right)+\frac{a \psi _0(a)}{b (b-a)}+\frac{\psi _0(a+b)}{b}\nonumber\\
&~~~~\!-\frac{\psi _0\left(\frac{a}{2}+\frac{b}{2}\right)}{b-a}
\end{align}

\begin{align}\label{eq:B12c3}
&\sum _{i=1}^m \left(\frac{\psi _0(a+b+i+m)}{a+i}+\frac{\psi _0(a+b+i+m)}{b+i}-\frac{\psi _0(a+b+i)}{a+i}\right.\nonumber\\
&-\!\left.\frac{\psi _0(a+b+i)}{b+i}\right)\nonumber\\
&=\frac{(a+b+2 m) \psi _0(a+b+2 m)}{(a+m) (b+m)}-\left(\frac{1}{a+m}+\frac{1}{a}+\frac{1}{b+m}+\frac{1}{b}\right) \psi _0(a+b+m)\nonumber\\
&~~~~+\left(\frac{1}{a}+\frac{1}{b}\right) \psi _0(a+b)+\left(\psi _0(a+m)-\psi _0(a)\right) \left(\psi _0(b+m)-\psi _0(b)\right)
\end{align}

\begin{align}\label{eq:B20}
\sum _{i=1}^{m} \frac{(n-i)!}{ (m-i)!}=\frac{n!}{(m-1)! (n-m+1)}
\end{align}

\begin{align}\label{eq:B21}
\sum _{i=1}^{m} \frac{(n-i)!}{ (m-i)!i}=\frac{n! }{m!}(\psi _{0}(n+1)-\psi _{0}(n-m+1))
\end{align}

\begin{align}\label{eq:B22}
\sum _{i=1}^{m} \frac{(n-i)!}{ (m-i)!i^2}=&\frac{n! }{m!} \Bigg(\sum _{i=1}^m\frac{\psi_0(i+n-m)}{i}+\frac{1}{2}\big (\psi _1(n-m+1)-\psi _1(n+1)\nonumber\\
&+\psi _0(n-m+1){}^2-\psi _0(n+1){}^2\big)+\psi _0(n-m)\nonumber\\
&\times\left(-\psi _0(n-m+1)+\psi _0(n+1)-\psi _0(m+1)+\psi _0(1)\right)\!\Bigg)
\end{align}

\begin{align}\label{eq:B201}
\sum _{i=1}^m \frac{(n-i)!}{(m+a-i)!}=\frac{1}{n-m-a+1}\left(\frac{n!}{(a+m-1)!}-\frac{(n-m)!}{(a-1)!}\right)
\end{align}

\begin{align}\label{eq:B202}
&\sum _{i=1}^m \frac{(n-i)! }{(m+a-i)!}\psi _0(m+a-i+1)\nonumber\\
&=\frac{1}{1-a-m+n}\left(\frac{n! }{(a+m-1)!}\left(\psi _0(a+m)-\frac{1}{1-a-m+n}\right)\right.\nonumber\\
&~~~\!~-\!\left.\frac{(n-m)! }{(a-1)!}\left(\psi _0(a)-\frac{1}{1-a-m+n}\right)\right)
\end{align}


The derivation of the identities~(\ref{eq:B1})--(\ref{eq:B12c1}) and~(\ref{eq:B20})--(\ref{eq:B22}) can be found in~\cite{Wei17,Wei20,Wei20BH,HWC21,HW22, Milgram}. In the same manner as obtaining~(\ref{eq:B30}) in~Section~\ref{subsubsec 3.2.1}, the three identities~(\ref{eq:B11ic})--(\ref{eq:B12c3}) are derived by first rewriting respectively the summations
\begin{align}
&\sum _{i=1}^m \frac{\psi_0(a+b+i)}{i}=\sum _{i=1}^m \frac{\psi_0(b+i)}{i}+\sum _{j=1}^a\sum _{i=1}^m \frac{1}{i} \frac{1}{b+i+j-1}\\
&\sum _{i=1}^m \frac{\psi _0(a+b+2 i)}{a+i}=\sum _{i=1}^m \frac{\psi _0(a+b+i)}{a+i}+\sum _{j=1}^m \sum _{i=j}^m \frac{1}{(a+i) (a+b+i+j-1)}\\
&\sum _{i=1}^m \frac{\psi _0(a+b+i+m)}{b+i}=\sum _{i=1}^m \frac{\psi _0(a+b+i)}{b+i}+\sum _{j=1}^m \sum _{i=1}^m \frac{1}{(b+i) (a+b+i+j-1)}.
\end{align}
For the identity~(\ref{eq:B201}), it is obtained by first considering
\begin{align}
\sum _{i=1}^m \frac{(n-i)!}{(m+a-i)!}=\sum _{i=1}^{a+m} \frac{(n-i)!}{(m+a-i)!}-\sum _{i=1}^a \frac{(n-m-i)!}{(a-i)!}
\end{align}
before applying~(\ref{eq:B20}). Note that the identity~(\ref{eq:B201}) is analytically continued to any complex number $a$, where, by taking an derivative of $a$, the identity~(\ref{eq:B202}) is established.

\subsection{Additional summation identities of the new simplification framework}\label{secB2}
Here, we list the additional summation identities of the new simplification framework useful in the simplification process in Section~\ref{subsec3.3}. These identities are obtained by taking appropriate derivatives of the formulas in lemmas~\ref{lemma1}--\ref{lemma4}. It is sufficient to assume that $a,b,c\geq 0$ in~(\ref{eq:Bn0})--(\ref{eq:Bn8}).
\begin{align} \label{eq:Bn0}
&\sum_{i=1}^m\frac{\psi_0^2(a+i)-\psi_1(a+i)}{\Gamma(i)\Gamma (a+i) \Gamma (m-i+1) \Gamma (b-i+m+1)}\nonumber \\
&=\frac{\Gamma (a+b+2 m-1)}{\Gamma (m) \Gamma (a+m) \Gamma (b+m) \Gamma (a+b+m)}\big(\psi _1(a+b+2 m-1)-\psi _1(a+b+m)\nonumber\\
&~~\!~~-\psi _1(a+m)+\left(\psi _0(a+b+m)-\psi _0(a+b+2 m-1)+\psi _0(a+m)\right){}^2\big)
\end{align}

\begin{align} \label{eq:Bn1}
&\sum _{i=1}^m \frac{\psi _0(i)}{\Gamma (i) \Gamma (a+i) \Gamma (m-i+1) \Gamma (b-i+m+1)}\nonumber \\
&=-\frac{1}{\Gamma (m) \Gamma (b+m) \Gamma (a+b+m)}\sum _{i=1}^{m-1} \frac{\Gamma (a+b-i+2 m-1)}{\Gamma (a-i+m) i }\nonumber\\
&~~~\!~+\frac{\psi _0(m) \Gamma (a+b+2 m-1)}{\Gamma (m) \Gamma (a+m) \Gamma (b+m) \Gamma (a+b+m)}
\end{align}

\begin{align} \label{eq:Bn2}
&\sum _{i=1}^m \frac{\psi _0(i) \psi _0(b-i+m+1)}{\Gamma (i) \Gamma (a+i) \Gamma (m-i+1) \Gamma (b-i+m+1)}\nonumber\\
&=\frac{1}{\Gamma (m) \Gamma (b+m) \Gamma (a+b+m)}\sum _{i=1}^{m-1} \frac{\Gamma (a+b-i+2 m-1)}{\Gamma (a-i+m)i }(-\psi _0(a+b+m)\nonumber\\
&~~~\!~+\psi _0(a+b-i+2 m-1)-\psi _0(b+m))+\frac{\Gamma (a+b+2 m-1)\psi _0(m)}{\Gamma (m) \Gamma (a+m) \Gamma (b+m) } \nonumber\\
&~~\!~~\times \frac{1}{\Gamma (a+b+m)}(\psi _0(a+b+m)-\psi _0(a+b+2 m-1)+\psi _0(b+m))
\end{align}

\begin{align}\label{eq:Bn3}
&\sum _{i=1}^m \frac{\psi _0(i) \psi _0(a+i)}{\Gamma (i) \Gamma (a+i) \Gamma (m-i+1) \Gamma (b-i+m+1)}\nonumber\\
&=\frac{1}{\Gamma (m) \Gamma (b+m)\Gamma (a+b+m) }\sum _{i=1}^{m-1} \frac{\Gamma (a+b-i+2 m-1)}{\Gamma (a-i+m)i }(-\psi _0(a+b+m)\nonumber\\
&~~\!~~+\psi _0(a+b-i+2 m-1)-\psi _0(a-i+m))+\frac{\psi _0(m) \Gamma (a+b+2 m-1)}{\Gamma (m)  \Gamma (b+m) \Gamma (a+b+m)}\nonumber\\
&~~\!~~\times\frac{\psi _0(a+b+m)-\psi _0(a+b+2 m-1)+\psi _0(a+m)}{\Gamma (a+m)}
\end{align}

\begin{align}\label{eq:Bn4}
&\sum _{i=1}^m \frac{\psi _0^2(i)-\psi _1(i)}{\Gamma (i) \Gamma (a+i) \Gamma (m-i+1) \Gamma (b-i+m+1)}\nonumber\\
&=\frac{2 }{\Gamma (m) \Gamma (b+m) \Gamma (a+b+m)}\sum _{i=1}^{m-1} \frac{\Gamma (a+b-i+2 m-1)}{\Gamma (a-i+m)i }(\psi _0(i)-\psi _0(m)\nonumber\\
&~~~\!~-\psi _0(1))+\frac{ \Gamma (a+b+2 m-1)}{\Gamma (m) \Gamma (a+m) \Gamma (b+m) \Gamma (a+b+m)}\left(\psi _0^2(m)-\psi _1(m)\right)
\end{align}

\begin{align}\label{eq:Bn5}
&\sum _{i=1}^m \frac{\psi _0(i) \psi _0(m-i+1)}{\Gamma (i) \Gamma (a+i) \Gamma (m-i+1) \Gamma (b-i+m+1)}\nonumber\\
&=\frac{1}{\Gamma (m) \Gamma (a+b+m)}\left(-\frac{1}{\Gamma (a+m)}\sum _{i=1}^{m-1} \frac{\Gamma (a+b-i+2 m-1)}{ \Gamma (b-i+m)i^2}-\frac{1}{\Gamma (b+m)}\right.\nonumber\\
&~~\!~~\times\sum _{i=1}^{m-1} \frac{\Gamma (a+b-i+2 m-1)}{ \Gamma (a-i+m)i^2}-\frac{\psi _0(m)}{\Gamma (a+m)}\sum _{i=1}^{m-1} \frac{\Gamma (a+b-i+2 m-1)}{\Gamma (b-i+m)i }\nonumber\\
&~~\!~~-\left.\!\frac{\psi _0(m) }{\Gamma (b+m)}\sum _{i=1}^{m-1} \frac{\Gamma (a+b-i+2 m-1)}{\Gamma (a-i+m)i }\right)+\frac{\Gamma (a+b+2 m-1)}{\Gamma (m) \Gamma (a+m) \Gamma (b+m) }\nonumber\\
&~~\!~~\times\frac{1}{\Gamma (a+b+m)}\left(\psi _0^2(m)-\psi _1(m)+\psi _1(1)\right)
\end{align}

\begin{align}\label{eq:Bn6}
&\sum _{i=1}^m \frac{\Gamma (c-i+m) \Gamma (a+b+i+m)}{\Gamma (i) \Gamma (m-i+1) }\psi_0(i)\nonumber\\
&=-\frac{\Gamma (b+1) \Gamma (c) \Gamma (c+m)}{\Gamma (m) \Gamma (b+c+1)}\sum _{i=1}^{m-1} \frac{\Gamma (b+c-i+m)}{\Gamma (c-i+m)i }+\frac{\Gamma (b+1) \Gamma (c) \Gamma (b+c+m)}{\Gamma (m) \Gamma (b+c+1)}\nonumber\\
&~~~\!~\times\left(\psi _0(b+c+m)-\psi _0(b+c+1)+\psi _0(m)\right)
\end{align}

\begin{align}\label{eq:Bn7}
&\sum _{i=1}^m \frac{\Gamma (c-i+m) \Gamma (a+b+i+m)}{\Gamma (a+i) \Gamma (m-i+1)}\left(\psi_0^2 (i)-\psi _1(i)\right)\nonumber\\
&= \frac{2\Gamma (b+1) \Gamma (c) \Gamma (c+m)}{\Gamma (m) \Gamma (b+c+1)}\left(-\sum _{i=1}^{m-1} \frac{\Gamma (b+c-i+m) \psi _0(b+c-i+m)}{\Gamma (c-i+m)i }\right.\nonumber\\
&~~~\!~+\sum _{i=1}^{m-1} \frac{ \Gamma (b+c-i+m)\psi _0(i)}{\Gamma (c-i+m)i }-\left(-\psi _0(b+c+1)+\psi _0(m)+\psi _0(1)\right) \nonumber\\
&~~~\!~\times\!\left.\sum _{i=1}^{m-1} \frac{\Gamma (b+c-i+m)}{\Gamma (c-i+m)i }\right)+\frac{\Gamma (b+1) \Gamma (c) \Gamma (b+c+m)}{\Gamma (m) \Gamma (b+c+1)}\big(\psi _1(b+c+m)\nonumber\\
&~~~\!~-\psi _1(b+c+1)-\psi _1(m)+\left(\psi _0(b+c+m)-\psi _0(b+c+1)+\psi _0(m)\right){}^2\big)
\end{align}

\begin{align}\label{eq:Bn8}
&\sum _{i=1}^m \frac{\Gamma (c-i+m) \Gamma (a+b+i+m)}{\Gamma (a+i) \Gamma (m-i+1)i }\psi _0(a+b+i+m)\nonumber\\
&=\Gamma (c+m) \Gamma (a+b+m)\Bigg(\!\frac{\psi _0(a+b+m)}{\Gamma (a) \Gamma (m+1)}\left(\psi _0(c+m)-\psi _0(c)\right) +\Gamma (b+m+1) \nonumber\\
&~~~\!~\times \Gamma (c) \sum _{i=1}^m \frac{\psi _0(a+b+m)-\psi _0(b-i+m+1)+\psi _0(b+m+1)}{\Gamma (a+i) \Gamma (c+i) \Gamma (m-i+1) \Gamma (b-i+m+1)i }\Bigg)
\end{align}




\end{appendices}


\bibliography{sn-bibliography}



\end{document}